\newcolumntype{R}[1]{>{\raggedleft\let\newline\\\arraybackslash\hspace{0pt}}p{#1}}
\setlist[description]{labelindent=!,font=\normalfont\itshape,itemsep=0ex,partopsep=0ex}
\def\AlgoInput{\State \emph{Input}~---~}
\def\AlgoOutput{\State \emph{Output}~---~}
\newtheorem{proposition}{Proposition}
\newtheorem{lemma}[proposition]{Lemma}
\newtheorem{theorem}[proposition]{Theorem}
\newtheorem{corollary}[proposition]{Corollary}
\newtheorem{conjecture}[proposition]{Conjecture}
\theoremstyle{remark}
\newtheorem{example}[proposition]{Example}
\newtheorem{remark}[proposition]{Remark}
\newcommand{\mop}[1]{\operatorname{#1}}
\newcommand{\ud}{\mathrm{d}}
\newcommand{\st}{\ \middle|\ }
\newcommand{\eqdef}{\smash{\ \stackrel{\text{def}}{=}\ }}
\newcommand{\Af}{\mathbb{A}}
\newcommand{\PP}{\mathbb{P}}
\newcommand{\N}{\mathbb{N}}
\newcommand{\Z}{\mathbb{Z}}
\newcommand{\Q}{\mathbb{Q}}
\newcommand{\F}{\mathbb{F}}
\newcommand{\K}{\mathbb{K}}
\newcommand{\cL}{\mathcal{L}}
\newcommand{\cO}{\mathcal{O}}
\newcommand{\cK}{\mathcal{K}}
\newcommand{\xx}{\mathbf{x}}
\newcommand{\thom}{\textrm{hom}}
\newcommand{\tGD}{\mathrm{GD}}
\newcommand{\dop}{{\K\langle \delta \rangle}}
\newcommand{\Syz}{\mathcal{S}}
\newcommand{\tSyz}{\mathcal{S}'}
\newcommand{\red}{\mop{red}}
\newcommand{\rem}{\mop{rem}}
\newcommand{\redgd}{\red^\tGD}
\newcommand{\Wa}{W}
\def\geq{\geqslant}
\def\leq{\leqslant}
\newcommand{\Df}{D_{\!\smash{f}}}
\newcommand{\Cf}{r_{\!\smash{f}}}
\newcommand{\Jf}{{\mop{Jac}f}}
\newcommand{\udf}{\mathrm{d}\kern -.1em f}
\newcommand{\udfw}{\udf \wedge}
\newcommand{\gro}{Gröbner basis}
\newcommand{\gros}{Gröbner bases}
\author{Pierre Lairez}
\title{Computing periods of rational integrals}
\date{January 31, 2015}
\address{Inria Saclay, équipe Specfun, France}
\curraddr{Pierre Lairez --- Fäk. II, Sekr. 3-2 --- Technische Universität zu Berlin --- Straße des 17. Juni 136 --- 10623 Berlin --- Deutschland}
\email{pierre@lairez.fr}
\urladdr{pierre.lairez.fr}
\begin{document}

\begin{abstract}
  A period of a rational integral is the result of integrating, with respect to
  one or several variables, a rational function over a closed path.  This work focuses
  particularly on periods depending on a parameter: in this case the period
  under consideration satisfies a linear differential equation, the
  Picard-Fuchs equation.  I give a reduction algorithm that extends  the
  Griffiths-Dwork reduction and apply it to the computation of Picard-Fuchs
  equations.  The resulting algorithm is elementary and has been successfully
  applied to problems that were previously out of reach.
\end{abstract}

\subjclass[2010]{
Primary 68W30;  %
secondary
14K20, %
14F40, %
33F10%
}

\keywords{
  Integration, periods, Picard-Fuchs equation,
  Griffiths-Dwork reduction, algorithms}

\maketitle

\section*{Introduction}

This work studies periods of rational integrals, that is, the result of the
integration, with respect to one or several variables, of a rational function
over a closed path. I focus especially on the case where the period depends on
a parameter.  The fact that periods depending on a parameter of rational or
algebraic integrals satisfy linear differential equations with polynomial
coefficients has emerged from the work of Euler~\cite[\S7]{Eul33} and his
computation of a differential equation\footnote{$(t-t^3)y''+(1-t^2)y'+ty=0$}
for the perimeter of an ellipse as a function of eccentricity.  Since then,
these differential equations, known as \emph{Picard-Fuchs equations}, have
proven to be useful in numerous domains such as combinatorics~\cite{BouMis10},
number theory~\cite{Beu83} or physics~\cite{MorWal09}. They play also a key
role in mirror symmetry~\cite{Mor92}. Research in computer algebra has devoted
great efforts to provide algorithms for computing integrals and, in particular,
Picard-Fuchs equations.  Nevertheless the practical efficiency of current
methods is not satisfactory in many cases.  One reason might be the high level
of generality of most algorithms, which apply to the integration of general
holonomic functions.  Rational functions are certainly very specific among
holonomic functions, but the numerous applications of Picard-Fuchs equations as
well as the fundamental nature of rational functions make it worth developing
specific methods for them.

\subsection*{The problem}

Let~$R$ be a rational function in the variables~$x_1,\dotsc,x_n$,
denoted~$\xx$, and a parameter~$t$, with coefficients in~$\mathbb C$.
Let~$\gamma$ be a~$n$-cycle in~$\mathbb C^n$, e.g. an embedding of the
sphere~$\mathbb S^n$ in~$\mathbb C^n$, on which~$R$ is continuous when~$t$
ranges over some connected open set~$U$ of~$\mathbb C$.  We can form the following
integral, depending on~$t\in U$,
\begin{equation}\label{eqn:introint}
  P(t) \eqdef \oint_\gamma R(t, \xx) \ud \xx,
\end{equation}
where~$\ud\xx$ stands for~$\ud x_1\dotsm\ud x_n$.
\begin{example}\label{ex:apery}
  For~$t \in \mathbb C$, with~$|t| < 17 -12 \sqrt{2}$  
  \[ \displaystyle \sum_{n = 0}^\infty \sum_{k=0}^n \binom{n}{k}^2 \binom{n+k}{k}^2 t^n = \frac{1}{(2\pi i)^3} \oint_\gamma \frac{\ud x\ud y\ud z}{1-(1-xy)z-txzy(1-x)(1-y)(1-z)}, \]
  where the cycle of integration~$\gamma$ is~$\left\{ (x,y,z)\in \mathbb C^3 \st |x|=|y|=|z|=1/2  \right\}$. 
  This is the generating function of Apéry numbers~\cite{Beu83}.
\end{example}
These integrals, for different cycles~$\gamma$, are called the \emph{periods}
of the integral~$\oint\! R$. It is well-known that~$P(t)$ satisfies a linear
differential equation with polynomial coefficients.  It is a consequence of the
finiteness of the algebraic de~Rham cohomology of~$\Af^n
\setminus V(f)$ with~$\mathbb C(t)$ as base field~\cite{Gro66,Mon72}.
Let~$\cL_{R,\gamma}$ denote the differential operator in~$t$ and~$\partial_t$ which
corresponds to the minimal-order equation of~$P(t)$.  That is to say~$\mathcal
L_{R,\gamma}$ is the non zero operator~$\sum_{k=0}^r a_k(t)\partial_t^k$ with
coprime polynomial coefficients and minimal~$r$, such that
\[ \mathcal L_{R,\gamma}(P) \eqdef \sum_{k=0}^r a_k(t) P^{(k)}(t) = 0. \]
Every linear differential equation for~$P(t)$ translates into an operator
which is a left multiple of~$\cL_{R,\gamma}$.

It often happens that the description of the cycle~$\gamma$ is analytic or
topological, sometimes not even explicit, and, to say the least, unsuitable to
a formal algorithmic treatment.  In fact there is no harm in simply
discarding~$\gamma$: there exists a differential equation satisfied 
by all the periods of~$\oint\! R$.  In other words, there exists an
operator in~$t$ and~$\partial_t$ which is a left multiple of
all~$\cL_{R,\gamma}$.  Let~$\cL_R$ denote the least common left multiple of
the~$\cL_{R,\gamma}$.  The classical result which allows the algorithmic computation
of~$\mathcal L_R$ is that it is the minimal operator~$\cL$ such that
\begin{equation}
  \mathcal L(R) = \sum_{i=1}^n \partial_i(B_i)
  \label{eqn:alg-pf}
\end{equation}
for some rational functions~$B_i$ in~$\mathbb C(t, \xx)$ whose denominators
divide a power of the denominator of~$R$, and where~$\partial_i$
denotes~$\partial/\partial x_i$.  This article presents an algorithm
that compute the operator~$\mathcal L_R$, or at least a left multiple of it.

\begin{example}
  In the case of Example~\ref{ex:apery}, the operators~$\cL_R$ and~$\cL_{R,\gamma}$ both equal
  \[ \cL_R =t^2(t^2-34t+1)\partial_t^3 +3t(2t^2-51t+1)\partial_t^2 +(7t^2-112t+1)\partial_t +  (t-5). \] 
\end{example}

Note that integrals of algebraic functions
are easily translated into integrals of rational functions with one variable
more: if~$W(t,\xx)$ is a function such that~$P(t, \xx, W) = 0$ for
some polynomial~$P$ in~$\mathbb C[t, \xx, y]$, elementary residue calculus
shows that
\[ W(t,\xx) = \frac{1}{2\pi i}\oint_\tau \frac{y \partial_yP}{P} \ud y\]
over some adequate contour~$\tau$ and where~$\partial_y$ denotes the derivation~$\partial /\partial y$, so that
\[ \oint_\gamma W(t, \xx) \ud\xx = \frac{1}{2\pi i}\oint_{\gamma\times \tau} \frac{y \partial_yP }{P} \ud\xx\ud y. \]

\subsection*{Contributions}
Following the principle of the reduction of the pole order, I define a family
of finer and finer reductions~$[\,]_r$, for~$r\geq 1$, that given a rational
function~$R$ in several variables produces another rational function~$[R]_r$
that differs from~$R$ only by a sum of partial derivatives of other rational
functions (Section~\ref{sec:horel}). The first reduction~$[\,]_1$ is the
Griffiths-Dwork reduction (Section~\ref{sec:redgd}).

When applied to the case of periods depending on a parameter, these reductions
can solve Equation~\eqref{eqn:alg-pf}, and hence compute Picard-Fuchs equations
of rational integrals (Section~\ref{sec:ct}).  A major difficulty is to fix
an~$r$ such that the~$r$\textsuperscript{th} reduction map~$[\,]_r$ will be fine enough to ensure the
termination of the algorithm. It is solved by applying a theorem of Dimca
(Section~\ref{sec:dimcathm}).

The new algorithm has been implemented and shows excellent performance
(Section~\ref{sec:impl}). For example, I applied it to compute~137 periods
coming from mathematical physics that were previously out of
reach~\cite{BatKre10} (Section~\ref{sec:cy}).

\subsection*{Reduction of pole order}
The principle of the method originates from Hermite reduction~\cite{Her72}.
It is a procedure for computing a normal form of a univariate function modulo
derivatives.  Hermite introduced his method as a way to compute the algebraic
part of the primitive of a univariate rational function without computing the
roots of its denominator, as opposed to the classical partial fraction
decomposition method.  Let~$[R]$ denote the reduction of a fraction~$R$. It is
defined as follows.  Let~$a/f^q$ be a rational function in~$\mathbb C(x)$,
with~$f$ a square-free polynomial and~$q$ a positive integer. Every fraction
can be written in this way since~$a$ and~$f$ are not assumed to be relatively
prime.  If~$q>1$, then we first write~$a$ as~$u f + v f'$, using the assumption
that~$f$ is square-free, and we observe that
\[ \frac{a}{f^q} = \frac{u + \frac{1}{q-1}v'}{f^{q-1}} - \left( \tfrac{1}{q-1}\frac{v}{f^{q-1}} \right)'. \]
This leads to the following recursive definition of~$[a/f^q]$:
\[ \left[ \frac{a}{f^q} \right] = \left[ \frac{u + \frac{1}{q-1}v'}{f^{q-1}} \right]. \]
When~$q = 1$, the reduction~$[a/f]$ is defined to be~$r/f$, where~$r$ is the
remainder in the Euclidean division of~$a$ by~$f$.  Hermite reduction enjoys
the following properties: it is linear; the fractions~$[R]$ and~$R$ differ only
by a derivative of a rational function; and~$[R]$ is zero if and only if~$R$ is the
derivative of a rational function.

The principle of Hermite reduction gives an efficient way to compute the
Picard-Fuchs equation of univariate integrals~\cite{BosCheChy10}.  Let~$R$ be a
rational function in~$\mathbb C(t, x)$.  Hermite reduction can be performed
without modification over the field with one parameter~$\mathbb C(t)$.  To
compute~$\cL_R$, it is sufficient to compute the reductions~$[\partial_t^k R]$,
for~$k \geq 0$, until finding a linear dependency relation over~$\mathbb
C(t)$
\[ \sum_{k=0}^r a_k(t) [\partial_t^k R] = 0. \]
Then the properties of the Hermite reduction assure that~$\cL_R$
is~$\sum_{k=0}^r a_k(t) \partial_t^k$.
The computations of all the reductions~$[\partial_t^k R]$ is improved
significantly when noting the inductive formula~$\left[ \partial_t^{k+1} R \right] = \left[ \partial_t [\partial_t^k R] \right]$.

With several variables, the construction of a normal form modulo derivatives is
considerably harder than with a single variable.  Nonetheless, as soon as we
obtain such a normal form, it is possible to compute Picard-Fuchs equations as
above, by finding linear relations between the~$[\partial_t^k R]$.

Part~\ref{sec:algo} deals with the construction of the maps~$[\,]_r$
whereas Part~\ref{sec:param} deals with the computation of Picard-Fuchs equations.

\subsection*{Related works}
Several existing algorithms are applicable to the computation of~$\cL_R$. The
reader may refer to~\cite{Chy14} for an extensive survey of ``creative
telescoping'' approaches.  A first family, originating in the work of
Fasenmyer~\cite{Fas47} and Verbaeten~\cite{Ver74}, gave rise to an
algorithm by Wilf and Zeilberger~\cite{WilZei92}, refined by Apagodu and Zeilberger~\cite{ApaZei06},
applicable to proper hyperexponential terms, which includes rational functions. The
idea is to transform Equation~\eqref{eqn:alg-pf} into a linear system
over~$\mathbb C(t)$ by bounding \emph{a priori} the order of a left multiple of~$\cL_R$ and the
degree of the polynomials appearing in the fractions~$B_i$.  While being an
interesting method, especially because it gives \emph{a priori} bounds, the
order of the linear system to be solved is large even for moderate sizes of the input.

Zeilberger's ``fast algorithm''~\cite{Zei91} for hypergeometric summation
is the origin of a different family of algorithms, whose key idea is to reduce the resolution of
Equation~\eqref{eqn:alg-pf} to the computation of rational solutions of systems
of ordinary linear differential equations. Interestingly, Picard used this idea much
earlier in a method for computing double rational integrals~\cite{Pic02a}.
Chyzak's algorithm~\cite{Chy00} and Koutschan's
semi-algorithm~\cite{Kou10}---termination is not proven---belong to this
line  and apply to~$D$-finite ideals in Ore algebras.  Rational functions are a
very specific case.

A last family of algorithms coming from~$\mathcal D$-module theory has given algorithms for
numerous operations on~$\mathcal D$-modules and, in particular, an algorithm by
Oaku and Takayama~\cite{OakTak99} to compute the de Rham cohomology of the
complement of an affine hypersurface, which would allow, in theory, to
compute Picard-Fuchs equations.  It is worth noting that an algorithm to
compute the integration of a holonomic~$\mathcal D$-module does not give as
such an algorithm applicable to our problem: computing the annihilator of a
rational function in the Weyl algebra is far from being an easy task~\cite{OakTak99}.

The domain of application of each of these three families is much larger than
just rational integrals: any comparison with the present algorithm must be
done with this point in mind.

The \emph{guessing} method, or \emph{equation reconstruction}, a totally
different method, applies to the computation of~$\cL_{R,\gamma}$.  It often
happens that beside the integral formula for~$P(t)$ one has a way to compute a
power series expansion.  After computing sufficiently many terms, it is
possible to recover~$\cL_{R,\gamma}$ \emph{via} Hermite-Padé approximants.  It
may be difficult to prove that the operator computed is indeed correct, but not
too hard to get convinced. The simplicity of this method counterbalances a
certain lack of delicacy and justifies its ample use. When the power series
expansion of~$P(t)$ is, for some reason, easy to compute, it can find
Picard-Fuchs equations which are far out of reach of any existing
algorithms~\cite[\emph{e.g.}][]{KauZei11}.  Most of the time, though,
the power series expansion of~$P(t)$ is expensive to compute.  For example, I
am aware of no general method allowing to compute directly the first~$p$ terms
of a diagonal of a rational function in~$n$ variables in less than~$p^n$
arithmetic operations. However, space complexity can be improved~\cite{Met12}.

Picard and Simart have studied the case of simple and double
integrals of algebraic functions and gave methods to compute normal forms
modulo derivatives extensively~\cite{PicSim97}\nocite{PicSim06}.
\Citeauthor{CheKauSin12}~\cite{CheKauSin12} gave an algorithm in this
direction, for double rational integrals.  This algorithm is an echo,
independently discovered, of one of the methods of Picard~\cite{Pic02a}.
Interestingly, it has two steps: a first one based on a reduction \emph{à la}
Hermite and another one based on creative telescoping.

Well later after Picard, Griffiths resumed the search for a normal form in the
setting of de Rham cohomology of smooth projective hypersurfaces, defining what
is now known as the Griffiths-Dwork
reduction~\cites[\S3]{Dwo62}[\S8]{Dwo64}[\S4]{Gri69}.  This reduction is in
many respects similar to the Hermite reduction.  It can be applied to the
computation of Picard-Fuchs equations in the same way as Hermite reduction
applies to univariate integrals.  The smoothness hypothesis can be worked around
with a generic deformation. This leads to an interesting complexity result
about the computation of Picard-Fuchs equations~\cite{BosLaiSal13} but to
disappointing practical efficiency in singular cases.  The direction of
Griffiths and Dwork was extended, in particular, by
Dimca~\cite{Dim90,Dim91} and Saito~\cite{DimSai06}, and some
results are known in the case of a singular hypersurface.

\subsection*{Acknowledgment}
I am grateful to Alin Bostan and Bruno Salvy for their precious help and
support, to Mark van~Hoeij and Jean-Marie Maillard for their expertise with
differential operators and to the referee for his thorough work.

\part{Reduction of periods}
\label{sec:algo}

Let~$\K$ be a field of characteristic zero, and let~$A$ be the polynomial
ring~$\K[x_0,\dotsc,x_n]$, for some integer~$n$.  Let~$f$ be an homogeneous
element of~$A$ and let~$A_f$ be the localized ring~$A[1/f]$.
The degree of~$f$ is denoted~$N$.
We focus here on integrals~$\oint R\ud \xx$ which are homogeneous of degree zero, this means that
\[ R(\lambda x_0,\dotsc,\lambda x_n) \ud( \lambda x_0) \dotsm \ud( \lambda x_n) = R(x_0,\dotsc,x_n)\ud x_0\dotsm \ud x_n, \]
or equivalently that~$R$ is a homogeneous rational function of degree~$-n-1$.
Every integral can be homogenized with a new variable,
see~\S\ref{sec:homogenization}.

This part addresses the problem of finding an algorithm \emph{à la} Hermite
that computes an idempotent linear map~$R \mapsto [R]$, from~$A_f$ to itself
such that~$[R]$ equals zero if and only if~$R$ is in the linear
subspace~$\sum_{i=0}^n \partial_i A_f$.  This problem is solved by the Hermite
reduction when~$n$ is~$1$ and by the Griffiths-Dwork reduction when~$f$
satisfies an additional regularity hypothesis (see
Theorems~\ref{thm:griffiths-simple} and~\ref{thm:gd-red}).  To this purpose, a
family of maps, denoted~$[\,]_r$, is constructed such that~$[\,]_1$ is the
Griffiths-Dwork reduction and such that~$[\,]_{r+1}$ factors through~$[\,]_r$.
I give an efficient algorithm to compute these maps.
Conjecturally,~$[\,]_{n+1}$ satisfies the desired properties.  Fortunately,
other results allow to avoid relying on this conjecture when dealing with
periods depending on a parameter.

\section{Overview}
\label{sec:overview}

\subsection{Griffiths-Dwork reduction}

To achieve a normal form modulo derivatives, the guiding principle is the
\emph{reduction of pole order}.  Let us first consider the decision
problem: given a rational function~$a/f^q$, decide whether it lies in~$\sum_{i=0}^n
\partial_i A_f$.  A major actor of the study is~$\Jf$, the Jacobian ideal
of~$f$. It is the ideal of~$A$ generated by the partial derivatives~$\partial_0
f, \dotsc, \partial_n f$.  The basic observation is that the differentiation formula
\begin{equation}
  \sum_{i=0}^n \partial_i\left( \frac{b_i}{f^{q-1}} \right) = \frac{\sum_{i=0}^n \partial_i b_i}{f^{q-1}} - (q-1) \frac{\sum_{i=0}^n b_i\partial_i f}{f^{q}}
  \label{eqn:difffrac}
\end{equation}
implies, by reading it right-to-left, that if~$a\in\Jf$ and~$q > 1$ then~$a/f^q$
equals~$a'/f^{q-1}$ modulo derivatives, for some~polynomial~$a'$.
Namely, if~$a=\sum_i b_i\partial_i f$ then
\[ \frac{a}{f^q} \equiv \frac{\frac{1}{q-1}\sum_{i=0}^n \partial_i b_i}{f^{q-1}} \mod{ \sum_{i=0}^n \partial_i A_f }.\]

Griffiths~\cite{Gri69} proved the converse property in the case when~$\Jf$ is
zero-dimensional or, equivalently, when the projective variety defined by~$f$
is smooth.  Under this hypothesis, if~$q>1$ and if~$a/f^q\equiv a'/f^{q-1}$,
modulo derivatives, for some polynomial~$a'$, then~$a\in\Jf$.  This gives an
algorithm to solve the decision problem, by induction on the pole order~$q$.

\subsection{Singular cases}

In presence of singularities, Griffiths' theorem always fails.
For example, with~$f$ equal to~$xy^2-z^3$,
\begin{equation}
  \frac{x^3}{f^2} =
    \partial_x \left(\frac{\frac 27 x^4}{f^2}\right)
    - \partial_y\left( \frac{\frac17 x^3 y}{f^2} \right), 
  \label{eqn:simplesyz}
\end{equation}
but~$x^3$ is not in~$\Jf$, which is here the ideal~$(xy, y^2, z^2)$.
This identity is a consequence of the following particular case of Equation~\eqref{eqn:difffrac}:
\begin{equation}
  \sum_{i=0}^n b_i\partial_i f = 0 \Rightarrow \sum_{i=0}^n \partial_i\left( \frac{b_i}{f^{q}} \right) = \frac{\sum_{i=0}^n \partial_i b_i}{f^{q}}.
  \label{eqn:syzfrac}
\end{equation}
Tuples of polynomials~$(b_0,\dotsc,b_n)$ such that~$\sum_{i=0}^n b_i\partial_i
f$ are called \emph{syzygies} (of the sequence~$\partial_0f,
\dotsc,\partial_nf$).  Therefore, in order to complete the reduction of pole order
strategy, we should not only consider elements of the Jacobian ideal, but also
elements of the form~$\sum_i \partial_i b_i$, where~$(b_b,\dotsc,b_n)$ is a syzygy.
Such elements are called \emph{differentials of syzygies}.

Considering differential of syzygies is not always enough.
For example, with~$f$ equal to~$x_0^4x_1-x_0^2x_1x_2^2+x_0x_2^4$:
\[ 
  \frac{x_1^7}{f^2} = \frac{1062347}{276480} \frac{89x_0^2+96x_0x_1+712x_2^2}{f} + \sum_{i=0}^2\partial_i\left(\frac{b_i}{f^3}\right),
\]
for some lengthy polynomials~$b_i$, whereas~$x_1^7$ is not a sum of a differential of a syzygy and of an element of~$\Jf$.
Note the exponent~$3$ appearing in~$\partial_i(b_i/f^3)$, it is the least possible.

\subsection{Higher order relations}

Let~$M_q$ be the set of rational functions of the form~$a/f^q$.
Let~$W_q^1$ be the subset of~$M_q\times M_{q-1}$ defined by
\[ W_q^1 = \left\{ \left( (q-1)\frac{\sum_{i=0}^n b_i\partial_i f}{f^q}, \frac{\sum_{i=0}^n \partial_ib_i}{f^{q-1}} \right) \st b_i \in A \right\}. \]
An element $(R, R')$ of~$W_q^1$ relates a rational function~$R$ with a pole
order at most~$q$ with another rational function~$R'$, with pole order at
most~$q-1$, which is equivalent to~$R$ modulo derivatives.  The following
statement is a rewording of Griffiths' result:
\begin{theorem}[Griffiths]\label{thm:griffiths-simple}
  Assume that~$V(f)$ is smooth. For all~$R$ in~$M_q$, homogeneous of degree~$-n-1$, the following assertions are equivalent:
  \begin{enumerate}
    \item $R$ is in~$\sum_i\partial_i A_f$;
    \item there exists~$R'$ in~$M_{q-1} \cap \sum_i\partial_i A_f$ such that~$(R, R')$ is in~$W_q^1$.
  \end{enumerate}
\end{theorem}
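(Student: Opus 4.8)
The plan is to reduce Theorem~\ref{thm:griffiths-simple} to the classical statement of Griffiths' theorem recalled in the previous subsection, namely that when $V(f)$ is smooth, for $q>1$ a rational function $a/f^q$ homogeneous of degree $-n-1$ is equivalent modulo $\sum_i\partial_iA_f$ to some $a'/f^{q-1}$ (necessarily homogeneous of the right degree) if and only if $a\in\Jf$. The implication (2)$\Rightarrow$(1) is essentially free: if $(R,R')\in W_q^1$, then by the very definition of $W_q^1$ and formula~\eqref{eqn:difffrac}, $R$ and $R'$ differ by an element of $\sum_i\partial_iA_f$ (read right-to-left with $b_i$ the tuple witnessing membership in $W_q^1$); since $R'\in\sum_i\partial_iA_f$ by hypothesis, so is $R$. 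So the work is all in (1)$\Rightarrow$(2).

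For (1)$\Rightarrow$(2), first I would dispose of the base case $q=1$: here $M_{q-1}=M_0$ consists of polynomials, and the condition $(R,R')\in W_1^1$ forces $R=0$ (the $(q-1)$ factor vanishes), so we need $R=0$ — but a polynomial in $\sum_i\partial_iA_f$ that is homogeneous of degree $-n-1<0$ must indeed be zero, since a fraction $b_i/f^0=b_i$ contributes nothing of negative degree after differentiation; more carefully one argues that a homogeneous element of $\sum_i\partial_i A_f$ of negative degree cannot be a polynomial unless it is $0$, which can be checked by looking at the highest power of $f$ in a denominator. For $q>1$, I would write $R=a/f^q$ with $a$ homogeneous of degree $qN-n-1$. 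By Griffiths' theorem (the converse direction quoted above), $R\in\sum_i\partial_iA_f$ implies $a\in\Jf$, say $a=\sum_i b_i\partial_i f$ with $b_i$ homogeneous of degree $qN-n-1-(N-1)=(q-1)N-n$. Then set
\[
  R' \eqdef \frac{\tfrac{1}{q-1}\sum_{i=0}^n\partial_i b_i}{f^{q-1}},
\]
which lies in $M_{q-1}$ and is homogeneous of degree $-n-1$. By construction $(R,R')\in W_q^1$ (up to the harmless scaling by $q-1$ built into the definition), and by~\eqref{eqn:difffrac} read right-to-left, $R-R'=(q-1)^{-1}\sum_i\partial_i(b_i/f^{q-1})\in\sum_i\partial_iA_f$. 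Since $R\in\sum_i\partial_iA_f$, it follows that $R'\in M_{q-1}\cap\sum_i\partial_iA_f$, giving (2).

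The only genuinely delicate point is making sure the quoted converse of Griffiths applies cleanly: it is stated for $q>1$, and it gives $a\in\Jf$ from the mere existence of \emph{some} $a'$ with $a/f^q\equiv a'/f^{q-1}$ modulo derivatives, which is exactly the hypothesis $R\in\sum_i\partial_iA_f$ combined with one descent step — but one should be careful that "$R\in\sum_i\partial_iA_f$" is a statement about an arbitrary (possibly high pole order) representation, and one must pass to a representation of the form needed. Concretely, writing $R=\sum_i\partial_i(c_i/f^{m})$ for suitable $c_i$ and some $m$, and clearing denominators, one reduces to the situation covered by Griffiths; handling the bookkeeping of pole orders and homogeneity degrees there is the main (though routine) obstacle. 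The smoothness of $V(f)$ enters only through its use in Griffiths' theorem, via the zero-dimensionality of $\Jf$.
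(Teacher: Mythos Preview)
The paper does not actually prove Theorem~\ref{thm:griffiths-simple}: it is stated explicitly as ``a rewording of Griffiths' result'' and attributed to~\cite{Gri69}, with the classical form quoted just before (if~$q>1$ and~$a/f^q\equiv a'/f^{q-1}$ modulo derivatives, then~$a\in\Jf$). Your proposal is precisely the unpacking of this rewording, and the argument for~$q>1$ is correct. One small simplification: your ``delicate point'' is not delicate at all. The hypothesis $R\in\sum_i\partial_i A_f$ already says $R\equiv 0 = 0/f^{q-1}$ modulo derivatives, so the classical converse applies immediately with~$a'=0$; no ``descent step'' is needed.

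The genuine weak spot is the base case~$q=1$. You correctly observe that $(R,R')\in W_1^1$ forces $R=0$, so (1)$\Rightarrow$(2) at~$q=1$ amounts to showing that a homogeneous~$R=a/f$ of degree~$-n-1$ lying in~$\sum_i\partial_iA_f$ must vanish. But your justification (``a polynomial in~$\sum_i\partial_iA_f$ that is homogeneous of degree~$-n-1<0$ must be zero'') is off: $R$ is not a polynomial, and the sentence ``look at the highest power of~$f$ in a denominator'' is not a proof. This vanishing is itself a nontrivial instance of Griffiths' theorem---in the paper's later language it is exactly Theorem~\ref{thm:grif-ordcert} at~$q=1$, namely $\Df(T^n)\cap F_1T^{n+1}\subset\Df(F_0T^n)=0$---and it uses smoothness (equivalently, that all syzygies of~$\partial_0f,\dotsc,\partial_nf$ are trivial). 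If you want to stay on the rational-function side, one clean route is: write~$R=\sum_i\partial_i(c_i/f^m)$ with minimal~$m$; if~$m=0$ you are done by degree; if~$m\geq1$, equating numerators shows~$\sum_i c_i\partial_if\in(f)$, hence (by Euler and regularity of the sequence~$\partial_i f$) the~$c_i$ differ from~$\tfrac{x_i g}{N}$ by a trivial syzygy, which lets you rewrite the certificate with pole order~$m-1$, contradicting minimality. Either way, the~$q=1$ case is not a triviality to be waved through.
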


The starting point of the method in the general case is to observe that~$W_q^1$
contains ordered pairs in the form~$(0,R')$. Namely, if~$b_0,\dotsc,b_n$ is a
syzygy, then~$(0,\sum_{i}\partial_i b_i / f^{q-1})$ is in~$W_q^1$.  They seem
to be useless relation in view of Theorem~\ref{thm:griffiths-simple}.  However,
for all such pairs~$(0,R')$, the rational function~$R'$ is
in~$\sum_{i}\partial_i A_f$, since it is equivalent to~$0$ modulo
derivatives.

But it is possible, as remarked above, that~$R'$ is not part of a pair~$(R', R'')$ in~$W_{q-1}^1$.
This motivates the definition of~$W_q^2$ as
\[ W_q^2 \eqdef W_q^1 + \left\{ (R, 0) \st (0, R) \in W_{q+1}^1 \right\}. \]
Of course, this can be iterated:
\[ W_q^{r+1} \eqdef W_q^r + \left\{ (R, 0) \st (0, R) \in W_{q+1}^r \right\}. \]
The basic property that is preserved through this induction is that for
all~$(R, R')$ in~$W_q^r$, the first element~$R$ has a pole of order at most~$q$
and is equivalent, modulo derivatives, to the second element~$R'$, which has a
pole of order at most~$q-1$.  When~$V(f)$ is smooth, then~$W_q^r = W_q^1$, for
all~$q$, but when~$V(f)$ is singular, the spaces~$W_q^r$, with~$r>1$, bring new
relations.
This construction is somehow exhaustive. The first result is the following, with no assumption on~$V(f)$:
\begin{theorem}\label{thm:hored-simple}
  There exists an integer~$r \geq 1$, depending only on~$f$, such that for
  all~$q$ and all~$R$ in~$M_q$, homogeneous of degree~$-n-1$, the following
  assertions are equivalent:
  \begin{enumerate}
    \item $R$ is in~$\sum_i\partial_i A_f$;
    \item there exists~$R'$ in~$M_{q-1}\cap\sum_i\partial_i A_f$ such that~$(R, R')$ is in~$W_q^r$.
  \end{enumerate}
\end{theorem}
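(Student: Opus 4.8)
For $(2)\Rightarrow(1)$ I would invoke only the invariant highlighted just before the statement, established by induction on $r$: every $(R,R')\in W_q^r$ satisfies $R-R'\in\sum_i\partial_i A_f$, with $R$ of pole order at most $q$ and $R'$ of pole order at most $q-1$. The case $r=1$ is Equation~\eqref{eqn:difffrac} read from right to left, together with~\eqref{eqn:syzfrac} for the syzygy contributions; passing from $r$ to $r+1$ only adjoins pairs $(S,0)$ coming from $(0,S)\in W_{q+1}^{r}$, and for those the induction hypothesis already gives $S=S-0\in\sum_i\partial_i A_f$, so the invariant is preserved. Granting it, $(2)$ gives $R=R'+(R-R')\in\sum_i\partial_i A_f$, which is $(1)$.

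For $(1)\Rightarrow(2)$ I would split the work into producing a single finite $r$ valid for all $q$, and proving the equivalence for the limit $W_q^\infty\eqdef\bigcup_{r\ge 1}W_q^r$. For the first point the plan is to realise the $W_q^r$ as the successive pages of one filtered complex: form the total complex $\Tot$ of the double complex obtained by crossing the Koszul complex on $\partial_0 f,\dotsc,\partial_n f$ with the de~Rham differential, so that $H^{n+1}(\Tot)$ is the algebraic de~Rham cohomology of $\Af^{n+1}\setminus V(f)$, filter it by pole order, and check that $W_q^r$ is exactly the boundary data produced in the relevant bidegree after $r$ turns of the associated spectral sequence. Since $H^{n+1}_{\mathrm{dR}}(\Af^{n+1}\setminus V(f))$ is finite dimensional and $\Tot$ is built from Noetherian modules (forms, the Jacobian ideal $\Jf$, the module of syzygies), this spectral sequence degenerates after finitely many pages, and the degeneration index $r_0$ is uniform in $q$; this $r_0$ is the integer claimed, and $W_q^{r_0}=W_q^\infty$. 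It then remains to show that every $R\in M_q$ homogeneous of degree $-n-1$ lying in $\sum_i\partial_i A_f$ is the first component of a pair of $W_q^\infty$ — its partner in $M_{q-1}$ is then automatically in $\sum_i\partial_i A_f$ by the invariant.

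The equivalence for $W_q^\infty$ is the heart of the matter, and I would prove it by reduction of pole order. Write $R\,\ud\xx=\ud\omega$ with $\omega$ a rational $n$-form of pole order $s$. If $s\le q-1$, a short computation based on Euler's identity $\sum_i x_i\partial_i f=Nf$ shows that the whole diagonal $\{(S,S)\mid S\in M_{q-1}\}$ lies in $W_q^1$, and combining this with the $W_q^1$-pair attached to the numerators of $\omega$ yields $(R,0)\in W_q^1$. If $s\ge q$, then since $\ud\omega$ has pole order at most $q<s+1$ the numerators $c_i$ of the leading pole-order part of $\omega$ must satisfy $f\mid\sum_i c_i\partial_i f$; writing $c_i=b_i+(g/N)x_i$ with $(b_i)$ a genuine syzygy (the Euler part split off), and subtracting from $\omega$ the corresponding differential-of-syzygy form together with the form $(g/Nf^s)\,(E\iprod\ud\xx)$, where $E=\sum_i x_i\partial_i$, whose differential vanishes by a homogeneity count, lowers the pole order of the primitive at the price of replacing $R$ by $R-D$ with $D=(\sum_i\partial_i b_i)/f^s$; and $(0,D)\in W_{s+1}^1$ by~\eqref{eqn:syzfrac}, hence $(D,0)\in W_{s}^2\subseteq W_s^\infty$. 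Iterating this descent on the pole order of a minimal primitive reduces everything to the already-treated case $s\le q-1$.

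The step I expect to be the main obstacle is gluing this descent together at the right pole order: the corrections $D$ above live at pole order $s$, possibly larger than $q$, so the naive argument only puts $R$ in the span of the first-component projections of the $W_p^\infty$ for various $p\ge q$, whereas the statement demands membership in that of $W_q^\infty$. Closing this gap amounts to showing that the pole order of a relation is intrinsic — equivalently that the pole-order filtration on $\Tot$ is strict, so that $W_q^\infty=W_{q+1}^\infty\cap(M_q\times M_{q-1})$ — which is precisely the strictness property the filtered complex $\Tot$ is introduced to control. With strictness in hand both halves of $(1)\Rightarrow(2)$ reduce to bookkeeping inside $\Tot$, and Theorem~\ref{thm:griffiths-simple} reappears as the case where the filtration already degenerates on the first page.
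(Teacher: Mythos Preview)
Your direction~$(2)\Rightarrow(1)$ is fine and matches the paper. The gap is in~$(1)\Rightarrow(2)$, precisely at the point you flag as the crux: the claim that the pole-order spectral sequence degenerates at a page~$r_0$ \emph{uniform in~$q$}. You justify this by saying that~$H^{n+1}_{\mathrm{dR}}$ is finite dimensional and that~$\Tot$ is built from Noetherian modules, but neither fact implies uniform degeneration. Finite-dimensionality of the abutment only says that each individual~$E_q^r$ eventually stabilises; it does not prevent the stabilisation index from growing with~$q$. Likewise, Noetherianity of~$A$ gives finite generation of each fixed~$E_q^r$ but says nothing about how the differentials~$d_r$ behave as~$q$ varies. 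Your later reduction-of-pole-order argument then needs exactly the strictness you have not established, as you yourself note in the final paragraph, so the circularity is real.

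The paper does not attempt to prove this degeneration from first principles. It invokes a theorem of Dimca (Theorem~\ref{thm:degen} here): there is an integer~$C$ depending only on~$f$ with~$\Df(T^n)\cap F_qT^{n+1}\subset\Df(F_{q+C-2}T^n)$ for all~$q$. The paper stresses that for each fixed~$q$ such a~$C$ is easy (your Noetherian/finite-dimensional argument does give this), and that what is ``remarkable'' is the uniformity in~$q$; Dimca's proof obtains~$C$ from a resolution of the singularities of~$V(f)$. With Dimca's theorem in hand, the statement follows at once via Proposition~\ref{prop:caracW} and the exponential isomorphism: taking~$r=C$, any~$R\in M_q\cap\sum_i\partial_iA_f$ corresponds to an element of~$\Df(T^n)\cap F_qT^{n+1}\subset\Df(F_{q+r-2}T^n)$, hence to an element of~$W_q^r$. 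So your overall architecture is right, but the load-bearing step is an external theorem you would need to cite rather than a consequence of finiteness alone.
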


The algorithm presented in this article is based on this theorem.  The
definition of the~$W_q^r$ gives readily an algorithm to compute these spaces:
it is only a matter of linear algebra. The second result is a method to achieve
efficiency. The two main ingredients are the use of \gros, and the computation
of a basis of \emph{non-trivial} syzygies to catch most elements of~$W_q^r$
at reasonnable cost.

\subsection{Trivial syzygies}
\label{sec:overview:syz}

The space~$W_q^2$ is made from~$W_q^1$ and elements in the
form~$(\sum_{i}\partial_i b_i / f^{q}, 0)$, where~$b_0,\dotsc,b_n$ is a syzygy,
that is~$\sum_i b_i \partial_i f$ vanishes.

Among syzygies, the \emph{trivial syzygies} do not bring new relations to the
relations already in~$W_r^1$.  A syzygy~$b_0,\dotsc,b_n$ is called \emph{trivial}
if there exist polynomials~$c_{i,j}$, with~$c_{i,j}=-c_{j,i}$, such that
\[ b_i = \sum_{j=0}^n c_{i,j}\partial_jf. \]
The antisymmetry property implies that this defines a syzygy, and we check that
\[ \sum_{i=0}^n \partial_ib_i = \sum_{j=0}^n \left( \sum_{i=0}^n \partial_i c_{ij} \right) \partial_j f + \underbrace{\sum_{i,j=0}^n c_{i,j}\partial_i\partial_j f}_{=0}, \]
so that~$\sum_{i=0}^n \partial_ib_i$ is in the Jacobian ideal.
Moreover
\[ \sum_{j} \partial_j \left( \sum_i \partial_i c_{ij} \right) = 0. \]
It follows that the ordered pair~$(\sum_{i}\partial_i b_i / f^{q}, 0)$ is already
in~$W_q^1$.  Thus, in order to compute~$W_q^2$, one may discard trivial syzygies.
Quantitatively, the trivial syzygies are numerous among the syzygies---see, for
example, Table~\ref{tab:dimsyz}---so that discarding them is a tremendous
improvement.  A basis of \emph{non-trivial} syzygies can be computed
efficiently by means of \gros.

\subsection{Reduction procedure}

Let~$R = a/f^q$ be a fraction in~$M_q$. The reduced form~$[R]_r$ is defined by
induction on~$q$ in the following way.  We decompose~$R$ as~$R' + S$ where~$R'$
is minimal in some sense and where~$S$ is the first element of a pair~$(S, T)$
in~$W_q^r$.  Then~$[R]_r$ is defined to be~$R'+[T]_r$.  By
construction~$[R]_r \equiv R$ modulo derivatives.  The constraint on the
homogeneity degree of~$R$ will ensure that~$T$ is zero at some point of the
induction.

\section{Exponential isomorphism}

The \emph{exponential isomorphism}, Theorem~\ref{thm:expiso},
allows to manipulate polynomials rather than
rational functions. It is folklore, for an account see~\cite{Dim91}.  We
work in a homogeneous setting and  we deal only with homogeneous fractions~$R$
of degree~$-n-1$. (So that~$R\ud x_0 \dotsm\ud x_n$ is homogeneous of
degree~$0$.) A fraction~$a/f^q$ is therefore represented solely by its
numerator~$a$: if~$a/f^q$ is homogeneous of degree~$-n-1$, the numerator~$a$ is
homogeneous of degree~$q\deg f -n-1$, so that~$q$ may be recovered from~$a$.
To the usual partial derivative~$\partial_i$ on the rational side corresponds
the \emph{twisted} derivative on the polynomial side
\[ \partial'_ia \eqdef \partial_i a - (\partial_i f) a = e^f \partial_i (a e^{-f}). \]
The exponential isomorphism relates, on the one hand, homogeneous
fractions~$a/f^q$ of degree~$-n-1$ modulo derivatives and, on the other hand,
homogeneous polynomials, with degree in~$(\deg f)\Z -n-1$, modulo twisted
derivatives.

\subsection{Differential forms}
\label{sec:diffforms}

This section is a short reminder about differential forms, or simply
\emph{forms}.\footnote{See, for example \cite[chap.~10]{Mat80} and \cite[\S
10]{A3}, for more general and complete definitions.}
Let~$\Omega^1$ denote the polynomial differential~$1$-forms: it is the
free~$A$-module of rank~$n+1$, and the basis is denoted by the symbols~$\ud
x_0,\dotsc,\ud x_n$.  The differential map~$\ud$ from~$A$ to~$\Omega^1$ is
defined by
\[ \ud a = \sum_{i=0}^n \partial_i a \,\ud x_i. \]
The~$A$-algebra of differential forms, denoted~$\Omega$, is the exterior
algebra over~$\Omega^1$.  Its multiplication is denoted~$\wedge$, it is generated by
the~$\ud x_i$ and is subject to the relations~$\ud x_i \wedge \ud x_j = -\ud
x_j \wedge\ud x_i$.  The~$A$-module of~$p$-forms, denoted~$\Omega^p$, is the
submodule of~$\Omega$ generated by the~$\ud x_{i_1}\wedge\dotsb\wedge\ud
x_{i_p}$.  With the multi-index notation, this is denoted~$\ud x^I$,
with~$I=(i_1,\dotsc,i_p)$.  $\Omega^p$ is a free module of rank~$\binom{n}{p}$.
The module of~$0$-forms~$\Omega^0$ is identified with~$A$.  As a
module,~$\Omega$ decomposes as~$\oplus_{p=0}^n\Omega^p$.  Specifically, the
module~$\Omega^{n+1}$ has rank~$1$ and is freely generated by~$\ud
x_0\wedge\dotsb\wedge\ud x_n$, denoted~$\omega$.  The module~$\Omega^{n}$ has
rank~$n+1$ and is freely generated by the elements~$\xi_i$ defined by
\[ \xi_i \eqdef (-1)^i \ud x_0\wedge\dotsb\wedge\ud x_{i-1}\wedge\ud x_{i+1}\wedge\dotsb\wedge\ud x_n. \]

\subsubsection{Exterior derivative}
The differential map~$\ud$, from~$A$ to~$\Omega^1$, extends to an endomorphism
of~$\Omega $, called \emph{exterior derivative}, such that for~$\alpha \in
\Omega^p $ and~$\beta\in\Omega $,
\[ \ud(\alpha\wedge\beta) = \ud \alpha \wedge \beta + (-1)^p \alpha \wedge \ud \beta. \]
In particular~$\ud(\Omega^p)$ is included in~$\Omega^{p+1}$ and~$\ud^2 = 0$.
For a~$n$-form~$\beta$, written as~$\sum_{i} b_i \xi_i$, we check
that~$\ud \beta$ equals~$\left(\sum_i \partial_i b_i \right)\omega$.
The exterior derivative gives rise to a complex
\[ 0\longrightarrow A \stackrel{\ud}\longrightarrow \Omega^1 
  \stackrel{\ud}\longrightarrow \dotsb \stackrel{\ud}\longrightarrow
  \Omega^{n}  \stackrel{\ud}\longrightarrow \Omega^{n+1}  \longrightarrow 0
\]
which is exact.

\subsubsection{Homogeneity}
The degree of a monomial~$x^I \ud x^J$ is defined to be~$|I|+|J|$.  A form is
called \emph{homogeneous of degree~$k$} if it is a linear combination of monomials of degree~$k$.
If~$\alpha$ and~$\beta$ are two homogeneous forms of degree~$k_\alpha$
and~$k_\beta$ respectively, then~$\ud \alpha$ is a homogeneous form of
degree~$k_\alpha$ and~$\alpha\wedge\beta$ is a homogeneous form of
degree~$k_\alpha+k_\beta$.

\subsubsection{Koszul complex}\label{sec:kcomplex}
The exterior product with~$\udf$
gives a map from~$\Omega^p$ to~$\Omega^{p+1}$, and since~$\udf\wedge\udf$
vanishes we can consider the chain complex
\[ \cK(\udf) : 0\longrightarrow  A \stackrel{\udf}\longrightarrow \Omega^1 
  \stackrel{\udf}\longrightarrow \dotsb \stackrel{\udf}\longrightarrow
  \Omega^{n}  \stackrel{\udf}\longrightarrow \Omega^{n+1}  \longrightarrow 0,
\]
known as the \emph{Koszul complex} of~$A$ with respect to~$\udf$, and its
cohomology~$H \cK(\udf)$ defined by
\[ H^p \cK(\udf) = \frac{ \Omega^{p} \cap \ker \udf }{ \udf\wedge\Omega^{p-1} }. \]
For a~$n$-form~$\beta$, written as~$\sum_{i} b_i \xi_i$, the exterior
product~$\udf\wedge\beta$ is~$\left(\sum_i b_i \partial_i f \right)\omega$.
Thus~$H^{n+1} \cK(\udf)$ is isomorphic to~$A/\Jf$, with a shift of~$n+1$ in the
natural grading, where~$\Jf$ is the Jacobian
ideal~$(\partial_0 f,\dotsc,\partial_nf)$.

Let~${\rm Syz}$ be the kernel of the product by~$\udf$ on~$\Omega^n$. It is the syzygy module
of the sequence~$\partial_0 f, \dotsc, \partial_nf$.  Let~${\rm Syz}'$
be~$\udfw\Omega^{n-1}$, the module of trivial syzygies, generated by the
elements~$\partial_i f \:\xi_j - \partial_j f \:\xi_i$. In particular~$H^n \cK(\ud f)$ is~${\rm Syz}/{\rm Syz'}$.

\subsection{Chain complex~$T^p$}\label{sec:setup}

For an integer~$q$, let~$T^{p}_q$ be the subspace of~$\Omega^p$ generated by
the homogeneous elements of degree~$qN$.  Let~$T^p$ be the direct sum~$\oplus_q
T^p_q$  and let~$F_q T^p$ be~$\oplus_{q'\leq q} T^{p}_{q'}$.  Note
that~$\udfw$ maps~$T^{n}_q$ to~$T^{n+1}_{q+1}$ and that~$\ud$ maps~$T^{n}_q$
to~$T^{n+1}_q$.  Let~$\Syz$ (resp.~$\tSyz$) be the intersection of~$T^n$
and~${\rm Syz}$ (resp.~${\rm Syz}'$). The component of degree~$qN$ of an
element~$\alpha$ of~$T$ is denoted~$\alpha_q$.

The space~$T^{n+1}_q$ is the equivalent of~$M_q$, as defined in the
introductory remarks: the elements of~$T^{n+1}_q$ represent numerators of
rational functions whose denominator is~$f^q$. We define the linear map~$h$
from~$T^{n+1}$ to~$A_f$ by
\[ h : a\omega \in T^{n+1}_q \longmapsto (q-1)! \frac{a}{f^q} \in A_f. \]
Of course~$h$ is not injective since~$h(f \alpha) = q h(\alpha)$,
for~$\alpha\in T^{n+1}_q$.    Finally let~$\Df$, the \emph{twisted differential}, from~$T^{p}$ to~$T^{p+1}$ be
the map defined by~$\Df\alpha = \ud \alpha - \udfw\alpha$.  Note that~$\Df$
maps~$F_q T^{n}$ to~$F_{q+1}T^{n+1}$.  The anticommutation~$\ud(\udfw \beta) =
-\udfw\ud\beta$ ensures that~$\Df\circ\Df = 0$, so that~$T^p$ forms a chain
complex.

\begin{remark}
  The spaces~$T_q^{p+q}$ arranged within a grid form a double
  complex, known as \emph{Rham-Koszul double complex}~\cite{Dim92}, with the
  \emph{horizontal} differential being~$\ud$ and the \emph{vertical} one
  being~$\udfw$,  see Figure~\ref{fig:rk-cpx}. This arrangement may help
  visualize some of the proofs in this article.
  
  \begin{figure}[tp]
  \[  \begin{tikzcd}
       T_{q+1}^n \arrow{r}{\ud}  & T^{n+1}_{q+1} \arrow{r}{\ud} & 0 & {} \\
       \arrow{r} & T^n_q \arrow{r}{\ud}\arrow{u}{\udf} & T^{n+1}_q \arrow{r}{\ud} \arrow{u}{\udf} & 0 \\
       \arrow{r} & T^{n-1}_{q-1} \arrow{r}{\ud}\arrow{u}{\udf}  & T_{q-1}^n \arrow{r}{\ud}\arrow{u}{\udf}  & T_{q-1}^{n+1} \arrow{u}{\udf} \\
  &\arrow{u} &\arrow{u} &\arrow{u} & 
    \end{tikzcd}
  \]
  \caption{Rham--Koszul double complex}
  \label{fig:rk-cpx}
  \end{figure}
\end{remark}

For~$p \geq 0$, let~$H^p_{\text{Rham}}(\PP^n_\K\setminus V(f))$ be the~$p$\textsuperscript{th}
de~Rham cohomology group of the variety~$\PP^n_\K\setminus V(f)$,\footnote{See~\cite{Gro66} for a general definition and~\cite[]{Gri69} for a definition in the specific case of a complement of a projective hypersurface}
and let~$H^{p+1} T$ be the~$p$\textsuperscript{th} cohomology group of the complex~$T$, that is~$(T^{p}\cap\ker \Df) / \Df(T^{p-1})$.
The following theorem has been proved in numerous occasions under several appearances, it goes back at least to Dwork.
In this exact form, I am aware of proofs by Dimca~\cite[Theorem~1.8]{Dim90},
Malgrange~\cite{Mal91a} and Deligne~\cite{Del91}.

\begin{theorem}
  $H^{p+1} T \simeq H^p_{\text{Rham}}(\PP^n_\K\setminus V(f))$, for all~$p \geq 1$.
  \label{thm:expiso}
\end{theorem}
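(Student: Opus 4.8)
The plan is to obtain the isomorphism from the \emph{exponential substitution}, combining two classical facts: Grothendieck's algebraic de~Rham theorem, which computes $H^\bullet_{\text{Rham}}(\PP^n_\K\setminus V(f))$ by rational differential forms, and the identity $\Df=e^{f}\circ\ud\circ e^{-f}$, which turns the twisted differential of $T^\bullet$ into an ordinary exterior derivative once the exponential factor is cleared.

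On the de~Rham side, $\PP^n_\K\setminus V(f)$ is a smooth affine variety---the complement of the ample divisor $V(f)$ in the smooth variety $\PP^n_\K$---so Grothendieck's theorem~\cite{Gro66} identifies $H^p_{\text{Rham}}(\PP^n_\K\setminus V(f))$ with the $p$\textsuperscript{th} cohomology of the complex of global algebraic forms. In homogeneous coordinates these are the rational forms on $\Af^{n+1}$ that are homogeneous of degree~$0$, have poles only along $V(f)$, and are annihilated by interior product with the Euler field $E\eqdef\sum_{i=0}^n x_i\partial_i$. I would single out two auxiliary facts. First, since $\mathcal L_E=\ud\circ(E\iprod)+(E\iprod)\circ\ud$ acts on homogeneous forms of degree~$w$ as multiplication by~$w$, the degree-$w$ components of both $(\Omega^\bullet,\ud)$ and the Koszul complex $(\Omega^\bullet,E\iprod)$ are acyclic for $w\neq 0$; consequently the complex $\cB^\bullet$ of \emph{all} homogeneous degree-$0$ rational forms with poles along $V(f)$, with differential~$\ud$, computes $H^\bullet_{\text{Rham}}(\Af^{n+1}\setminus V(f))$, which is related to the cohomology of $\PP^n_\K\setminus V(f)$ through the Gysin sequence of the $\mathbb G_m$-torsor $\Af^{n+1}\setminus V(f)\to\PP^n_\K\setminus V(f)$. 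Second, that torsor has vanishing Euler class: its line bundle $\cO(1)$ satisfies $\cO(N)\simeq\cO(V(f))\simeq\cO$ on $\PP^n_\K\setminus V(f)$, so its first Chern class is $N$-torsion, hence zero over a field of characteristic zero, and the Gysin sequence splits into short exact sequences $0\to H^p_{\text{Rham}}(\PP^n_\K\setminus V(f))\to H^p_{\text{Rham}}(\Af^{n+1}\setminus V(f))\to H^{p-1}_{\text{Rham}}(\PP^n_\K\setminus V(f))\to 0$.

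On the exponential side, I would extend the map~$h$ of Section~\ref{sec:setup} to all form degrees by $h(\alpha)\eqdef(q-1)!\,\alpha/f^{q}$ for $\alpha\in T^p_q$, and check from the differentiation formula~\eqref{eqn:difffrac} that $\ud\circ h=h\circ\Df$; thus $h$ is a surjective morphism of complexes $(T^\bullet,\Df)\to(\cB^\bullet,\ud)$, realising the substitution $e^{-f}$ through the dictionary ``numerator $a$ versus fraction $a/f^q$''. The short exact sequence $0\to\ker h\to T^\bullet\xrightarrow{h}\cB^\bullet\to 0$ then reduces the theorem to computing $H^\bullet(\ker h)$ and matching the resulting long exact sequence against the split Gysin sequence above; granting $H^p(\ker h)\simeq H^{p-1}_{\text{Rham}}(\PP^n_\K\setminus V(f))$ one gets $H^{p+1}T=H^p(T^\bullet,\Df)\simeq H^p_{\text{Rham}}(\PP^n_\K\setminus V(f))$ for $p\geq 1$. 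To carry out that computation---equivalently, to compute $H^\bullet(T^\bullet,\Df)$ from scratch---I would filter $T^\bullet$ by pole order ($F_qT^\bullet$); the associated graded differential of $\Df=\ud-\udfw$ is the Koszul differential $-\udfw$, the vertical direction of the Rham--Koszul double complex of Figure~\ref{fig:rk-cpx}, so everything reduces to the two spectral sequences of that double complex---one governed by the algebraic Poincaré lemma (the $\ud$-direction), the other by the isomorphism $H^{n+1}\cK(\udf)\simeq A/\Jf$ and the remaining Koszul cohomology $H^p\cK(\udf)$, with $\Syz/\tSyz$ appearing in degree~$n$ (the $\udfw$-direction).

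The main obstacle is hidden in that last reduction: the pole-order filtration is exhaustive but not bounded, so its spectral sequence converges only conditionally. Concretely, the formal power series $\sum_{q}f^{q}/q!=e^{f}$ is the unique formal solution of $\Df u=0$ but does \emph{not} belong to $T^0=\bigoplus_q T^0_q$, and it is precisely the absence of this element that makes $H^\bullet(T^\bullet,\Df)$ record $\PP^n_\K\setminus V(f)$ rather than the larger complex a na\"ive passage to the limit would suggest; handling this needs a by-hand convergence argument. This is exactly the folklore statement referred to before the theorem, with complete proofs going back to Dwork and, in the form used here, due to Dimca~\cite{Dim90}, Malgrange~\cite{Mal91a} and Deligne~\cite{Del91}; I would either invoke one of these or reproduce the conditional-convergence bookkeeping directly.
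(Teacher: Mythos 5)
The first thing to say is that the paper does not prove this theorem at all: it records it as folklore and cites Dimca, Malgrange and Deligne for complete proofs. So there is no internal argument to compare yours against step by step; the relevant question is whether your sketch is a faithful and non-circular outline of the known proofs. Mostly it is. The de~Rham side (Grothendieck's theorem, acyclicity of the nonzero-degree components via the homotopy $\mathcal L_E=\ud\circ(E\iprod)+(E\iprod)\circ\ud$, and the split Gysin sequence of the $\mathbb G_m$-torsor, the splitting coming from $\cO(N)\simeq\cO(V(f))$ being trivial off $V(f)$) is correct and is exactly how Dimca sets things up. The verification that $h(\alpha)=(q-1)!\,\alpha/f^q$ intertwines $\Df$ with $\ud$ is also right and extends Proposition~\ref{prop:hD} to all form degrees.

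Two caveats. First, the reduction via $0\to\ker h\to T^\bullet\to\cB^\bullet\to 0$ is circular as stated: the map $\alpha\in T^p_q\mapsto f\alpha-q\alpha$ is an isomorphism of complexes from $(T^\bullet,\Df)$ onto $\ker h$ (it commutes with $\Df$, is injective by looking at top degrees, and is surjective by solving $f\alpha_{q-1}-q\alpha_q=\beta_q$ recursively), so ``granting $H^p(\ker h)\simeq H^{p-1}_{\text{Rham}}$'' is granting the theorem itself. You half-acknowledge this when you pivot to computing $H^\bullet(T^\bullet,\Df)$ directly, but the long-exact-sequence route should be dropped or reformulated (the useful content of that sequence is precisely the endomorphism of $H^\bullet(T)$ induced by $\alpha\mapsto f\alpha-q\alpha$, which is how the extra $H^{p-1}(U)$ summand of $H^p(\tilde U)$ gets cancelled). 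Second, the genuine mathematical content --- that the spectral sequence of the unbounded pole-order filtration on $(T^\bullet,\Df)$ converges to the right thing, the point being exactly your observation that the formal solution $e^f$ of $\Df u=0$ lives in the product $\prod_q T^0_q$ but not in the direct sum --- is precisely where your proposal stops and defers to the literature. That is a legitimate move here (it is what the paper itself does), but you should be clear that without that convergence argument your text is an annotated citation rather than a proof.
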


We will only make use of Theorem~\ref{thm:expiso} in the case where~$p=n$.  The
cohomology group~$H^{n+1} T$ is~$T^{n+1}/\Df(T^n)$
and~$H^n_{\text{Rham}}(\PP^n_\K\setminus V(f))$ is isomorphic to the subspace
of~$A_f/\sum_i\partial_i A_f$ generated by the homogeneous elements of
degree~$-n-1$, and the isomorphism is the map induced by~$h : T^{n+1}
\to A_f$:
\begin{proposition}
  $h(\Df(T^{n})) \subset \sum_{i=0}^n \partial_i A_f$. In other words, the
  map~$h$ induces a map from~$T^{n+1}/\Df(T^n)$ to~$A_f/\sum_i\partial_i A_f$.
  
  \label{prop:hD}
\end{proposition}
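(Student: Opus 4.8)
The plan is to reduce to a single homogeneous component and then simply re-derive Equation~\eqref{eqn:difffrac} on the polynomial side of the correspondence. First I would fix $q\geq 1$ and $\beta\in T^{n}_q$, and write $\beta=\sum_{i=0}^n b_i\xi_i$ in the free $A$-module $\Omega^n$, where each $b_i$ is homogeneous of degree $qN-n$. Using the two identities recalled in \S\ref{sec:diffforms} and \S\ref{sec:kcomplex}, namely $\ud\beta=\bigl(\sum_i\partial_i b_i\bigr)\omega$ and $\udfw\beta=\bigl(\sum_i b_i\,\partial_i f\bigr)\omega$, the element $\Df\beta=\ud\beta-\udfw\beta$ has its degree-$qN$ component equal to $\bigl(\sum_i\partial_i b_i\bigr)\omega\in T^{n+1}_q$ and its degree-$(q+1)N$ component equal to $-\bigl(\sum_i b_i\,\partial_i f\bigr)\omega\in T^{n+1}_{q+1}$.

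Next I would apply $h$ to each graded component separately, according to its definition, obtaining
\[
  h(\Df\beta)=(q-1)!\,\frac{\sum_i\partial_i b_i}{f^{q}}-q!\,\frac{\sum_i b_i\,\partial_i f}{f^{q+1}}
  =(q-1)!\left(\frac{\sum_i\partial_i b_i}{f^{q}}-q\,\frac{\sum_i b_i\,\partial_i f}{f^{q+1}}\right).
\]
Now Equation~\eqref{eqn:difffrac}, read with exponent $q$ in place of $q-1$, says precisely that the parenthesised quantity equals $\sum_{i=0}^n\partial_i\bigl(b_i/f^{q}\bigr)$. Hence $h(\Df\beta)=\sum_{i=0}^n\partial_i\bigl((q-1)!\,b_i/f^{q}\bigr)$, which lies in $\sum_i\partial_i A_f$ since each $b_i/f^q\in A_f$. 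Because $T^{n}=\oplus_q T^{n}_q$ and $\ud$, $\udfw$, $h$ are all linear, summing over the finitely many homogeneous pieces of an arbitrary element of $T^n$ yields $h(\Df(T^n))\subseteq\sum_i\partial_i A_f$; the induced map $T^{n+1}/\Df(T^n)\to A_f/\sum_i\partial_i A_f$ then exists because $h$ is linear.

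There is essentially nothing hard here: the core is a one-line manipulation of fractions, and the only points that require care are the bookkeeping of the factorials $(q-1)!$ and $q!$ coming from the definition of $h$, and the fact that $\Df$ produces two different homogeneous degrees ($qN$ from $\ud$ and $(q+1)N$ from $\udfw$) that must be fed to $h$ separately. No degenerate case arises, since $a\omega\in T^{n+1}_q$ forces $\deg a=qN-n-1\geq 0$, hence $q\geq 1$, so $(q-1)!$ is always defined. If one wants an explicit primitive, the same computation gives it: writing $\beta=\sum_q\beta_q$ with $\beta_q=\sum_i b_{i,q}\xi_i$ and setting $g_i\eqdef\sum_q(q-1)!\,b_{i,q}/f^{q}\in A_f$, one has $h(\Df\beta)=\sum_{i=0}^n\partial_i g_i$.
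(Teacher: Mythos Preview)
Your proof is correct and follows exactly the same computation as the paper's: take $\beta=\sum_i b_i\xi_i\in T^n_q$, expand $\Df\beta$ using $\ud\beta=(\sum_i\partial_i b_i)\omega$ and $\udfw\beta=(\sum_i b_i\partial_i f)\omega$, apply $h$ to each graded piece, and recognise the result as $(q-1)!\sum_i\partial_i(b_i/f^q)$. You are simply more explicit about the bookkeeping (the two homogeneous degrees, the factorials, the extension by linearity, and the non-degeneracy of $(q-1)!$), but the argument is the paper's one-liner.
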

\begin{proof}
  Let~$\beta = \sum_{i=0}^n b_i\xi_i$ be an element of~$T^{n}_q$, then
  \begin{align*}
    h\left( \Df\left( \smash{\sum_{i=0}^n} b_i \xi_i \right) \right) &= \sum_{i=0}^n h(\partial_i b \omega) - h(b_i\partial_if\omega)
    = \sum_{i=0}^n (q-1)! \frac{\partial_ib_i}{f^q} - q! \frac{b_i\partial_if}{f^{q+1}}\\
    &= (q-1)! \sum_{i=0}^n \partial_i\left( \frac{b_i}{f^q} \right). \text{\qedhere}
  \end{align*}
\end{proof}

This way, the goal of computing normal forms modulo derivatives of rational
functions can be reformulated as computing normal forms of elements
of~$T^{n+1}$ modulo~$\Df(T^n)$.

\begin{example}
  With~$f = x^2y - z^3$, Equation~\eqref{eqn:simplesyz} rewrites
  \[ x^3 \ud x \ud y \ud z = \Df( \tfrac{2}{7} x^4 \ud y \ud z + \tfrac 17 x^3 \ud x\ud z ). \]
  The rewriting is not always as simple as in this example but Theorem~\ref{thm:expiso}
  asserts that it is always possible.
\end{example}

\subsection{Filtered maps} The space~$T^{n+1}$ admits a filtration given by
the subspaces~$F_q T^{n+1}$ with $q\in\Z$.  In the next sections, we will define
reduction maps which will be \emph{filtered} endomorphisms of~$T^{n+1}$, that is to
say linear maps~$u : T^{n+1}\to T^{n+1}$ such that~$u(F_qT^{n+1})\subset
F_qT^{n+1}$ for all~$q\in\Z$.
Two filtered endomorphisms of~$T^{n+1}$, say~$u$ and~$v$, are \emph{equivalent} if for all~$q\in\Z$ and all~$\alpha\in F_qT^{n+1}$ we have~$u(\alpha) \equiv v(\alpha)$ modulo~$F_{q-1}T^{n+1}$.

For all filtered map~$u$, we can define the \emph{associated graded map} as
\[ \mop{Gr}u : \alpha \in T^{n+1} \mapsto \sum_{q\geq 0} u(\alpha_q)_q \in T^{n+1}. \]
Two filtered maps are equivalent if and only if their associated graded maps are equal.

\section{Griffiths-Dwork reduction}
\label{sec:redgd}

We reword the Griffiths-Dwork reduction, presented in
Section~\ref{sec:overview}, in the above setting.  Let us choose a
monomial ordering on~$A$, denoted~$\prec$.  For a linear subspace~$V$ of~$A$ and
an element~$a$ of~$A$, let~$\mop{rem}(a,V)$, be the unique~$b$
in~$A$ such that~$a-b$ is in~$V$ and no monomials of~$b$ is divided by the
leading monomial of some element of~$V$.  If~$V$ is an ideal of~$A$, this can
be computed using a \gro\ of~$V$, and if it is a finite-dimensional subspace,
then Gaussian elimination following the monomial ordering computes~$\mop{rem}(a,V)$.

The elementary step of the Griffiths-Dwork reduction is the following.
Let~$\alpha$ be an element of~$T^{n+1}_q$.  By definition there is a~$\beta$
in~$T^n$ such that~$\alpha = \rem(\alpha,{\udfw T^n}) + \udfw \beta$. We choose~$\beta$ in such a way that:
   it depends linearly on~$\alpha$ ;
  $\beta = 0$ if~$\alpha$ is in~$\Df T^n$ ;
   and $\beta$ is in~$T^n_{q-1}$.
The \emph{elementary reduction of~$\alpha$ in degree~$q$} is then defined to
be
\begin{equation}
   \redgd_q(\alpha) \eqdef \mop{rem}(\alpha,{\udfw T^n}) + \ud\beta.
  \label{equ:defredq}
\end{equation}
For~$\alpha$ in~$T_k^{n+1}$, for some~$k$ different from~$q$, we
define~$\redgd_q(\alpha) = \alpha$.  The definition of~$\redgd_q$ depends on the
choice of~$\beta$; however, the equivalence class of~$\redgd_q$ as a filtered
map does not.

This elementary reduction is very easy to compute using a \gro\ of the Jacobian
ideal~$\Jf=(\partial_0f,\dotsc,\partial_nf)$ and its cofactors.  Indeed, the
multivariate division algorithm gives a decomposition of a polynomial~$a$
as~$\mop{rem}(a,\Jf) + \sum_{i=0}^n b_i \partial_i f$.  If~$\alpha$
is~$a\omega$, then~$\mop{rem}(\alpha,\udfw T^n)$ is~$\mop{rem}(a,\Jf)\omega$
and~$\beta$ may be chosen equal to~$\sum_i b_i\xi_i$.  In this way, the
assumptions on~$\beta$ are naturally satisfied. See Section~\ref{sec:impl} for
more details about the implementation.

By construction, $\alpha - \redgd_q(\alpha) = -\Df\beta$, so that~$\redgd_q$ is
an idempotent map whose kernel is included in~$\Df(T^n)$.  When translated into
a relation between fractions, this reflects integration by parts:
\[ \oint b_i \partial_i( 1/f^{q-1} ) \ud\xx = -\oint \partial_i b_i / f^{q-1} \ud \xx. \]

This reduction step can be iterated and for~$\alpha\in F_q T^{n+1}$, the
\emph{Griffiths-Dwork reduction} of~$\alpha$, denoted $[\alpha]_{\tGD}$, is
defined as
\[ [\alpha]_{\tGD} \eqdef \redgd_1 \circ \dotsm \circ\redgd_q(\alpha). \]
We check the following recursive relation: $[\alpha]_{\tGD} = \mop{rem}(\alpha,\udfw T^n) + [\ud\beta]_\tGD$, 
where~$\beta$ is the one in the equation~\ref{equ:defredq}.  Again, the
map~$[\,]_\tGD$ depend on the choice of~$\beta$ but its equivalence class, as a
filtered map, does not.

\begin{proposition}
  \label{prop:elem-red}
  The map~$[\,]_\tGD$ is filtered, idempotent and~$\ker [\,]_\tGD \subset
  \Df(T^n)$.  In particular~$\alpha \equiv [\alpha]_{\tGD}$ modulo~$\Df(T^{n})$
  for all~$\alpha\in T^{n+1}$.  Moreover, for all~$q\geq 0$ and~$\alpha\in
  F_qT^{n+1}$, $[\alpha]_\tGD \in F_{q-1}T^{n+1}$ if and only if $\alpha\in
  \Df(F_{q-1}T^n) + F_{q-1}T^{n+1}$.
\end{proposition}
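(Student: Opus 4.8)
The first three assertions---that $[\,]_\tGD$ is filtered, idempotent, and has kernel inside $\Df(T^n)$---I would establish by a straightforward induction on the pole order, leveraging what is already proven about the single step $\redgd_q$. The map $\redgd_q$ is, by construction, a filtered endomorphism satisfying $\alpha - \redgd_q(\alpha) = -\Df\beta \in \Df(T^n)$, it is idempotent, and its kernel lies in $\Df(T^n)$. Since $[\alpha]_\tGD = \redgd_1 \circ \dotsm \circ \redgd_q(\alpha)$ for $\alpha \in F_qT^{n+1}$, the filtered property is immediate (a composition of filtered maps is filtered), and the congruence $\alpha \equiv [\alpha]_\tGD$ modulo $\Df(T^n)$ follows by telescoping the relation $\alpha - \redgd_q(\alpha) \in \Df(T^n)$ down the chain. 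For idempotence I would use the recursive relation $[\alpha]_\tGD = \rem(\alpha, \udfw T^n) + [\ud\beta]_\tGD$: applying $\redgd_q$ to an element of $F_{q-1}T^{n+1}$ does nothing, so $[\,]_\tGD$ restricted to $F_{q-1}T^{n+1}$ is $[\,]_\tGD$ itself, and one checks $[[\alpha]_\tGD]_\tGD = [\alpha]_\tGD$ by induction on $q$, noting that $[\ud\beta]_\tGD \in F_{q-1}T^{n+1}$. The kernel statement $\ker[\,]_\tGD \subset \Df(T^n)$ is then just the congruence $\alpha \equiv [\alpha]_\tGD$ specialised to $[\alpha]_\tGD = 0$.

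The substantive part is the last equivalence: for $\alpha \in F_qT^{n+1}$, one has $[\alpha]_\tGD \in F_{q-1}T^{n+1}$ if and only if $\alpha \in \Df(F_{q-1}T^n) + F_{q-1}T^{n+1}$. The "if" direction I would treat first. If $\alpha = \Df\gamma + \eta$ with $\gamma \in F_{q-1}T^n$ and $\eta \in F_{q-1}T^{n+1}$, I want to show the top-degree-$q$ part of $\alpha$ gets killed by $\redgd_q$. Writing $\alpha_q$ for the degree-$qN$ component, we have $\alpha_q = (\Df\gamma)_q = -\udfw\gamma_{q-1}$ (the $\ud$ part of $\Df$ preserves degree, so it contributes to degree $(q-1)N$, not $qN$; only the $-\udfw$ part raises the Koszul degree). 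Hence $\alpha_q \in \udfw T^n_{q-1}$, so $\rem(\alpha_q, \udfw T^n) = 0$, and in the elementary reduction we may take $\beta = -\gamma_{q-1}$, giving $\redgd_q(\alpha)_q = 0$ and $\redgd_q(\alpha) = \alpha - \alpha_q - \udfw\gamma_{q-1} + \ud(-\gamma_{q-1})$, which lies in $F_{q-1}T^{n+1}$ and is congruent there to $\alpha - \Df\gamma \equiv \eta$ modulo nothing---in any case it lies in $F_{q-1}T^{n+1}$, and then $[\alpha]_\tGD = [\redgd_q(\alpha)]_\tGD$ stays in $F_{q-1}T^{n+1}$ since $[\,]_\tGD$ is filtered. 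Care is needed because the definition of $\redgd_q$ fixes a particular choice of $\beta$, not necessarily $-\gamma_{q-1}$; but the equivalence class of $\redgd_q$ as a filtered map is independent of that choice, and "$\in F_{q-1}T^{n+1}$" is a property invariant under filtered equivalence, so the argument goes through.

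For the "only if" direction, suppose $[\alpha]_\tGD \in F_{q-1}T^{n+1}$. Then $\redgd_q(\alpha) \in F_{q-1}T^{n+1}$ as well: indeed $[\alpha]_\tGD = [\redgd_q(\alpha)]_\tGD$, and if $\redgd_q(\alpha)$ had a nonzero degree-$qN$ component it would equal $\rem(\alpha_q, \udfw T^n)$, which is a fixed-point of all subsequent reductions $\redgd_{q-1}, \dots, \redgd_1$ (they only touch lower degrees), so it would survive into $[\alpha]_\tGD$---contradiction. Hence $\rem(\alpha_q, \udfw T^n) = 0$, i.e. $\alpha_q = \udfw \beta_{q-1}$ for some $\beta_{q-1} \in T^n_{q-1}$; then $\alpha - \Df\beta_{q-1} \in F_{q-1}T^{n+1}$, which is exactly the desired membership $\alpha \in \Df(F_{q-1}T^n) + F_{q-1}T^{n+1}$. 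The one point to verify carefully here is the claim that $\rem(\alpha_q, \udfw T^n)$, if nonzero, cannot be cancelled later: this follows because each $\redgd_k$ for $k < q$ acts as the identity on $T^{n+1}_q$ (by the convention $\redgd_q(\alpha) = \alpha$ for $\alpha$ in degrees $\neq q$), so the degree-$qN$ part of $[\alpha]_\tGD$ equals the degree-$qN$ part of $\redgd_q(\alpha)$, namely $\rem(\alpha_q, \udfw T^n)$. I expect this bookkeeping of which graded pieces are affected by which $\redgd_k$ to be the main thing to get right; everything else is formal manipulation with the filtration.
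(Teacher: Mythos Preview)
Your proof is correct and follows essentially the same approach as the paper. The only notable streamlining the paper offers is in the ``only if'' direction: rather than tracking the top graded piece through the chain of $\redgd_k$'s, it simply observes that by construction $[\alpha]_\tGD = \alpha + \Df\beta$ for some $\beta \in F_{q-1}T^n$ (each step subtracts an element of $\Df(T^n_{k-1})$), so $[\alpha]_\tGD \in F_{q-1}T^{n+1}$ immediately gives $\alpha \in \Df(F_{q-1}T^n) + F_{q-1}T^{n+1}$.
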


\begin{proof}
  It is straightforward that the map~$[\,]_\tGD$ is filtered, idempotent and
  that~$\ker [\,]_\tGD$ is included in~$\Df(T^n)$.  Concerning the second
  point, let~$\alpha\in F_qT^{n+1}$.  By contruction~$[\alpha]_\tGD =
  \alpha + \Df\beta$ for some~$\beta\in F_{q-1}T^n$.  So if~$[\alpha]_\tGD$ is
  in~$F_{q-1}T^{n+1}$ then~$\alpha$ is in~$\Df(F_{q-1}T^n) + F_{q-1}T^{n+1}$.

  Conversely, let~$\alpha = \Df\beta + \varepsilon$, with~$\beta\in
  F_{q-1}T^{n}$ and~$\varepsilon\in F_{q-1}T^{n+1}$.  Then~$\alpha_q =
  \udfw\beta_{q-1}$, and so~$\rem(\alpha_q, \udfw T^n) = 0$.  By
  Equation~\eqref{equ:defredq}, $\redgd_q(\alpha)\in F_{q-1}T^{n+1}$, and
  so~$[\alpha]_\tGD$ as well.
\end{proof}

The Griffiths-Dwork reduction~$[\,]_{\tGD}$ is a multivariate and homogeneous
analogue of Hermite reduction.  In general, it does not have all the nice
properties of Hermite reduction: it may happen that for some~$\alpha$
in~$\Df(T^{n})$ the reduction~$[\alpha]_{\tGD}$ is not zero and it may fail at 
reducing the degree.  Nevertheless, Dwork~\cites[\S3]{Dwo62}[\S8]{Dwo64} and Griffiths~\cite[\S4]{Gri69} have proven the
following:
\begin{theorem}[Dwork, Griffiths]
  If~$V(f)$ is smooth in~$\PP_\K^n$ then
  \begin{enumerate}[(i)]
    \item $\ker [\,]_{\tGD} = \Df(T^{n})$,
    \item for all~$\alpha$ in~$T^{n+1}$ the reduction~$[\alpha]_{\tGD}$ is in~$F_nT^{n+1}$
  \end{enumerate}
  \label{thm:gd-red}
\end{theorem}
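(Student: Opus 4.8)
The plan is to exploit the Rham--Koszul double complex (Figure~\ref{fig:rk-cpx}) together with the classical fact that when $V(f)\subset\PP^n_\K$ is smooth, the Jacobian ideal $\Jf$ is a complete intersection, so the Koszul complex $\cK(\udf)$ is exact in all degrees except the top one; equivalently $\Syz=\tSyz$, i.e. every syzygy of $(\partial_0 f,\dotsc,\partial_n f)$ is trivial, and more generally $H^p\cK(\udf)=0$ for $p\le n$. Combined with the exactness of the de~Rham complex $(\Omega^\bullet,\ud)$, a standard spectral-sequence or direct diagram-chase argument on the double complex then shows that the cohomology $H^{n+1}T = T^{n+1}/\Df(T^n)$ is concentrated in the filtration degrees $q\le n$; equivalently, every class in $T^{n+1}$ is equivalent modulo $\Df(T^n)$ to an element of $F_n T^{n+1}$, and moreover the filtration steps $F_qT^{n+1}/(F_{q-1}T^{n+1}+\Df(F_{q-1}T^n))$ vanish for $q > n$. (This is essentially the Griffiths pole-order reduction theorem transported through the exponential isomorphism of Theorem~\ref{thm:expiso}; the classical statement is that on a smooth hypersurface every class in $H^n_{\mathrm{Rham}}(\PP^n\setminus V(f))$ has pole order at most $n$.)

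Granting this, here is how I would organize the two claims. For~(ii): take $\alpha\in T^{n+1}_q$ with $q>n$. By the pole-order statement above, $\alpha\in \Df(F_{q-1}T^n)+F_{q-1}T^{n+1}$. By the last sentence of Proposition~\ref{prop:elem-red}, this is exactly the condition for $[\alpha]_\tGD\in F_{q-1}T^{n+1}$. Iterating (each application of $\redgd_k$ for $k>n$ strictly lowers the filtration degree until we reach $F_n$, and once inside $F_n$ the maps $\redgd_k$ with $k\le n$ preserve $F_nT^{n+1}$), we conclude $[\alpha]_\tGD\in F_nT^{n+1}$. Extending by linearity from the graded pieces $T^{n+1}_q$ to all of $T^{n+1}=\oplus_q T^{n+1}_q$ gives~(ii).

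For~(i): the inclusion $\ker[\,]_\tGD\subseteq\Df(T^n)$ is already noted in Proposition~\ref{prop:elem-red}, so only the reverse inclusion needs work. Let $\alpha=\Df\beta$ with $\beta\in T^n_q$; I want $[\alpha]_\tGD=0$. The key reduction step: since $\alpha_{q+1}=\udfw\beta_q$ lies in $\udfw T^n$, we get $\rem(\alpha_{q+1},\udfw T^n)=0$, hence (by~\eqref{equ:defredq}) $\redgd_{q+1}(\alpha)=\ud\beta$, i.e. $[\alpha]_\tGD=[\ud\beta]_\tGD$ where now $\ud\beta\in F_qT^{n+1}$ and $\ud(\ud\beta)=0$. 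So it suffices to prove $[\gamma]_\tGD=0$ for every $\gamma\in T^{n+1}_q\cap\ker\ud$; I proceed by induction on $q$. Writing $\gamma=\rem(\gamma,\udfw T^n)+\udfw\delta$ with $\delta\in T^n_{q-1}$ as in the definition of $\redgd_q$, the closedness $\ud\gamma=0$ forces $\ud(\udfw\delta)=-\udfw\ud\delta=0$, so $\ud\delta\in T^{n+1}_q$ is a top Koszul cycle; by exactness of $\cK(\udf)$ at level $n+1$ modulo $\Jf$ — more precisely, I need that $\ud\delta$ represents a class killed after the reduction — one shows $\rem(\gamma,\udfw T^n)=0$, so $\redgd_q(\gamma)=\ud\delta$, and $\ud(\ud\delta)=0$ with $\ud\delta\in F_{q-1}T^{n+1}$: apply the induction hypothesis. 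The base case $q\le 1$: $T^{n+1}_0$ and $T^{n+1}_1$ are handled directly, using that there is nothing to reduce (or that a closed element of the appropriate low degree is already $\ud$ of something whose reduction is forced to vanish). This inductive argument is really the diagram chase in the double complex made explicit, and it is where one must carefully invoke smoothness (via $H^p\cK(\udf)=0$ for $p\le n$ and exactness of $\ud$).

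The main obstacle is the second claim~(i): making the induction airtight requires showing that at each stage the "polynomial remainder" part $\rem(\gamma,\udfw T^n)$ genuinely vanishes, and this is precisely the content of Griffiths' theorem that under smoothness $a/f^q\equiv a'/f^{q-1}$ modulo derivatives implies $a\in\Jf$. Pinning down the correct bookkeeping of filtration degrees so that the induction on $q$ terminates — and checking the low-degree base cases, where $T^n$ may be too small for the double-complex exactness to apply verbatim — is the delicate point; everything else (claim~(ii), and the inclusion $\ker[\,]_\tGD\subseteq\Df(T^n)$) is either already in the excerpt or a direct consequence of the exactness statements for $\cK(\udf)$ and $(\Omega^\bullet,\ud)$.
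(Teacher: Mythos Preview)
Your plan for~(ii) is sound and matches the paper's indicated route: smoothness makes~$A/\Jf$ finite-dimensional, so for~$q>n$ one has $(A/\Jf)_{qN-n-1}=0$, hence $T^{n+1}_q=\udfw T^n_{q-1}\subset\Df(F_{q-1}T^n)+F_{q-1}T^{n+1}$, and Proposition~\ref{prop:elem-red} then lets you descend to~$F_nT^{n+1}$.

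Your argument for~(i), however, has a genuine gap. You set up an induction on~$q$ with hypothesis ``$[\gamma]_\tGD=0$ for every $\gamma\in T^{n+1}_q\cap\ker\ud$''. But $T^{n+1}\subset\Omega^{n+1}$ and~$\ud$ maps~$\Omega^{n+1}$ to~$\Omega^{n+2}=0$, so the closedness condition is vacuous: $T^{n+1}_q\cap\ker\ud=T^{n+1}_q$. Your induction therefore tries to prove $[\gamma]_\tGD=0$ for \emph{every} $\gamma\in T^{n+1}_q$, which is false whenever $H^{n+1}T\neq0$. Correspondingly, the step ``$\ud(\udfw\delta)=-\udfw\ud\delta=0$'' is true but empty (both sides live in $\Omega^{n+2}=0$), so it yields no constraint on~$\delta$, and the crucial conclusion ``one shows $\rem(\gamma,\udfw T^n)=0$'' simply does not follow.

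The paper does not spell out the proof but points to the right ingredient, Theorem~\ref{thm:grif-ordcert}: under smoothness, $\Df(T^n)\cap F_qT^{n+1}\subset\Df(F_{q-1}T^n)$. That strictness statement is exactly what your closedness condition was supposed to supply. With it, the argument is immediate: if $\alpha\in\Df(T^n)\cap F_qT^{n+1}$ then $\alpha\in\Df(F_{q-1}T^n)$, so Proposition~\ref{prop:elem-red} gives $[\alpha]_\tGD\in F_{q-1}T^{n+1}$; since $[\alpha]_\tGD\in\Df(T^n)$ and $[\,]_\tGD$ is idempotent, downward induction lands in $F_0T^{n+1}=0$. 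If you prefer to carry out the diagram chase explicitly, the correct invariant is not ``$\gamma\in\ker\ud$'' but ``$\gamma=\ud\sigma$ with $\sigma\in\Syz_q$'': after your first step one actually gets $\redgd_{q+1}(\Df\beta)=\ud\sigma$ for some $\sigma\in\Syz_q$ (the cofactor chosen in~$\redgd$ need not be exactly~$-\beta$), and smoothness gives $\Syz_q=\tSyz_q$, so $\sigma=\udfw\eta$, whence $\ud\sigma=-\udfw\ud\eta\in\udfw T^n_{q-1}$ and $\redgd_q(\ud\sigma)=\ud\sigma'$ with $\sigma'\in\Syz_{q-1}$; compare Lemma~\ref{lem:redtsyz}.
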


\begin{remark}\label{rem:equ-redgd}
This theorem still holds if we replace~$[\,]_\tGD$ by any equivalent filtered
 and idempotent map~$u$ whose kernel is included in~$\Df(T^n)$.
Indeed, in this case~$\ker \mop{Gr}u = \ker \mop{Gr}[\,]_\tGD = \mop{Gr}(\Df
T^n)$ Since~$\ker \mop{Gr}u$ is~$\mop{Gr}(\ker u)$, this implies that~$\ker u =
\Df(T^n)$.  Moreover, the point~(ii) implies that~$[F_q T^{n+1}]_\tGD \subset
F_{q-1}T^{n+1}$ for all~$q > n$.  Since~$[\,]_\tGD$ and~$u$ are equivalent, the
same holds for~$u$.  And since~$u$ is assumed to be idempotent, this implies
that~$u(T^{n+1})\subset F_n T^{n+1}$.
\end{remark}

The hypothesis ``$V(f)$ is smooth'' is equivalent to the fact that~$\Jf$ is a
zero-dimensional ideal, that is~$A/\mop{Jac}f$ is finite-dimensional over~$\K$.
It is also equivalent to the equality of~$\Syz$ and~$\tSyz$, respectively the
syzygies and the trivial syzygies in~$T^n$.  The main step of the proof of
Theorem~\ref{thm:gd-red} is~\cite[Theorem~4.3]{Gri69}:
\begin{theorem}[Dwork, Griffiths]
  If~$V(f)$ is smooth in~$\PP_\K^n$ then~$\Df(T^n)\cap F_q T^{n+1}$ is
  contained in~$\Df(F_{q-1}T^n)$ for all~$q\geq 0$.
  \label{thm:grif-ordcert}
\end{theorem}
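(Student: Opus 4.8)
The plan is to prove the statement by induction on $q$, peeling off the top graded piece and using the de~Rham--Koszul structure recorded in Figure~\ref{fig:rk-cpx}. Concretely, take $\alpha \in \Df(T^n) \cap F_q T^{n+1}$, say $\alpha = \Df \beta$ with $\beta \in T^n$; we may clearly assume $\beta \in F_{q'} T^n$ for some $q' \geq q - 1$, and we want to reduce $q'$ down to $q-1$. Write $\beta = \beta_{q'} + \beta_{<q'}$ with $\beta_{q'} \in T^n_{q'}$. If $q' \geq q$, then the top-degree component of $\Df\beta = \ud\beta - \udfw\beta$ forces $\udfw\beta_{q'} = 0$: indeed $\udfw\beta_{q'}$ lives in $T^{n+1}_{q'+1}$, $\ud\beta_{q'}$ in $T^{n+1}_{q'}$, and all lower terms of $\Df\beta$ sit in $F_{q'}T^{n+1} \subseteq F_qT^{n+1}$ with $q'+1 > q$, so the $T^{n+1}_{q'+1}$-component, which is exactly $-\udfw\beta_{q'}$, must vanish. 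Hence $\beta_{q'}$ is a syzygy, i.e. $\beta_{q'} \in \Syz \cap T^n_{q'}$.

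The smoothness hypothesis now enters through the algebraic fact that $\Syz = \tSyz$, equivalently $H^n\cK(\udf) = 0$ (this is the content of the remark following Theorem~\ref{thm:gd-red}, and also the reason $A/\Jf$ is finite-dimensional). So $\beta_{q'} = \udfw \gamma$ for some $\gamma \in \Omega^{n-1}$, and by homogeneity we may take $\gamma \in T^{n-1}_{q'-1}$. The key trick is then to subtract a coboundary: replace $\beta$ by
\[
  \beta' \eqdef \beta - \Df\gamma = \beta - \ud\gamma + \udfw\gamma.
\]
Since $\Df\circ\Df = 0$ we still have $\Df\beta' = \Df\beta = \alpha$. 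Now examine the degree-$q'$ component of $\beta'$: by construction $\udfw\gamma$ contributes exactly $\beta_{q'}$ in degree $q'$ (as $\udfw\gamma \in T^n_{q'}$), while $-\ud\gamma \in T^n_{q'-1}$ contributes nothing in degree $q'$. Therefore $\beta'_{q'} = \beta_{q'} - \beta_{q'} = 0$, i.e. $\beta' \in F_{q'-1}T^n$. Iterating this descent until the top degree of the chosen $(n)$-form is $q-1$ produces a $\beta'' \in F_{q-1}T^n$ with $\Df\beta'' = \alpha$, which is exactly the claim.

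The main obstacle — and really the crux of the whole argument — is the step where I invoke $\Syz = \tSyz$ in each degree; everything else is bookkeeping with the grading. I expect this to be where smoothness is genuinely used and where one might want to cite \cite[Theorem~4.3]{Gri69} or reprove the relevant Koszul-exactness statement: that when $\partial_0 f, \dots, \partial_n f$ is a regular sequence (equivalently $\Jf$ is zero-dimensional, equivalently $V(f)$ is smooth) the Koszul complex $\cK(\udf)$ is exact away from the top, so in particular $H^n\cK(\udf) = \Syz/\tSyz = 0$. One should also be slightly careful that the homogeneity bookkeeping ($\gamma \in T^{n-1}_{q'-1}$, the degree counts of $\ud\gamma$ versus $\udfw\gamma$) is consistent with the conventions that $\ud$ preserves degree while $\udfw$ raises it by one — but with those conventions fixed the descent goes through verbatim, and the induction on $q'$ terminates because each step strictly lowers the top degree of the auxiliary $n$-form.
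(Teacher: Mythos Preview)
Your approach is the standard Griffiths argument, and since the paper does not give its own proof but simply cites \cite[Theorem~4.3]{Gri69}, there is nothing to compare against beyond noting that this \emph{is} essentially the classical proof: peel off the top graded piece, use Koszul exactness (i.e.\ $\Syz=\tSyz$ under smoothness) to trivialize the resulting syzygy, and descend.

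There is, however, a sign slip in your key step. You set $\beta' = \beta - \Df\gamma = \beta - \ud\gamma + \udfw\gamma$ and then claim $\beta'_{q'}=0$. But $\udfw\gamma = \beta_{q'}$ appears with a \emph{plus} sign in your expression, so as written $\beta'_{q'} = \beta_{q'} + \beta_{q'} = 2\beta_{q'}$, not zero. The fix is to take $\beta' \eqdef \beta + \Df\gamma = \beta + \ud\gamma - \udfw\gamma$; then the degree-$q'$ component is $\beta_{q'} - \udfw\gamma = 0$, the degree-$(q'-1)$ component picks up $\ud\gamma$, and everything else goes through exactly as you describe. With that correction the argument is complete.
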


In the singular case, it is never true that~$\ker [\,]_{\tGD} =
\Df(T^{n})$.  Worse still, 
the cokernel~$T^{n+1}/\ker [\,]_{\tGD}$ is never finite dimensional.  Indeed, we
have
\[ \frac{F_{q} T^{n+1}}{ F_q T^{n+1} \cap \ker[\,]_{\tGD} + F_{q-1}T^{n+1}} \simeq (A/\Jf)_{qN-n-1}, \]
so the quotient is finite dimensional if and only if~$\Jf$ is a zero-dimensional ideal.

\section{Computation of higher order relations}
\label{sec:horel}

\subsection{Construction}

Let~$\Wa_q^1$ be the subspace of~$T^{n+1}_q+T^{n+1}_{q-1}$ defined by
\begin{equation}
  \Wa_q^1 \eqdef \Df(T_{q-1}^{n} ) = \left\{  -\udfw\beta + \ud\beta \st \beta \in T^n_{q-1} \right\}.
\end{equation}
Following the idea developed in Section~\ref{sec:overview}, we define, for~$r\geq 1$ and~$q\geq0$
\[ \Wa_q^{r+1} \eqdef \Wa_q^1 + \Wa_{q+1}^r \cap F_q T^{n+1}. \]
Compared to Section~\ref{sec:overview}, the space~$M_q$ has been replaced
by~$T_q^{n+1}$ and the product~$M_{q}\times M_{q-1}$ by the direct
sum~$T_{q}^{n+1} \oplus T_{q-1}^{n+1}$.

\begin{figure}[tp]
  \[  \begin{tikzcd}
      0 \mathrlap{\ ( = -\udfw\beta_{q+2})} & & & & {} \\
      \beta_{q+2} \arrow{r}{\ud}\arrow{u}{-\udf} & 0 \mathrlap{\ ( = \ud\beta_{q+2} - \udfw\beta_{q+1})}& & & \\
      {} & \beta_{q+1} \arrow{r}{\ud}\arrow{u}{-\udf} & 0 & & \\
      {} & & \beta_{q} \arrow{r}{\ud}\arrow{u}{-\udf} & \alpha_q & \\
      {} & &  & \beta_{q-1} \arrow{r}{\ud}\arrow{u}{-\udf} & \alpha_{q-1}
    \end{tikzcd}
  \]
  \caption{A~$n$-form~$\beta\in T_{q-1}^n + \dotsb + T_{q+2}^n$ such
  that~$\Df\beta \in F_q T^{n+1}$, thus giving an element~$\alpha$
  of~$W_q^4$.}
  \label{fig:wrq}
\end{figure}

\begin{proposition}
  For all~$r\geq 1$ and~$q \geq 0$,
  \[ \Wa_q^r = \Df\big(\sum_{k=1}^r T_{q+k-2}^n\big) \bigcap F_q T^{n+1}. \]
\label{prop:caracW}
\end{proposition}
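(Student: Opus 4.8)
The plan is to prove the identity $\Wa_q^r = \Df\big(\sum_{k=1}^r T_{q+k-2}^n\big)\cap F_q T^{n+1}$ by induction on $r$. For the base case $r=1$, the right-hand side is $\Df(T_{q-1}^n)\cap F_q T^{n+1}$, which equals $\Df(T_{q-1}^n)$ since $\Df$ maps $T_{q-1}^n$ into $F_q T^{n+1}$ (as noted in Section~\ref{sec:setup}, $\Df$ maps $F_qT^n$ to $F_{q+1}T^{n+1}$), and this is exactly the definition of $\Wa_q^1$. So the base case is immediate.

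For the inductive step, I assume the formula holds for $r$ and all $q$, and compute $\Wa_q^{r+1} = \Wa_q^1 + \Wa_{q+1}^r\cap F_q T^{n+1}$. By the induction hypothesis applied at degree $q+1$, the term $\Wa_{q+1}^r$ equals $\Df\big(\sum_{k=1}^r T_{q+k-1}^n\big)\cap F_{q+1}T^{n+1}$, so intersecting once more with $F_q T^{n+1}$ gives $\Df\big(\sum_{k=1}^r T_{q+k-1}^n\big)\cap F_q T^{n+1}$. Adding $\Wa_q^1 = \Df(T_{q-1}^n)$, I want to conclude that the sum equals $\Df\big(T_{q-1}^n + \sum_{k=1}^r T_{q+k-1}^n\big)\cap F_q T^{n+1} = \Df\big(\sum_{k=1}^{r+1}T_{q+k-2}^n\big)\cap F_q T^{n+1}$, which is the desired right-hand side. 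The inclusion $\subseteq$ is clear: $\Df(T_{q-1}^n)\subseteq F_q T^{n+1}$ already, and the second summand is contained in $\Df\big(\sum T^n\big)\cap F_q T^{n+1}$ by construction.

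The reverse inclusion $\supseteq$ is where care is needed, and I expect this to be the main obstacle. Take $\alpha\in \Df\big(\sum_{k=1}^{r+1}T_{q+k-2}^n\big)\cap F_q T^{n+1}$, so $\alpha = \Df\beta$ with $\beta = \beta_{q-1} + \beta_q + \dots + \beta_{q+r-1}$, each $\beta_j \in T_j^n$, and $\alpha \in F_q T^{n+1}$. The obstruction is to split off the lowest-degree piece: write $\beta = \beta_{q-1} + \beta'$ where $\beta' = \beta_q + \dots + \beta_{q+r-1} \in \sum_{k=1}^r T_{q+k-1}^n$. Then $\alpha = \Df\beta_{q-1} + \Df\beta'$. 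Here $\Df\beta_{q-1}\in F_q T^{n+1}$ automatically, hence $\Df\beta' = \alpha - \Df\beta_{q-1}$ lies in $F_q T^{n+1}$ as well; since $\Df\beta' \in \Df\big(\sum_{k=1}^r T_{q+k-1}^n\big)$ by construction and is in $F_{q+1}T^{n+1}$ trivially, it lies in $\Df\big(\sum_{k=1}^r T_{q+k-1}^n\big)\cap F_{q+1}T^{n+1} = \Wa_{q+1}^r$ by the induction hypothesis; and being also in $F_q T^{n+1}$, it lies in $\Wa_{q+1}^r\cap F_q T^{n+1}$. Meanwhile $\Df\beta_{q-1}\in \Wa_q^1$. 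Thus $\alpha = \Df\beta_{q-1} + \Df\beta' \in \Wa_q^1 + \Wa_{q+1}^r\cap F_q T^{n+1} = \Wa_q^{r+1}$, completing the induction. The one point to double-check is that the grading decomposition $\beta = \sum\beta_j$ into homogeneous components is unique and compatible with $\Df$ (which is filtered but not graded), so that the "split off the lowest piece" manoeuvre is legitimate; this follows from the definition of $T^n = \oplus_q T^n_q$ and the fact that $\Df$ raises the filtration degree by at most one.
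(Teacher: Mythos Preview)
Your proof is correct and follows essentially the same approach as the paper's: induction on~$r$, with the inductive step reducing to the identity
\[
  \Df\Big(\sum_{k=1}^{r+1} T_{q+k-2}^n\Big)\cap F_qT^{n+1}
  = \Df(T_{q-1}^n) + \Df\Big(\sum_{k=1}^{r} T_{q+k-1}^n\Big)\cap F_qT^{n+1},
\]
which holds because~$\Df(T_{q-1}^n)\subset F_qT^{n+1}$. The paper states this in one line; you unpack the nontrivial inclusion by splitting off the lowest graded piece of~$\beta$, which is exactly the content of that one line.
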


\begin{proof}
  By induction on~$r$. For~$r = 1$, the claim reduces to~$W_q^1 =
  \Df(T_{q-1}^{n})$, which is the definition.  Then, let us prove that the
  right-hand side satisfies the recurrence relation defining~$W_q^r$, that is:
  \[ \Df\big(\sum_{k=1}^{r+1} T_{q+k-2}^n\big) \bigcap F_q T^{n+1} = \Df(T_{q-1}^{n}) + \Df\big(\sum_{k=1}^r T_{q+k-1}^n\big) \bigcap F_q T^{n+1}, \]
  which follows simply from~$\Df(T_{q-1}^{n}) \subset F_qT^{n+1}$.
\end{proof}

Figure~\ref{fig:wrq} depicts what are elements of~$W_q^r$.

\begin{example}
  With~$f = x y^2 - z^3$, we find that~$W_1^1 = 0$ and
  \[ W_2^1 = \left\langle x^2y, xy^2, y^3, xyz, y^2z, xz^2, yz^2, z^3, 1 \right\rangle\omega. \]
  Thus~$W_2^1 \cap T^3_1=\langle \omega \rangle$ and~$W_1^2=\langle \omega \rangle$.
\end{example}

\subsection{Reductions of order~$r$}

The higher order analogue of~$\redgd_q$, denoted~$\red_q^r$ is  the linear
map~$T^{n+1}\to T^{n+1}$ defined by
\[ \red_q^r \alpha \eqdef \mop{rem}(\alpha, W_q^r), \]
for~$\alpha$ in~$T_q^{n+1}$, and~$\red_q^r \alpha = \alpha$ for~$\alpha\in
T_k^r$ with~$k\neq q$.  As for the Griffiths-Dwork reduction, we define
for~$\alpha$ in~$F_q T^{n+1}$
\[ [\alpha]_r \eqdef \red_1^r \circ \dotsb\circ \red_q^r(\alpha). \]
This reduction map enjoys the following properties, to be compared with
Proposition~\ref{prop:elem-red} relative to~$[\,]_\tGD$.

\begin{proposition}
  Let~$r\geq 1$. The map~$[\,]_r$ is filtered and idempotent, its kernel is
  included in~$\Df(T^n)$ and~$[\,]_{r+1}\circ[\,]_r = [\,]_{r+1}$.  Moreover, for all~$q \geq 0$ and~$\alpha \in F_q
  T^{n+1}$, $[\alpha]_r \in {F_{q-1} T^{n+1}}$ if and only if~$\alpha \in \Df(
  F_{q+r-2}T^{n+1} ) + F_{q-1}T^{n+1}$.
  \label{thm:caracredr}
\end{proposition}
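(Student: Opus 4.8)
The plan is to establish each assertion by imitating the corresponding argument for $[\,]_\tGD$ in Proposition~\ref{prop:elem-red}, using Proposition~\ref{prop:caracW} as the replacement for the bare definition of $W_q^1$. The map $[\,]_r$ is filtered and idempotent for the same formal reason as $[\,]_\tGD$: each $\red_q^r$ is idempotent (because $\mop{rem}(\,\cdot\,, W_q^r)$ is), acts as the identity outside degree $q$, and respects the filtration since $W_q^r \subset F_q T^{n+1}$; composing $\red_1^r\circ\dotsb\circ\red_q^r$ preserves these properties, and the telescoping structure makes $[\,]_r$ idempotent exactly as in the Griffiths-Dwork case. For the kernel inclusion, note that by construction $\alpha - \red_q^r(\alpha) \in W_q^r \subset \Df(T^n)$ (again via Proposition~\ref{prop:caracW}, since $W_q^r = \Df(\sum_k T^n_{q+k-2}) \cap F_q T^{n+1}$), so $\alpha \equiv [\alpha]_r$ modulo $\Df(T^n)$, and in particular $\ker[\,]_r \subset \Df(T^n)$.

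For the identity $[\,]_{r+1}\circ[\,]_r = [\,]_{r+1}$, the key point is the inclusion $W_q^r \subset W_q^{r+1}$, which is immediate from the recursive definition $W_q^{r+1} = W_q^1 + (W_{q+1}^r \cap F_q T^{n+1})$ together with $W_q^1 = \Df(T^n_{q-1}) \subset F_q T^{n+1}$. From this inclusion one gets $\red_q^{r+1}\circ\red_q^r = \red_q^{r+1}$ at the level of each degree: indeed $\red_q^r(\alpha) \equiv \alpha$ modulo $W_q^r \subset W_q^{r+1}$, so $\mop{rem}(\red_q^r(\alpha), W_q^{r+1}) = \mop{rem}(\alpha, W_q^{r+1})$. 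The only subtlety is that $\red_q^r$ may feed into $\red_{q-1}^{r+1}$ a form whose degree-$q$ part has changed; but since $\red_q^r$ already lands in $F_q T^{n+1}$ and the composition $[\,]_{r+1}$ processes degrees from high to low in the same order as $[\,]_r$, one checks inductively on $q$ that $[\red_q^r\circ\dotsb\circ\red_k^r(\alpha)]_{r+1} = [\alpha]_{r+1}$; this is a routine bookkeeping induction on the number of reduction steps, using that each $\red_j^{r+1}$ absorbs any $W_j^r$-discrepancy.

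The last equivalence is the analogue of the final clause of Proposition~\ref{prop:elem-red} and is where the characterization of $W_q^r$ from Proposition~\ref{prop:caracW} does real work. In one direction, suppose $[\alpha]_r \in F_{q-1}T^{n+1}$. By construction $[\alpha]_r = \alpha + w$ with $w \in \sum_{k} W_k^r \subset \Df(\sum_j T^n_j)$, and tracking the degrees one sees $w$ can be taken in $\Df(F_{q+r-2}T^n)$ — the top reduction $\red_q^r$ subtracts an element of $W_q^r = \Df(\sum_{k=1}^r T^n_{q+k-2})\cap F_qT^{n+1}$, whose preimage lives in degrees up to $q+r-2$, and all subsequent reductions only involve strictly lower degrees — so $\alpha \in \Df(F_{q+r-2}T^n) + F_{q-1}T^{n+1}$. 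Conversely, if $\alpha = \Df\beta + \varepsilon$ with $\beta \in F_{q+r-2}T^n$ and $\varepsilon \in F_{q-1}T^{n+1}$, then the degree-$q$ component satisfies $\alpha_q = (\Df\beta)_q = \ud\beta_q - \udfw\beta_{q-1}$ with $\beta \in \sum_{k=1}^r T^n_{q+k-2}$, hence $\alpha_q$ is the top component of an element of $\Df(\sum_{k=1}^r T^n_{q+k-2}) \cap F_q T^{n+1} = W_q^r$; therefore $\mop{rem}(\alpha_q, W_q^r)$ has degree $< q$ (more precisely $\red_q^r(\alpha) \in F_{q-1}T^{n+1}$), and iterating — after one step we are again in a situation of the same shape, now with parameter $q-1$ and a shifted $\beta$ — yields $[\alpha]_r \in F_{q-1}T^{n+1}$. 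The main obstacle I anticipate is making the degree-tracking in this last part fully precise: one must verify that the ``$\beta$ lives in degrees $\le q+r-2$'' constraint is exactly preserved (not merely weakened) under a single reduction step, so that Proposition~\ref{prop:caracW} can be reapplied at each stage; this is the crux and deserves a careful induction on $q$ rather than a hand-wave.
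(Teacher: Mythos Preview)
Your plan is correct and follows the same approach as the paper: filteredness, idempotence, and $\ker[\,]_r\subset\Df(T^n)$ are declared straightforward; the identity $[\,]_{r+1}\circ[\,]_r=[\,]_{r+1}$ is deduced from $W_q^r\subset W_q^{r+1}$; the forward direction of the equivalence uses $[\alpha]_r\equiv\red_q^r(\alpha)\pmod{F_{q-1}T^{n+1}}$ and $\red_q^r(\alpha)\equiv\alpha\pmod{W_q^r}$ together with Proposition~\ref{prop:caracW}; and the converse splits $\beta\in F_{q+r-2}T^n$ into a part $\beta'\in\sum_{k=1}^r T^n_{q+k-2}$ and a tail in $F_{q-2}T^n$, so that $\Df\beta'\in W_q^r$ and $\red_q^r(\alpha)\in F_{q-1}T^{n+1}$.

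The one place where you diverge from the paper is the end of the converse argument. You propose to ``iterate'' and worry that the constraint $\beta\in F_{q+r-2}T^n$ must be preserved through an induction on~$q$. The paper avoids this entirely: once $\red_q^r(\alpha)\in F_{q-1}T^{n+1}$, the remaining maps $\red_{q-1}^r,\dotsc,\red_1^r$ each send $F_{q-1}T^{n+1}$ into itself (they are filtered), so $[\alpha]_r=\red_1^r\circ\dotsb\circ\red_{q-1}^r(\red_q^r(\alpha))\in F_{q-1}T^{n+1}$ immediately. No induction, and hence no ``main obstacle'', is needed.
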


\begin{proof}
  It is straightforward that the map~$[\,]_r$ is filtered and idempotent.
  Since~$W_q^r\subset \Df(T^n)$, for all~$q$, we have~$\ker [\,]_r\subset \Df(T^n)$.
  And since~$W_q^{r} \subset W_q^{r+1}$ we have~$[\,]_{r+1}\circ[\,]_r=[\,]_{r+1}$.

  Let~$\alpha\in F_q T^{n+1}$ such that~$[\alpha]_r \in {F_{q-1} T^{n+1}}$.
  From the definition, $[\alpha]_r \equiv \red_q^r \alpha \pmod{F_{q-1} T^{n+1}}$
  and~$\red_q^r \alpha \equiv \alpha \pmod{W_q^r}$.
  So~$\alpha \equiv 0 \pmod{W_q^r + F_{q-1}T^{n+1}}$ and~$\alpha \in \Df( F_{q+r-2}T^{n+1} ) + F_{q-1}T^{n+1}$.

  Conversely, let us assume that~$\alpha = \Df\beta + \alpha'$, with~$\beta$
  in~$F_{q+r-2}T^{n+1}$ and~$\alpha'$ in~$F_{q-1}T^{n+1}$.  The form~$\beta$
  splits as~$\beta' + \varepsilon$, with~$\beta' \in \sum_{k=1}^r T_{q+k-2}^{n}$
  and~$\varepsilon\in F_{q-2}T^n$.
  We check that~$\Df\beta' \in F_qT^{n+1}$, so~$\Df\beta'\in W_q^r$, by Proposition~\ref{prop:caracW}.
  And~$\red_q^r(\Df\beta') \in F_{q-1}T^{n+1}$, by definition of~$\red_q^r$.
  Thus
  \[ \red_q^r(\alpha) = \red_q^r(\Df\beta') + \red_q^r(\Df\varepsilon + \alpha')  \in F_{q-1}T^{n+1}, \]
  and~$[\alpha]_r$, which equals~$[\red_q^r(\alpha)]_r$, is in~$F_{q-1}T^{n+1}$ as well.
\end{proof}

\begin{corollary}\label{coro:exhaus-red}
  $\Df(T^n) = \bigcup_{r\geq1} \ker[\,]_r$.
\end{corollary}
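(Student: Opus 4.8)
The plan is to prove Corollary~\ref{coro:exhaus-red} by a double inclusion, using Proposition~\ref{thm:caracredr} as the main tool together with the characterization of $\ker[\,]_r$ by the degree-drop criterion. First I would establish the easy inclusion $\bigcup_{r\geq1}\ker[\,]_r \subseteq \Df(T^n)$: this is immediate since each $[\,]_r$ has $\ker[\,]_r \subseteq \Df(T^n)$ by Proposition~\ref{thm:caracredr}, and a union of subsets of $\Df(T^n)$ is again contained in $\Df(T^n)$.

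For the reverse inclusion, I would take an arbitrary $\alpha \in \Df(T^n)$ and show $\alpha \in \ker[\,]_r$ for $r$ large enough (depending on $\alpha$). Write $\alpha = \Df\beta$ for some $\beta \in T^n$; since $\beta$ has finitely many graded components, $\beta \in F_m T^n$ for some integer $m$, and hence $\alpha \in \Df(F_m T^n) \subseteq F_{m+1}T^{n+1}$. The strategy is to apply the ``only if'' direction of the last statement of Proposition~\ref{thm:caracredr}, i.e. the implication from $\alpha \in \Df(F_{q+r-2}T^{n+1}) + F_{q-1}T^{n+1}$ to $[\alpha]_r \in F_{q-1}T^{n+1}$, repeatedly: one degree at a time, peeling off the top graded component. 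Concretely, I would argue by descending induction on the top degree of $\alpha$. If $\alpha \in F_q T^{n+1}$ with $q \geq 1$, choose $r$ with $q + r - 2 \geq m$ (so $r \geq m - q + 2$); then $\beta \in F_m T^n \subseteq F_{q+r-2}T^{n+1}$ in the sense needed — more precisely $\Df\beta$ lies in $\Df(F_{q+r-2}T^n)$ — and taking $\alpha' = 0 \in F_{q-1}T^{n+1}$ gives $\alpha = \Df\beta + \alpha'$ with $\beta \in F_{q+r-2}T^n$ and $\alpha' \in F_{q-1}T^{n+1}$. Proposition~\ref{thm:caracredr} then yields $[\alpha]_r \in F_{q-1}T^{n+1}$. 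But $[\alpha]_r$ is equivalent to $\alpha$ modulo $\Df(T^n)$ and in fact $\alpha - [\alpha]_r \in \Df(T^n)$, so $[\alpha]_r$ is again in $\Df(T^n) \cap F_{q-1}T^{n+1}$; now repeat with $[\alpha]_r$ in place of $\alpha$ and one lower degree, using that $[\,]_{r'}\circ[\,]_r = [\,]_{r'}$ for $r' \geq r$ to stay within a single (sufficiently large) reduction map. After at most $q$ steps we reach $F_0 T^{n+1}$, and since $\Df(T^n) \cap T^{n+1}_0 = 0$ (the degree-$0$ part of $T^{n+1}$ is $\{0\}$ because $\Df$ raises degree, or more simply because $T^{n+1}_0 = 0$ as $0 \cdot N - n - 1 < 0$), we conclude $[\alpha]_r = 0$, i.e. $\alpha \in \ker[\,]_r$.

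The bookkeeping that requires care is choosing $r$ uniformly large enough at the outset so that the same map $[\,]_r$ works for all $q$ steps of the descent, and checking that applying $[\,]_r$ does not increase the top degree (which is guaranteed by $[\,]_r$ being filtered, Proposition~\ref{thm:caracredr}) so the induction genuinely terminates. The cleanest phrasing: fix $r = m + 2$ where $\alpha \in F_m T^{n+1}$; then for every $q \leq m+1$ one has $q + r - 2 = q + m \geq m \geq \ldots$, so the hypothesis of the degree-drop criterion is met at each stage with $\beta$ ranging in $F_m T^n \subseteq F_{q+r-2}T^n$, and iterating drives the reduction down to degree $0$, hence to $0$.

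I expect the main (though modest) obstacle to be the termination argument: one must verify that each application of $[\,]_r$ strictly lowers the achievable top degree when the current element lies in $\Df(F_m T^n)$, and that this is compatible with keeping $r$ fixed. This is exactly what the equivalence in Proposition~\ref{thm:caracredr} provides, once one observes that the membership $\alpha \in \Df(F_{q+r-2}T^{n+1}) + F_{q-1}T^{n+1}$ is monotone in $r$, so enlarging $r$ only helps; thus no genuine difficulty arises, only the need to state the induction carefully.
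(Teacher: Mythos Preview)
Your main argument is correct and is essentially the paper's own proof: induct on the least~$q$ with~$\alpha\in F_qT^{n+1}$, use Proposition~\ref{thm:caracredr} to drop one filtration level, and then invoke~$[\,]_s\circ[\,]_r=[\,]_s$ for~$s\geq r$ to absorb all steps into a single sufficiently large reduction map. The easy inclusion and the base case are handled exactly as you describe.

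Your ``cleanest phrasing,'' however, has a gap. You claim that fixing~$r=m+2$ once (with~$\beta\in F_mT^n$) suffices, because at every stage the certificate stays in~$F_mT^n\subset F_{q+r-2}T^n$. That is not justified: after one application of~$[\,]_r$ you have~$[\alpha]_r=\Df\beta'$ with~$\beta'=\beta-\gamma$, where~$\gamma$ arises from the elements of~$W_k^r$ subtracted during the reduction. By Proposition~\ref{prop:caracW} those elements lie in~$\Df\big(\sum_{j=1}^r T^n_{k+j-2}\big)$, so~$\gamma\in F_{q+r-2}T^n$ and hence~$\beta'\in F_{q+r-2}T^n$, not~$F_mT^n$. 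This is exactly one degree too high to feed back into the criterion at level~$q-1$ with the \emph{same}~$r$. Without something like Theorem~\ref{thm:degen} you cannot replace~$\beta'$ by a certificate of lower degree, so the fixed-$r$ descent stalls. The version you sketched just before --- allowing~$r$ to grow at each step and collapsing via the composition identity --- is what actually works, and it is precisely what the paper does.
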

\begin{proof}
  Let~$\beta\in T^n$ such that~$\Df\beta \neq 0$. Let~$q\geq 0$ be the least integer such that~$\Df\beta \in F_qT^{n+1}$.
  Let~$r\geq 1$ such that~$\beta\in F_{q+r-2} T^{n+1}$.
  By Proposition~\ref{thm:caracredr}, $[\Df\beta]_r$ is in~$F_{q-1}T^n$, and it is also in~$\Df(T^n)$ because~$\Df\beta\equiv[\Df\beta]_r$
  modulo~$\Df(T^n)$.
  By induction on~$q$, there exists an~$s\geq r$ such that~$\left[ [\Df\beta]_r \right]_s = 0$.
  Since~$[\,]_s\circ[\,]_r = [\,]_s$, the result follows.
\end{proof}

\begin{remark}
  The reductions~$[\,]_\tGD$ and~$[\,]_1$ do not necessarily coincide, but they
  are equivalent filtered maps.
\end{remark}

Thus, we have a family of finer and finer reductions which generalize the
Griffiths-Dwork reduction and which are exhaustive in the sense that they
reduce to zero every~$\Df\beta$ if~$r$ is large enough.  However, two problems
remains. The first on is practical: as defined, the computation of~$[\,]_r$,
for a given~$r$, involves the resolution of huge linear systems, both when
computing the spaces~$W_q^r$ and when computing~$\red_q^r$.  This is in
contrast with~$[\,]_\tGD$ which only involve the computation of a \gro\ and
reductions modulo it for computing~$\redgd_q$. The~\S\ref{sec:fastcomp}
describe a faster way to compute~$[\,]_r$.  The second problem is theoretical:
how to set the parameter~$r$?  This is addressed in
Section~\ref{sec:dimcathm}.

\subsection{Faster computation}\label{sec:fastcomp}
There are two ingredient for computing~$[\,]_r$ faster than with plain linear
algebra.  The first is the use of~$\redgd$, whose implementation is efficient
and which readily perform a great deal of reductions.  Secondly, we discard
trivial syzygies, as explained in~\S\ref{sec:overview:syz}.

Let~$A_q$ be a complementary subspace of~$\tSyz_q$ in~$\Syz_q$, that
is~$\Syz_q$ equals~$\tSyz_{q} \oplus A_{q}$.  Let~$X_q^1 \eqdef \ud A_{q-1}$
and,  for all~$q\geq0$ and~$r\geq1$,
\[ X_q^{r+1} \eqdef \ud A_{q-1} + \redgd_q\left( X_{q+1}^r\cap F_q T^{n+1} \right). \]
Since~$\ud A_{q-1} = \Df(A_{q-1})$, it is clear that~$X_q^1 \subset \Df(F_{q-1}
T^n)$, and by induction on~$r$, we obtain that~$X_q^r \subset \Df(F_{q+r-2}
T^{n})$. Moreover, we have~$\redgd_q \alpha = \alpha$ for all~$q$ and all~$\alpha\in X_q^r$.
Finally, let~$\rho_q^r : T^{n+1}\to T^{n+1}$ the linear map defined by
\[ \rho_q^r(\alpha) \eqdef \mop{rem}(\redgd_q(\alpha), X_q^r), \]
for~$\alpha$ in~$T_q^{n+1}$, and~$\rho_q^r(\alpha)  = \alpha$ for~$\alpha\in
T_k^r$ with~$k\neq q$.
For~$\alpha\in F_qT^{n+1}$ we define
\[ [\alpha]_r' \eqdef \rho_1^r \circ \dotsb\circ \rho_q^r(\alpha). \]

This paragraph aims at proving the following:
\begin{theorem}
  For all~$r\geq1$, the map~$[\,]'_r$ is filtered and idempotent, its kernel is
  included in~$\Df(T^n)$ and~$[\,]'_{r+1}\circ[\,]'_r = [\,]'_{r+1}$.  Moreover,
  it is equivalent to~$[\,]_r$, in particular, for all~$q \geq 0$ and~$\alpha
  \in F_q T^{n+1}$, $[\alpha]'_r \in {F_{q-1} T^{n+1}}$ if and only if~$\alpha
  \in \Df( F_{q+r-2}T^{n+1} ) + F_{q-1}T^{n+1}$.
  \label{thm:fastcomp}
\end{theorem}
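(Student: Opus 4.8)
The key is that $[\,]'_r$ is built by the same pattern as $[\,]_r$, except that $W_q^r$ is replaced by the subspace $X_q^r + \udfw T^n$ and $\rem(\cdot,W_q^r)$ is refactored as the composition $\rem(\redgd_q(\cdot),X_q^r)$. So the plan is to show that the ``effective relation space'' $X_q^r + \udfw T^n$ coincides with $W_q^r$ (or at least is interchangeable with it as far as the $\rem$ operation is concerned), after which everything transfers from Proposition~\ref{thm:caracredr} and the earlier bookkeeping. First I would establish the elementary formal properties: $[\,]'_r$ is filtered (each $\rho_q^r$ preserves $F_qT^{n+1}$, since $\redgd_q$ does and $X_q^r \subset F_qT^{n+1}$ because $X_q^r \subset \Df(F_{q+r-2}T^n) \cap F_qT^{n+1}$ — one must check $\ud A_{q-1} \subset F_qT^{n+1}$ and that the inductive $\redgd_q(X_{q+1}^r \cap F_qT^{n+1})$ stays in $F_qT^{n+1}$, which is immediate since $\redgd_q$ is filtered); idempotence (because $\rho_q^r \alpha = \rem(\redgd_q\alpha,X_q^r)$ and $\redgd_q$ is idempotent with $\redgd_q$ fixing every element of $X_q^r$, and $\rem(\cdot,X_q^r)$ is idempotent on the image of $\redgd_q$, so a second pass changes nothing — here the fact $\redgd_q\alpha = \alpha$ for $\alpha \in X_q^r$ is what makes the two operations commute appropriately); and $\ker[\,]'_r \subset \Df(T^n)$ (since $X_q^r \subset \Df(T^n)$ and $\ker\redgd_q \subset \Df(T^n)$, and $\rho_q^r\alpha - \alpha \in X_q^r + \ker\redgd_q$-type terms).

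The heart of the argument is the equivalence $[\,]'_r \sim [\,]_r$ as filtered maps, for which by the criterion at the end of Section~4 it suffices to show $\mop{Gr}[\,]'_r = \mop{Gr}[\,]_r$, i.e.\ that in the top graded piece $\rho_q^r$ and $\red_q^r$ agree modulo $F_{q-1}T^{n+1}$. Concretely, for $\alpha \in T_q^{n+1}$ one has $\rho_q^r\alpha \equiv \alpha \pmod{X_q^r + \udfw T^n}$ (the $\udfw T^n$ coming from the $\redgd_q$ step, via $\redgd_q\alpha = \rem(\alpha,\udfw T^n) + \ud\beta$), and $\red_q^r\alpha \equiv \alpha \pmod{W_q^r}$; so I need that $X_q^r + \udfw T^n_{q}$ and $W_q^r$ have the same image in $T_q^{n+1}/F_{q-1}T^{n+1}$, equivalently that $(X_q^r + \udfw T_q^n) + F_{q-1}T^{n+1} = W_q^r + F_{q-1}T^{n+1}$. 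One inclusion: $X_q^r \subset \Df(\sum_{k=1}^r T_{q+k-2}^n) \cap F_qT^{n+1} = W_q^r$ by Proposition~\ref{prop:caracW}, and $\udfw T_q^n = \Df(T_q^n) + F_q$-stuff... more carefully, $\udfw T_q^n \subset \Df(T_q^n) + \ud T_q^n$ and $\ud T_q^n \subset T_q^{n+1} \subset F_qT^{n+1}$ while $\Df(T_q^n) \subset W_{q+1}^1 \subset$ ... hmm, rather one uses that $-\udfw\beta_q = \Df\beta_q - \ud\beta_q$ with $\Df\beta_q \in W_{q+1}^r$ restricted suitably — the cleaner route is to observe $\udfw T^n_q \subset W_q^{r}$-modulo-$F_{q-1}$ fails in general, so instead I keep the $\udfw T^n$ term explicitly and note that $W_q^r$ already contains $\Df(T^n_{q-1}) = W_q^1$, whose degree-$q$ part is $-\udfw T^n_{q-1} \subset \udfw T^n_q$... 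I would reconcile this by working with $W_q^1 + X_q^r$ throughout and proving $W_q^1 + X_q^r = W_q^r$ by induction on $r$ using the defining recurrences and the splitting $\Syz_q = \tSyz_q \oplus A_q$ together with the observation from \S\ref{sec:overview:syz} that differentials of trivial syzygies already lie in $W_q^1$.

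The main obstacle I expect is precisely this last identity $W_q^1 + X_q^r = W_q^r$: one must match the inductive step $W_q^{r+1} = W_q^1 + (W_{q+1}^r \cap F_qT^{n+1})$ against $X_q^{r+1} = \ud A_{q-1} + \redgd_q(X_{q+1}^r \cap F_qT^{n+1})$, and the delicate point is that passing from $W_{q+1}^r$ to $X_{q+1}^r$ loses the $\udfw T^n$ and trivial-syzygy contributions but these are exactly what $\redgd_q$ (reduction modulo $\udfw T^n$, i.e.\ modulo the Jacobian ideal) reabsorbs into $W_q^1$: concretely $\redgd_q(\Df\beta') = \Df\beta' + \Df\gamma$ with $\Df\gamma \in W_q^1$, and any element of $W_{q+1}^r \cap F_qT^{n+1}$ decomposes as a genuine $X$-contribution plus a trivial-syzygy differential plus a $\udfw T^n$ term, the latter two landing in $W_q^1$ after applying $\redgd_q$. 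Making this decomposition precise — i.e.\ that $W_{q+1}^r = \udfw T^n_{q+1} + \ud\tSyz_{q} + X_{q+1}^r + F_q T^{n+1}$-error, or an analogous clean statement obtained by unwinding Proposition~\ref{prop:caracW} — is the real work; once it is in hand, the equivalence $[\,]'_r \sim [\,]_r$ follows, and then the final ``if and only if'' clause is immediate from Proposition~\ref{thm:caracredr} applied to $[\,]_r$ together with the fact (end of Section~4) that equivalent filtered idempotent maps have the same kernel and the same degree-dropping behaviour.
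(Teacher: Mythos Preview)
Your approach is essentially the paper's: the identity $W_q^1 + X_q^r = W_q^r$ you aim to prove by induction is equivalent to the paper's Proposition~\ref{prop:wrqsimplified} ($W_q^r = X_q^r + G_q + \ud\tSyz_{q-1}$ with $G_q = \ker\redgd_q$, using that $W_q^1 = G_q + \ud\Syz_{q-1}$), and what you describe as ``$\redgd_q$ reabsorbing trivial-syzygy contributions into $W_q^1$'' is exactly the paper's Lemma~\ref{lem:redtsyz}, $\redgd_q(\ud\tSyz_q)\subset\ud\Syz_{q-1}$. The paper's choice to phrase things via $G_q$ rather than $\udfw T^n$ is what dissolves the index confusion you ran into, since $\alpha-\redgd_q\alpha\in G_q$ by idempotence and hence $\rho_q^r(\alpha)\equiv\rem(\alpha,G_q+X_q^r)$ modulo $F_{q-1}T^{n+1}$ immediately.
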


\begin{corollary}\label{coro:exhausred-fast}
  $\Df(T^n) = \bigcup_{r\geq1} \ker[\,]'_r$.
\end{corollary}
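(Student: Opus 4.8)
The plan is to imitate the proof of Corollary~\ref{coro:exhaus-red} essentially line for line, replacing $[\,]_r$ by $[\,]'_r$ and Proposition~\ref{thm:caracredr} by Theorem~\ref{thm:fastcomp}; all of the substance has already been absorbed into Theorem~\ref{thm:fastcomp}, so what remains is formal bookkeeping. The inclusion $\bigcup_{r\geq1}\ker[\,]'_r\subset\Df(T^n)$ is immediate, since Theorem~\ref{thm:fastcomp} asserts $\ker[\,]'_r\subset\Df(T^n)$ for every $r$. For the converse inclusion I would prove, by induction on $m$, the assertion that every $\gamma\in\Df(T^n)\cap F_mT^{n+1}$ satisfies $[\gamma]'_s=0$ for some $s\geq1$; applying this with $\gamma=\Df\beta$ for an arbitrary $\beta\in T^n$ then gives $\Df\beta\in\ker[\,]'_s$, hence $\Df(T^n)\subset\bigcup_{r\geq1}\ker[\,]'_r$.

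For the induction, the assertion is trivial when $m\leq0$, because then $F_mT^{n+1}=0$ (indeed $T^{n+1}_{q'}=A_{q'N-n-1}\,\omega=0$ for $q'\leq0$), so $\gamma=0$ and $[\gamma]'_1=0$. For $m\geq1$, write $\gamma=\Df\beta$ with $\beta\in T^n$ and pick $r\geq1$ large enough that $\beta\in F_{m+r-2}T^n$. Then $\gamma\in\Df(F_{m+r-2}T^n)+F_{m-1}T^{n+1}$, so the equivalence in Theorem~\ref{thm:fastcomp} (applied with $q=m$ and $\alpha=\gamma$) gives $[\gamma]'_r\in F_{m-1}T^{n+1}$. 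Since $[\,]'_r$ is linear and idempotent with kernel contained in $\Df(T^n)$, we have $[\,]'_r(\gamma-[\gamma]'_r)=[\gamma]'_r-[\gamma]'_r=0$, so $\gamma-[\gamma]'_r\in\ker[\,]'_r\subset\Df(T^n)$, whence $[\gamma]'_r\in\Df(T^n)\cap F_{m-1}T^{n+1}$. The induction hypothesis yields $s'\geq1$ with $\bigl[[\gamma]'_r\bigr]'_{s'}=0$, and setting $s=\max(r,s')$ the compatibilities $[\,]'_s\circ[\,]'_r=[\,]'_s$ and $[\,]'_s\circ[\,]'_{s'}=[\,]'_s$---which follow from $[\,]'_{k+1}\circ[\,]'_k=[\,]'_{k+1}$ by iteration---give $[\gamma]'_s=[\,]'_s([\gamma]'_r)=[\,]'_s\bigl(\bigl[[\gamma]'_r\bigr]'_{s'}\bigr)=0$, as wanted.

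I do not anticipate a real obstacle: the corollary is deduced from Theorem~\ref{thm:fastcomp} in exactly the way Corollary~\ref{coro:exhaus-red} is deduced from Proposition~\ref{thm:caracredr}. The only point calling for a moment's care is the termination of the descent on the filtration index $m$, guaranteed by the vanishing $F_mT^{n+1}=0$ for $m\leq0$; everything else is formal manipulation of idempotence, the kernel inclusion $\ker[\,]'_r\subset\Df(T^n)$, and the relations $[\,]'_{k+1}\circ[\,]'_k=[\,]'_{k+1}$ furnished by Theorem~\ref{thm:fastcomp}.
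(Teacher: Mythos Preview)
Your proposal is correct and follows exactly the route the paper intends: the paper's proof of this corollary simply reads ``The proof is the same as Corollary~\ref{coro:exhaus-red},'' and what you have written is a detailed transcription of that argument with $[\,]_r$ replaced by $[\,]'_r$ and Proposition~\ref{thm:caracredr} replaced by Theorem~\ref{thm:fastcomp}. The only difference is that you make the choice $s=\max(r,s')$ explicit, whereas the paper leaves this implicit in ``there exists an~$s\geq r$''; the content is identical.
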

\begin{proof}
  The proof is the same as Corollary~\ref{coro:exhaus-red}.
\end{proof}

The map~$[\,]'_r$ is easier to compute than~$[\,]_r$ because the linear algebra
involved in the computation of~$X_q^r$ arises in much lower dimension than the
one for~$W_q^r$.  It comes at the cost of using~$\redgd$ and of computing the
space~$A_q$ of non trivial syzygies, which can be done efficiently through
\gros computations, see Section~\ref{sec:impl}.

The main fact which allows to discard trivial syzygies is the following:
\begin{lemma}
  $\redgd_q(\ud \tSyz_q) \subset \ud \Syz_{q-1}$, for all~$q\geq 0$.
  \label{lem:redtsyz}
\end{lemma}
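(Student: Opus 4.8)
The plan is to unwind the definition of $\redgd_q$ on an element of the form $\ud\beta$ with $\beta\in\tSyz_q$, observe that the remainder modulo $\udfw T^n$ vanishes, so that $\redgd_q(\ud\beta)$ is exactly $\ud\beta'$ for the cofactor $n$-form $\beta'$ produced by the elementary reduction step, and finally notice that $\beta'$ differs from an honest syzygy by an exact form.

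First I would pick $\beta\in\tSyz_q$. Since $\tSyz$ is $\udfw\Omega^{n-1}\cap T^n$, one may write $\beta=\udfw\gamma$; extracting the homogeneous component of the appropriate degree, $\gamma$ can be taken in $T^{n-1}_{q-1}$ (degree $(q-1)N$), so that $\ud\gamma\in T^n_{q-1}$. The anticommutation $\ud(\udfw\gamma)=-\udfw\ud\gamma$, the very identity used to define the complex $T^\bullet$, then gives $\ud\beta=-\udfw\ud\gamma$. In particular $\ud\beta\in\udfw T^n$, hence $\rem(\ud\beta,\udfw T^n)=0$.

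Next I would invoke the defining identity of $\redgd_q$ from Equation~\eqref{equ:defredq}: writing $\ud\beta=\rem(\ud\beta,\udfw T^n)+\udfw\beta'$ with $\beta'\in T^n_{q-1}$, the vanishing of the remainder forces $\udfw\beta'=\ud\beta=-\udfw\ud\gamma$, and $\redgd_q(\ud\beta)=\rem(\ud\beta,\udfw T^n)+\ud\beta'=\ud\beta'$. From $\udfw(\beta'+\ud\gamma)=0$ and the fact that both $\beta'$ and $\ud\gamma$ lie in $T^n_{q-1}$, we get $\beta'+\ud\gamma\in\Syz_{q-1}$. Since $\ud^2\gamma=0$, this yields
\[ \redgd_q(\ud\beta)=\ud\beta'=\ud(\beta'+\ud\gamma)\in\ud\Syz_{q-1}, \]
which is the assertion. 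The case $q=0$ is vacuous because $T^n_0=0$.

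There is no real obstacle here; the only points requiring a little care are the degree bookkeeping (so that $\gamma$, $\ud\gamma$ and $\beta'$ all sit in the degree-$(q-1)N$ stratum, which is what makes $\beta'+\ud\gamma$ an element of $\Syz_{q-1}$ rather than merely of $\Syz$), and the remark that the cofactor $\beta'$ of the reduction step is determined modulo $\Syz_{q-1}$ as soon as the remainder is zero — this is immediate from the defining equation $\ud\beta=\rem(\ud\beta,\udfw T^n)+\udfw\beta'$ and holds for any admissible linear choice of $\beta'$ (concretely, $\beta'=\sum_i b_i\xi_i$ arising from the multivariate division $a=\rem(a,\Jf)+\sum_i b_i\partial_if$ of the numerator $a$ of $\ud\beta$, which lies in $\Jf$ precisely because that remainder vanishes).
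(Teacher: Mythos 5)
Your proof is correct and follows essentially the same route as the paper's: write the trivial syzygy as $\udfw\gamma$, use the anticommutation $\ud(\udfw\gamma)=-\udfw\ud\gamma$ to see the remainder modulo $\udfw T^n$ vanishes, note the cofactor of the elementary reduction differs from $-\ud\gamma$ by an element of $\Syz_{q-1}$, and apply $\ud$ using $\ud^2=0$. The only differences are cosmetic (variable names, and your slightly more explicit degree bookkeeping).
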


\begin{proof}
  Recall that~$\tSyz_q = \udfw T_{q-1}^n$, so let~$\beta\in T_{q-1}^n$.  The
  differential anti-commutes with~$\udfw$ so that~$\ud(\udfw\beta) =
  -\udfw\ud\beta$.  By definition~$\redgd_q(\ud(\udfw\beta))$ is thus~$\ud
  \gamma$ for some~$\gamma\in T_{q-1}^n$ such that~$\udfw\gamma =
  -\udfw\ud\beta$.  Thus~$\gamma = -\ud\beta + \varepsilon$, for
  some~$\varepsilon\in\Syz_{q-1}$. Since~$\ud(\ud\beta) = 0$, we obtain
  that~$\redgd_q(\ud(\udfw\beta)) = \ud\varepsilon$.
\end{proof}

Let~$G_q\subset T^{n+1}$ be the kernel of~$\redgd_q$. It is a subspace
of~$T_q^{n+1}\oplus T_{q-1}^{n+1}$.

\begin{proposition}
  $W_q^r = X_q^r + G_q + \ud \tSyz_{q-1}$, for all~$q\geq 0$ and~$r\geq 1$.
  \label{prop:wrqsimplified}
\end{proposition}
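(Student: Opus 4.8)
The plan is to prove the identity $W_q^r = X_q^r + G_q + \ud\tSyz_{q-1}$ by induction on $r$, paralleling the inductive definitions of both sides, and using Lemma~\ref{lem:redtsyz} as the bridge that lets $\redgd_q$ absorb the trivial-syzygy part. For the base case $r=1$, recall $W_q^1 = \Df(T_{q-1}^n)$. Splitting an arbitrary $\beta\in T_{q-1}^n$ according to the image of $\udfw$, one writes $\Df\beta = \ud\beta - \udfw\beta$; the second summand lies in $\udfw T_{q-1}^n \subset T_q^{n+1}$, and modulo $G_q$ (i.e.\ after applying $\redgd_q$) it can be traded for the differential of a syzygy. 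Concretely, decompose $\ker\udfw$ on $T_{q-1}^n$ — which is $\Syz_{q-1} = \tSyz_{q-1}\oplus A_{q-1}$ — against a complement, and on the complement $\udfw$ is injective so those contributions are killed by $\redgd_q$, landing in $G_q$; the $A_{q-1}$ part contributes exactly $\ud A_{q-1} = X_q^1$; the $\tSyz_{q-1}$ part contributes $\ud\tSyz_{q-1}$. This gives $W_q^1 \subseteq X_q^1 + G_q + \ud\tSyz_{q-1}$, and the reverse inclusion is immediate since each of $X_q^1 = \ud A_{q-1}$, $\ud\tSyz_{q-1}$, and (for elements of $G_q \cap F_{q-1}T^{n+1}$-shape) $G_q$ is visibly inside $\Df(T_{q-1}^n) = W_q^1$ — one checks $\ud A_{q-1} = \Df A_{q-1}$ and $\ud\tSyz_{q-1} = \Df\tSyz_{q-1} \subset \Df(T_{q-1}^n)$, and $G_q$, being the kernel of $\redgd_q$, is contained in $\Df(T_{q-1}^n)$ by construction of $\redgd_q$.

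For the inductive step, assume $W_{q+1}^r = X_{q+1}^r + G_{q+1} + \ud\tSyz_{q}$ for all $q$. By definition $W_q^{r+1} = W_q^1 + (W_{q+1}^r \cap F_q T^{n+1})$. Intersect the inductive hypothesis with $F_q T^{n+1}$. The plan is to show that, after this intersection, applying $\redgd_q$ to $W_{q+1}^r\cap F_qT^{n+1}$ reproduces $X_q^{r+1}$ up to $G_q + \ud\tSyz_{q-1}$. First, $\redgd_q$ acts as the identity on $F_{q-1}T^{n+1}$ and sends $T_q^{n+1}$ to $T_q^{n+1}\oplus T_{q-1}^{n+1}$ with kernel $G_q$; so for any subspace $V\subset F_qT^{n+1}$ we have $V \subseteq \redgd_q(V) + G_q$. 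Applying this to $V = W_{q+1}^r\cap F_q T^{n+1}$ and substituting the inductive description, $\redgd_q(X_{q+1}^r\cap F_qT^{n+1})$ is one summand of $X_q^{r+1}$ by definition; the term $\redgd_q(G_{q+1}\cap F_qT^{n+1})$ needs to be absorbed — since $G_{q+1}\subset \Df(T_q^n)$ this is handled by writing its elements as $\Df\beta$ and invoking that $\redgd_q$ of such things produces differentials of syzygies in $T_{q-1}^n$ (the same mechanism as in Lemma~\ref{lem:redtsyz}), which land in $\ud\Syz_{q-1} = \ud A_{q-1} + \ud\tSyz_{q-1} \subseteq X_q^{r+1} + \ud\tSyz_{q-1}$; finally $\redgd_q(\ud\tSyz_q\cap F_qT^{n+1}) \subseteq \ud\Syz_{q-1}$ by Lemma~\ref{lem:redtsyz} directly, again absorbed into $X_q^{r+1} + \ud\tSyz_{q-1}$. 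Combining with the $r=1$ analysis of $W_q^1$ yields $W_q^{r+1} \subseteq X_q^{r+1} + G_q + \ud\tSyz_{q-1}$.

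The reverse inclusion $X_q^{r+1} + G_q + \ud\tSyz_{q-1} \subseteq W_q^{r+1}$ uses that $X_q^{r+1}\subset \Df(F_{q+r-1}T^n)\cap F_qT^{n+1}$, which by Proposition~\ref{prop:caracW} is exactly $W_q^{r+1}$; that $\ud\tSyz_{q-1} = \Df\tSyz_{q-1}\subset \Df(T_{q-1}^n) = W_q^1 \subseteq W_q^{r+1}$; and that $G_q\subset \Df(T_{q-1}^n) = W_q^1 \subseteq W_q^{r+1}$ since $G_q = \ker\redgd_q$ consists of elements $\alpha$ with $\alpha = \alpha - \redgd_q(\alpha) \in \Df(T_{q-1}^n)$ (here one needs the standing fact that the $\beta$ chosen in $\redgd_q$ lies in $T_{q-1}^n$). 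I expect the main obstacle to be the careful bookkeeping in the inductive step: one must track which filtration pieces survive the intersection with $F_qT^{n+1}$, verify that the decomposition of $\redgd_q$ applied to $G_{q+1}$ and $\ud\tSyz_q$ really lands inside $X_q^{r+1} + \ud\tSyz_{q-1}$ rather than spilling out, and keep the three-way split $\Syz_{q-1} = \tSyz_{q-1}\oplus A_{q-1}$ consistent with the definition of $X_q^{r+1}$. Once Lemma~\ref{lem:redtsyz} is in hand the mechanism is forced, but making the containments precise without a bona fide normal form (since $\redgd_q$ depends on choices) requires working modulo $G_q$ throughout, which is what makes the statement come out as an equality of subspaces rather than of maps.
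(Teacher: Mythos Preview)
Your overall strategy (induction on~$r$, use $V\subset \redgd_q(V)+G_q$, then absorb the pieces via Lemma~\ref{lem:redtsyz}) matches the paper's, and your base case and reverse inclusion are fine. The gap is in the forward inclusion of the inductive step. You write the intersection $W_{q+1}^r\cap F_qT^{n+1}$ using the inductive description $W_{q+1}^r = X_{q+1}^r + G_{q+1} + \ud\tSyz_q$ and then proceed as if you could treat the three summands separately: you speak of $\redgd_q(X_{q+1}^r\cap F_qT^{n+1})$, of $\redgd_q(G_{q+1}\cap F_qT^{n+1})$, and of $\redgd_q(\ud\tSyz_q\cap F_qT^{n+1})$. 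But an element $v\in W_{q+1}^r\cap F_qT^{n+1}$ decomposes as $v=x+g+s$ with $x\in X_{q+1}^r$, $g\in G_{q+1}$, $s\in\ud\tSyz_q$, and there is no reason a priori for $x$ or $g$ individually to lie in $F_qT^{n+1}$; only their sum does (since $s$ already does). So the ``bookkeeping'' you flag as the main obstacle is in fact a substantive missing step, not just careful writing.

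The paper closes this gap with a short argument using $\redgd_{q+1}$ rather than $\redgd_q$: since $s\in F_qT^{n+1}$, we have $x+g\in F_qT^{n+1}$, and then
\[
  x+g \;=\; \redgd_{q+1}(x+g) \;=\; \redgd_{q+1}(x)+\redgd_{q+1}(g) \;=\; x+0,
\]
because $\redgd_{q+1}$ fixes $F_qT^{n+1}$ pointwise, fixes every element of $X_{q+1}^r$ (this is the standing fact $\redgd_q\alpha=\alpha$ for $\alpha\in X_q^r$), and kills $G_{q+1}$ by definition. Hence $g=0$ and automatically $x\in X_{q+1}^r\cap F_qT^{n+1}$. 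This yields the exact identity
\[
  (X_{q+1}^r + G_{q+1} + \ud\tSyz_q)\cap F_qT^{n+1} \;=\; X_{q+1}^r\cap F_qT^{n+1} + \ud\tSyz_q,
\]
after which your plan (apply $\redgd_q$, use $G_q+A = G_q+\redgd_q(A)$, and invoke Lemma~\ref{lem:redtsyz}) goes through verbatim. Note in particular that your proposed treatment of $\redgd_q(G_{q+1}\cap F_qT^{n+1})$ is vacuous once this is in place, since that intersection is~$0$.
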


\begin{proof}
  We proceed by induction on~$r$. When~$r=1$, it boils down to proving
  that~$\Df(T_{q-1}^n) = \ud A_{q-1} + G_q + \ud \tSyz_{q-1}$, that
  is~$\Df(T_{q-1}^n) =  G_q + \ud \Syz_{q-1}$, using the fact that~$\ud A_{q-1}
  + \ud \tSyz_{q-1} = \ud\Syz_{q-1}$.  Let~$\beta\in T_{q-1}^n$. By
  definition of~$\redgd_q$,
  \[ \redgd_q(\Df\beta) = - \redgd_q(\udfw\beta) + \ud\beta = \ud(\beta-\beta'), \]
  for some~$\beta'\in T_{q-1}^n$ such that~$\udfw\beta' = \udfw\beta$.
  Thus~$\beta-\beta'$ lies in~$\Syz_{q-1}$ and~$\redgd_{q}(\Df\beta)$ is
  in~$\ud\Syz_{q-1}$.  Moreover, since~$\redgd_q$ is idempotent, $\Df\beta -
  \redgd_{q}(\Df\beta)$ is in~$G_{q}$, and in the end~$\Df\beta \in G_q + \ud
  \Syz_{q-1}$.  Conversely, $\Syz_{q-1}\subset T_{q-1}^{n}$, so it
  remains to prove that~$G_q\subset \Df(T_{q-1}^n)$, which is easy from the
  definitions.

  Now let~$r \geq 1$.
  By definition, and by the induction hypothesis
  \begin{align*}
    W_q^{r+1} &= W_q^1 + W_{q+1}^r \cap F_qT^{n+1} \\
    &= G_q + \ud A_{q-1} + \ud \tSyz_{q-1} + (X_{q+1}^r + \ud \tSyz_q + G_{q+1}) \cap F_{q}T^{n+1}.
  \end{align*}
  And we have
  \begin{align*}
    (X_{q+1}^r + \ud \tSyz_q + G_{q+1}) \cap F_{q}T^{n+1} = X_{q+1}^r\cap F_{q}T^{n+1} + \ud \tSyz_q.
  \end{align*}
  Indeed~$\ud\tSyz_q\subset F_qT^{n+1}$, and if~$\alpha\in X_{q+1}^r$ and~$\alpha'\in G_{q+1}$
  are such that~$\alpha+\alpha'\in F_qT^{n+1}$, then~$\alpha'=0$
  because
  \[ \alpha + \alpha' = \redgd_{q+1}(\alpha+\alpha') = \redgd_{q+1}(\alpha)+\redgd_{q+1}(\alpha') = \alpha + 0. \]
  Thus~$W_q^{r+1} = G_q + \ud A_{q-1} + \ud\tSyz_{q-1} + \ud\tSyz_q + X_{q+1}^r\cap F_{q}T^{n+1}$.
  For any linear subspace~$A\subset T^{n+1}$, the decomposition~$\alpha\in A$
  as~$\redgd_{q}\alpha+(\alpha-\redgd_{q}\alpha)$ shows that~$G_q + \redgd_q(A)
  = G_q + A$.
  Thus
  \[ W_q^{r+1} = G_q + \ud A_{q-1} + \ud\tSyz_{q-1} + \redgd_q(\ud\tSyz_q) + \redgd_q(X_{q+1}^r\cap F_{q}T^{n+1}), \]
  and the statement follows, by Lemma~\ref{lem:redtsyz} and the definition of~$X_q^{r+1}$.
\end{proof}

We may now prove Theorem~\ref{thm:fastcomp}.
\begin{proof}[Proof of Theorem~\ref{thm:fastcomp}]
  It is straightforward that~$[\,]'_r$ is filtered and idempotent, that~$\ker
  [\,]'_r\subset \Df(T^n)$ and that~$[\,]'_{r+1}\circ[\,]'_{r} =
  [\,]'_{r+1}$.

  To prove that~$[\,]_r$ and~$[\,]'_r$ are equivalent, it is enough to prove
  that~$\red_q^r$ and~$\rho_q^r$ are equivalent.  And indeed, if~$\alpha\in
  F_qT^{n+1}$ then
  \begin{align*}
    \rho_q^r(\alpha) &\equiv \rem(\alpha, G_q + X_q^r)  \mod{F_{q-1} T^{n+1}} \\
\text{and}\quad \red_q^r(\alpha) &\equiv \rem(\alpha, \ud \tSyz_{q-1} + G_q + X_q^r) \mod{F_{q-1} T^{n+1}},
  \end{align*}
  using Proposition~\ref{prop:wrqsimplified}.
  Since~$\ud\tSyz_{q-1} \subset F_{q-1}T^{n+1}$ the claim follows.
\end{proof}

In what follows, $[\,]_r$ will stand for~$[\,]'_r$. Except in terms of
computational complexity, they have the same properties.

\subsection{Quantitative facts}\label{sec:examples}
It is useful to introduce the spaces
\[ E_q^r \eqdef \frac{ F_q T^{n+1} } { \Df( F_{q+r-2}T^n ) \cap F_{q}T^{n+1}  + F_{q-1}T^{n+1}}. \]
It is clear that~$E_q^0$ is~$F_{q} T^{n+1}/F_{q-1}T^{n+1}$, which is isomorphic
to~$T_q^{n+1}$. Moreover, as a reformulation of Proposition~\ref{prop:elem-red}, the space~$E_q^1$ is
\[ E_q^1 = \mop{coker}( \mop{Gr}[\,]_\tGD )_q \eqdef \frac{ F_q T^{n+1} }{ \left\{ \alpha \in F_qT^{n+1} \st [\alpha]_\tGD\in F_{q-1}T^{n+1} \right\} }  \simeq  \frac{T_q^{n+1}}{\udfw T_{q-1}^n}. \]
And by Proposition~\ref{thm:caracredr}, this generalizes to the
isomorphism~$E_q^r \simeq \mop{coker}( \mop{Gr}[\,]_r )_q$.  In other
words,~$E_q^r$ is~$F_{q} T^{n+1}$ modulo elements which are reducible
to~$F_{q-1}T^{n+1}$ by~$[\,]_r$ %
The space~$E_q^{r+1}$ is a quotient of~$E_q^r$, and the dimension fall
represents how many new relations in degree~$qN$ are computed by~$[\,]_{r+1}$
compared to~$[\,]_r$. For~$r=2$, we check that
\[ E_q^2 \simeq \frac{T_q^{n+1}}{\udfw T_{q-1}^n + \ud \Syz_q} = \frac{T_q^{n+1}}{\udfw T_{q-1}^n + \ud A_q}. \]

The dimension of~$E_q^0$ is~$\binom{Nq - 1}{n}$, which is equivalent to~$N^n
q^n/n!$ when~$q\to\infty$.  The dimension of~$E_q^1$ is~$\mathcal O(q^{\nu})$,
where~$\nu$ is the dimension of the singular locus of~$V(f)$ in~$\PP^n_\K$.
There is no easy estimate of the dimension of~$E_q^2$, but~$\dim A_{q-1}$ is
also~$\mathcal O(q^\nu)$. By contrast, $\dim \Syz_q \sim (n+1) N^n q^n / n!$.
For the computation of~$[\,]_2$ (or rather~$[\,]'_2$), it is thus a substantial
improvement to consider the non-trivial syzygies~$A_q$ rather than all the
syzgies~$\Syz_{q}$ .

\begin{table}[t]
  \setlength{\extrarowheight}{.2em}
  \centering
  \begin{tabular}{>{\raggedleft}p{6em}*{5}{R{2em}}@{\hspace{2.5em}}r}
  \toprule
  $q$               & 0   & 1   & 2   & 3   & 4 & $q > 4$ \\ \midrule
  $\dim E_q^0$  & 0   & 10  & 165 & 680 & 1771 & $\binom{6q - 1}{3}\sim 36q^3$ \\ 
  $\dim E_q^1$      & 0   & 10  &  86 & 102 & 120 & $18q+48$ \\
  $\dim E_q^2$      & 0   & 10  &  7  & 6   & 6 & 6\\
  $\dim E_q^3$      & 0   & 9   &  1 & 0   & 0 & 0\\
  $\dim E_q^r$, $r\geq 4$      & 0   & 9   &  1 & 0   & 0 & 0\\
  \bottomrule
  \end{tabular}
  \vspace{.5\baselineskip}
  
  \caption{Some dimensions related to Example~\ref{ex:main}}
  \label{tab:dimEqr}
\end{table}

\begin{table}[t]
  \setlength{\extrarowheight}{.2em}    
  \centering
  \begin{tabular}{r*{5}{R{2em}}@{\hspace{2.5em}}r}
    \toprule
    $q$   & 0   & 1   & 2   & 3   & 4 & $q > 4$ \\ \midrule
    $\dim \Syz_q$
          & 0   & 21  & 522 & 2429 & 6604 & $\sim 144 q^3$ \\
    $\dim A_q$
          & 0   & 1   & 92  & 132  & 168  & $36 q + 24$ \\
    $\dim E_q^1 - \dim E_q^2$
          & 0   & 0   & 79  & 96   & 114  & $18 q + 42$ \\ \bottomrule
  \end{tabular}
  \vspace{.5\baselineskip}
  \caption{Gain of dimension by discarding trivial syzygies and number of new relations generated by the syzygies in the Example~\ref{ex:main}}
  \label{tab:dimsyz}
\end{table}

\begin{example}\label{ex:main}  
To illustrate precisely what does bring the maps~$[\,]_r$ in comparaison with~$[\,]_\tGD$,
let us consider the polynomial~$f$
\[ f \eqdef 2x_1x_2x_3(x_0-x_1)(x_0-x_2)(x_0-x_3)-x_0^3(x_0^3-x_0^2x_3+x_1x_2x_3) \]
coming from an integral for the Apéry numbers, see Example~\ref{ex:apery}. In this case~$n=3$ and~$N=6$.
The dimension of the singular locus of~$V(f)$ in~$\PP_\K^3$ is~$1$.

The dimensions of the first few~$E_q^r$ are shown in Table~\ref{tab:dimEqr}.
This illustrates the successive dimension falls.  Noticeably, at~$r=3$ a new
relation appears in~$F_1T^{n+1}$.  It is~$(2x_1^2-2x_2^2- x_0(x_1-x_2))\omega$,
which equals~$\Df\beta$ for some~$\beta$ in~$F_{2}T^{n}$ but no such~$\beta$ is
small enough to be reproduced here.

Illustrating the same polynomial~$f$, Table~\ref{tab:dimsyz} shows the numbers
of syzygies and non-trivial syzygies at a given degree.  It also displays the
difference~$\dim E_q^1 - \dim E_q^2$, that is how many new relations are really
generated from the syzygies.
\end{example}

\section{Extensions of Griffiths' theorems}
\label{sec:dimcathm}

Given~$\alpha$ in~$T^{n+1}$, how can we compute a~$r$ such that if~$\alpha$ is
in~$\Df(T^n)$ then~$[\alpha]_r$ equals zero?  Corollaries~\ref{coro:exhaus-red}
and~\ref{coro:exhausred-fast} are lacking effective bounds and do not answer
this question.  Dimca proved two theorems~\cites[Th.~B and
Cor.~2]{Dim91}[Th.~2.7]{Dim90} which generalize Theorem~\ref{thm:gd-red}.
While they do not give a full answer, they allow to give enough guarantees
on~$[\,]_r$ to design algorithms that terminates.
\begin{theorem}[Dimca]
  There exists an integer~$C$, depending only on~$f$, 
  such that~$\Df(T^n)\cap F_q T^{n+1} \subset \Df( F_{q+C-2} T^n )$ for all~$q\geq 0$. 
  \label{thm:degen}
\end{theorem}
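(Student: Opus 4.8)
The plan is to obtain Theorem~\ref{thm:degen} from the two theorems of Dimca quoted just above, through the exponential dictionary of Theorem~\ref{thm:expiso}; the work is a translation rather than new mathematics. In the language of this section, Dimca's results say that the subspace filtration induced on~$\Df(T^n)$ by~$F_\bullet T^{n+1}$---namely~$q\mapsto\Df(T^n)\cap F_qT^{n+1}$---coincides, up to a shift by an integer~$C$ depending only on~$f$, with the image filtration~$q\mapsto\Df(F_qT^n)$; one inclusion, $\Df(F_{q-1}T^n)\subseteq\Df(T^n)\cap F_qT^{n+1}$, is trivial, and Theorem~\ref{thm:degen} is exactly the other. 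Concretely I would: (i) check that the isomorphism~$H^{n+1}T\simeq H^n_{\mathrm{Rham}}(\PP^n_\K\setminus V(f))$ realised by~$h$ (Proposition~\ref{prop:hD}) is strictly compatible with the two pole-order filtrations, which is immediate from the definition of~$h$, since~$h(a\omega)$ has denominator exactly~$f^q$ when~$a\omega\in T^{n+1}_q$ and~$f\nmid a$; (ii) invoke~\cite[Th.~B and Cor.~2]{Dim91} and~\cite[Th.~2.7]{Dim90}, which under~(i) yield $\Df(T^n)\cap F_qT^{n+1}\subseteq\Df(F_{q+C-2}T^n)$, possibly after a harmless correction term~$F_{q-1}T^{n+1}$ that a downward induction on~$q$---bottoming out at $F_0T^{n+1}=T^{n+1}_0=\Omega^{n+1}_0=0$---then absorbs.

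It is worth sketching \emph{why} such a uniform~$C$ exists, since Corollaries~\ref{coro:exhaus-red} and~\ref{coro:exhausred-fast} only give a~$q$-dependent one. By Proposition~\ref{prop:caracW} the statement is equivalent to the assertion that the increasing chain $W_q^1\subseteq W_q^2\subseteq\cdots$ stabilises, modulo~$F_{q-1}T^{n+1}$, at a step that does not depend on~$q$. The graded data are finitely generated: the associated graded of~$(T^\bullet,\Df)$ for the pole filtration is (a Veronese restriction of) the Koszul complex~$\cK(\udf)$, whose cohomology is assembled from~$A/\Jf\simeq H^{n+1}\cK(\udf)$ and from the syzygies modulo trivial syzygies~$H^n\cK(\udf)$, both finitely generated over the Veronese ring~$\bigoplus_{k\geq0}A_{kN}$ (see~\S\ref{sec:kcomplex}); and the limit $\bigoplus_q\bigl(F_qT^{n+1}/(\Df(T^n)\cap F_qT^{n+1}+F_{q-1}T^{n+1})\bigr)$ is~$\mathrm{Gr}_F H^{n+1}T$, finite-dimensional because~$\PP^n_\K\setminus V(f)$ is of finite type (Theorem~\ref{thm:expiso}). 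Pitting ``finite generation and the ascending chain condition'' against ``finite-dimensional limit'' produces a~$C$ depending only on invariants of~$f$.

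The step I expect to be the real obstacle---and why I would ultimately lean on Dimca rather than redo the analysis---is that $\Df=\ud-\udfw$ is not linear over the Veronese ring: only its symbol~$\udfw$ is, so~$\Df(T^n)$ is not a submodule of~$T^{n+1}$ and the naive Noetherian argument above does not close as stated. The standard remedy is to pass to the Rees algebra of the pole/order filtration---equivalently, to microlocalise the~$\mathcal D_{\PP^n}$-module~$\cO_{\PP^n}(*V(f))$---where the objects become honest finitely generated modules over a Noetherian graded ring and the uniform stabilisation is formal; carrying this out carefully is precisely what is done in~\cite{Dim90,Dim91}. A minor secondary obstacle is checking that the filtered isomorphism of step~(i) is strict, so that a good filtration downstairs transfers upstairs with no loss in~$C$ beyond the shifts already displayed in the statement; this is routine given the explicit shape of~$h$.
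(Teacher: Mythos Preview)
The paper does not prove Theorem~\ref{thm:degen}: it is attributed to Dimca with citations to~\cite{Dim90,Dim91}, and the only commentary is that a $q$-dependent~$C$ is trivially available (each~$\Df(T^n)\cap F_qT^{n+1}$ being finite-dimensional) whereas the uniformity in~$q$ is the real content. Your plan---cite Dimca and transport the statement through the filtered isomorphism induced by~$h$---is therefore exactly what the paper does, and your additional heuristics (finite generation of the Koszul cohomology, finite-dimensionality of~$H^{n+1}T$, passage to a Rees algebra to repair the non-$A$-linearity of~$\Df$) go beyond anything the paper attempts; you are right that this last point is where a self-contained argument would require genuine work, and the paper simply black-boxes it.
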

This statement is to be compared with Theorem~\ref{thm:grif-ordcert}.  Given~$f$ and~$q$, it is
easy to prove that there exists a~$C$ such that~$\Df(T^n)\cap F_q T^{n+1}
\subset \Df( F_{q+C-2} T^n )$, because the left-hand side is a finite
dimensional space and it is included in~$\cup_{C\geq 0}\Df( F_{q+C-2} T^n )$.
It is remarkable that one can choose a~$C$ which does not depend on~$q$.
Let~$\Cf$ be the least such~$C$.
\begin{corollary}
  $\ker[\,]_{\Cf} = \Df(T^n)$.
  \label{coro:degen}
\end{corollary}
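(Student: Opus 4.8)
The plan is to deduce Corollary~\ref{coro:degen} directly from Theorem~\ref{thm:degen} together with the characterization of reducibility in Proposition~\ref{thm:caracredr}. First I would recall that, by Proposition~\ref{thm:caracredr}, the kernel of~$[\,]_r$ is contained in~$\Df(T^n)$ for every~$r\geq 1$, so in particular $\ker[\,]_{\Cf}\subset\Df(T^n)$, and only the reverse inclusion needs work.

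For the reverse inclusion, let~$\alpha\in\Df(T^n)$ be nonzero, and let~$q\geq 0$ be the least integer with~$\alpha\in F_qT^{n+1}$. By Theorem~\ref{thm:degen} applied with~$C=\Cf$, we have~$\alpha\in\Df(F_{q+\Cf-2}T^n)$, hence in particular $\alpha\in\Df(F_{q+\Cf-2}T^{n+1})+F_{q-1}T^{n+1}$ (indeed~$\alpha$ itself is already a twisted differential of a form of the right filtration level). By the ``if'' direction of Proposition~\ref{thm:caracredr} with~$r=\Cf$, this gives~$[\alpha]_{\Cf}\in F_{q-1}T^{n+1}$. The point is now that~$[\alpha]_{\Cf}$ is \emph{also} in~$\Df(T^n)$, since~$\alpha\equiv[\alpha]_{\Cf}$ modulo~$\Df(T^n)$ and~$\alpha\in\Df(T^n)$; so we may iterate the same argument with~$[\alpha]_{\Cf}$ in place of~$\alpha$, which now sits one filtration step lower. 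Since~$[\,]_{\Cf}$ is idempotent, after finitely many (at most~$q+1$) iterations we reach a reduced element in~$F_{-1}T^{n+1}=0$, and idempotence forces~$[\alpha]_{\Cf}=0$. This is exactly the same descending induction on~$q$ used in the proof of Corollary~\ref{coro:exhaus-red}, but now with a \emph{fixed}~$r=\Cf$ throughout, which is what Theorem~\ref{thm:degen} makes possible.

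I expect no real obstacle here: the whole content is in Dimca's Theorem~\ref{thm:degen}, which supplies a uniform~$C$; everything else is the bookkeeping already carried out for Corollary~\ref{coro:exhaus-red}. The one point to be slightly careful about is the base of the induction and the splitting~$\beta=\beta'+\varepsilon$ implicit in Proposition~\ref{thm:caracredr}: one must check that an element of~$\Df(F_{q+\Cf-2}T^n)\cap F_qT^{n+1}$ does fall under the hypothesis ``$\alpha\in\Df(F_{q+\Cf-2}T^{n+1})+F_{q-1}T^{n+1}$'' of that proposition, which is immediate since $F_{q+\Cf-2}T^n\subset F_{q+\Cf-2}T^{n+1}$ and we may take the~$F_{q-1}$-part to be zero. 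After that the descent is automatic.
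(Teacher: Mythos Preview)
Your argument is correct and is essentially identical to the paper's own proof: pick the minimal~$q$ with~$\alpha\in F_qT^{n+1}$, use Theorem~\ref{thm:degen} to get~$\alpha\in\Df(F_{q+\Cf-2}T^n)$, apply the characterization in Proposition~\ref{thm:caracredr} (the paper cites the parallel Theorem~\ref{thm:fastcomp}) to drop one filtration level, and conclude by induction on~$q$ together with idempotence. One small notational slip: in your final paragraph the inclusion ``$F_{q+\Cf-2}T^n\subset F_{q+\Cf-2}T^{n+1}$'' is false as written ($n$-forms and $(n+1)$-forms live in different summands); the hypothesis of Proposition~\ref{thm:caracredr} should read~$\Df(F_{q+\Cf-2}T^{n})$ rather than~$T^{n+1}$ (this is a typo in the paper that you have inherited), and with that correction there is nothing to check.
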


\begin{proof}
  Let~$\beta \in T^n$ and~$q \geq 0$ the least integer such that~$\Df\beta \in F_qT^{n+1}$.
  By Theorem~\ref{thm:degen}, there exists~$\beta'\in F_{q+\Cf-2}T^{n}$ such that~$\Df\beta' = \Df\beta$.
  Thus, by Theorem~\ref{thm:fastcomp}, $[\Df\beta]_{\Cf}$ is in~$F_{q-1}T^{n+1}$, and besides, it is also in~$\Df(T^{n})$.
  By induction on~$q$, $\left[ [\Df\beta]_{\Cf} \right]_{\Cf} = 0$. Since~$[\,]_{\Cf}$ is idempotent, the claim follows.
\end{proof}

Unfortunately, this integer~$\Cf$,  while
explicit, is not easy to compute: in Dimca's proof it is expressed in terms of
a resolution of the singularities of the projective variety~$V(f)$.
By contrast, the point~(ii) of Theorem~\ref{thm:gd-red} fully generalizes to singular cases:
\begin{theorem}[Dimca]
  $\Df(T^n) + F_n T^{n+1} = T^{n+1}$.
  \label{thm:einfsupp}
\end{theorem}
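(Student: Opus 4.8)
The plan is to translate the statement into de~Rham cohomology by Theorem~\ref{thm:expiso} and then use Hodge theory. By induction on $q$ it suffices to prove that $T^{n+1}_q\subseteq\Df(T^n)+F_{q-1}T^{n+1}$ for every $q>n$. Under the isomorphism induced by $h$, the image of $F_qT^{n+1}$ in $H^{n+1}T\simeq H^n_{\text{Rham}}(\PP^n_\K\setminus V(f))$ is precisely the step $P_q$ of the \emph{pole-order filtration}, that is the subspace of classes representable by a rational $n$-form with a pole of order at most $q$ along $V(f)$; this is immediate from the fact that $h$ sends $T^{n+1}_q$ into fractions whose denominator divides~$f^q$. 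So the theorem is equivalent to the assertion $P_n=H^n_{\text{Rham}}(\PP^n_\K\setminus V(f))$: every degree-$n$ de~Rham class of $\PP^n\setminus V(f)$ can be represented with pole order at most~$n$. Since $f$ involves finitely many coefficients, I may replace $\K$ by the subfield they generate and hence assume $\K\subseteq\mathbb C$, so that $H^n_{\text{Rham}}$ computes singular cohomology and carries Deligne's mixed Hodge structure.

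The key input I would then invoke is the comparison between the pole-order filtration $P_\bullet$ and the Hodge filtration $F^\bullet$ on $H^n=H^n(\PP^n\setminus V(f))$: one has $F^pH^n\subseteq P_{n+1-p}H^n$ for all $p$ (due to Deligne; in this precise form it is part of Dimca's analysis of the pole-order spectral sequence of the meromorphic de~Rham complex). Taking $p=1$ reduces the problem to showing $F^1H^n=H^n$, equivalently $\mop{Gr}_F^0H^n(\PP^n\setminus V(f))=0$.

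To compute this top graded piece, let $U\eqdef\PP^n\setminus V(f)$ and, by resolution of singularities in characteristic zero, pick a smooth projective variety $\bar X$ containing $U$ as the complement of a normal crossings divisor $D$. Then $\bar X$ is birational to $\PP^n$, hence rational, so $H^0(\bar X,\omega_{\bar X})=0$ (recall $n\geq1$), and Serre duality gives $H^n(\bar X,\mathcal O_{\bar X})=0$. Deligne's construction of the Hodge filtration via the logarithmic de~Rham complex $\Omega^\bullet_{\bar X}(\log D)$, together with the $E_1$-degeneration of the associated Hodge--de~Rham spectral sequence, identifies $\mop{Gr}_F^0H^n(U)$ with $H^n(\bar X,\mathcal O_{\bar X})=0$. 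Hence $F^1H^n(U)=H^n(U)$, so $P_n=H^n(U)$, and translating back through $h$ and inducting on $q$ yields $\Df(T^n)+F_nT^{n+1}=T^{n+1}$.

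The substantive obstacle is the Hodge-theoretic comparison $F^pH^n\subseteq P_{n+1-p}H^n$. In the smooth case it reduces to point~(ii) of Theorem~\ref{thm:gd-red} and can be seen elementarily: when $V(f)$ is smooth, $(A/\Jf)_{qN-n-1}=0$ for all $q>n$ by Macaulay's bound on the socle degree $(n+1)(N-2)$ of the complete intersection $\Jf$, so the Griffiths--Dwork reduction drops the pole order by one at every step; but in the singular case the comparison genuinely requires Deligne's mixed Hodge theory, and in practice one may simply cite \cite[Th.~2.7]{Dim90}, of which the preceding is a proof sketch. One should also verify the compatibility used in the first paragraph between $F_\bullet T^{n+1}$ and the pole-order filtration under Theorem~\ref{thm:expiso}, which is routine from the definition of~$h$.
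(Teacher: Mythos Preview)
The paper does not supply its own proof of Theorem~\ref{thm:einfsupp}: it is stated as a result of Dimca, with reference to~\cite[Th.~2.7]{Dim90} and~\cite[Th.~B and Cor.~2]{Dim91}, and then used as a black box to justify the termination mechanism of Algorithm~\ref{algo:ct}. So there is no ``paper's proof'' to compare against beyond Dimca's original arguments.

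Your sketch is a faithful outline of Dimca's approach. The reduction to $P_n=H^n_{\text{Rham}}(\PP^n\setminus V(f))$ via Theorem~\ref{thm:expiso} is exactly right: once one knows that every class in $H^{n+1}T$ is represented by an element of $F_nT^{n+1}$, any $\alpha\in T^{n+1}$ differs from some element of $F_nT^{n+1}$ by an element of $\Df(T^n)$, which is the claim. The two substantive inputs you identify---the inclusion $F^pH^n\subseteq P_{n+1-p}H^n$ of the Hodge filtration in the pole-order filtration, and the vanishing $\mop{Gr}_F^0H^n(U)\simeq H^n(\bar X,\mathcal O_{\bar X})=0$ for a smooth rational compactification~$\bar X$---are precisely the ingredients Dimca uses. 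You are honest that the first of these is the hard step and genuinely requires Deligne's mixed Hodge theory; this is correct, and citing~\cite{Dim90} at that point is what the paper itself does. The passage to a subfield of~$\mathbb C$ to access transcendental Hodge theory is also standard and legitimate here.

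In short: your proposal is correct, and it is not an alternative route but rather a competent unpacking of the very citation the paper gives.
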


\begin{corollary}
  For all~$\alpha\in T^{n+1}$, the reduction~$[\alpha]_{\Cf}$ lie in~$F_n T^{n+1}$.
\end{corollary}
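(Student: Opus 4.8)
The plan is to read off the statement as a one-line consequence of Dimca's Theorem~\ref{thm:einfsupp}, Corollary~\ref{coro:degen}, and the fact (contained in Theorem~\ref{thm:fastcomp}) that $[\,]_{\Cf}$ is a well-defined filtered linear endomorphism of $T^{n+1}$. First I would fix $\alpha\in T^{n+1}$; since $T^{n+1}=\bigoplus_q T^{n+1}_q$, there is a $q$ with $\alpha\in F_q T^{n+1}$, so $[\alpha]_{\Cf}$ is defined. By Theorem~\ref{thm:einfsupp} we may write $\alpha = \Df\beta + \gamma$ with $\beta\in T^n$ and $\gamma\in F_n T^{n+1}$.

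Next, using that $[\,]_{\Cf}$ is linear, $[\alpha]_{\Cf} = [\Df\beta]_{\Cf} + [\gamma]_{\Cf}$. By Corollary~\ref{coro:degen}, $\ker[\,]_{\Cf} = \Df(T^n)$, so $\Df\beta$ lies in the kernel and $[\Df\beta]_{\Cf}=0$; hence $[\alpha]_{\Cf} = [\gamma]_{\Cf}$. Since $[\,]_{\Cf}$ is filtered, it maps $F_n T^{n+1}$ into itself, so from $\gamma\in F_n T^{n+1}$ we conclude $[\gamma]_{\Cf}\in F_n T^{n+1}$, and therefore $[\alpha]_{\Cf}\in F_n T^{n+1}$, as claimed.

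There is essentially no obstacle here: all the substance sits in the two theorems of Dimca (the existence of $\Cf$ and the support bound $\Df(T^n)+F_nT^{n+1}=T^{n+1}$), which we are entitled to assume. The only points to state carefully are that $[\,]_{\Cf}$ is a genuine linear endomorphism of the whole space $T^{n+1}$ (so that the splitting $[\alpha]_{\Cf}=[\Df\beta]_{\Cf}+[\gamma]_{\Cf}$ makes sense) and that "filtered" is invoked at the single level $q=n$; both are immediate from the properties recorded in Theorem~\ref{thm:fastcomp}. This mirrors how point~(ii) of Theorem~\ref{thm:gd-red} was derived in the smooth case, with $\Cf$ playing the role that the finiteness of $A/\Jf$ played there.
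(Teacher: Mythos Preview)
Your proof is correct and is essentially identical to the paper's own argument: decompose $\alpha$ via Theorem~\ref{thm:einfsupp} as an element of $\Df(T^n)$ plus something in $F_nT^{n+1}$, kill the first summand with Corollary~\ref{coro:degen}, and use that $[\,]_{\Cf}$ is filtered for the second. The paper writes the same split as $[\alpha]_{\Cf}=[\alpha+\Df\beta]_{\Cf}-[\Df\beta]_{\Cf}$, but this is only a cosmetic difference.
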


\begin{proof}
  By Theorem~\ref{thm:einfsupp}, there exists~$\beta\in T^n$ such that~$\alpha + \Df\beta$ is in~$F_{n}T^{n+1}$.
  Since~$[\alpha]_{\Cf} = [\alpha + \Df\beta]_{\Cf} - [\Df\beta]_{\Cf}$, the claim follows from Corollary~\ref{coro:degen}.
\end{proof}

For some applications, such that the computation of annihilating operators of
periods with a parameter, Theorem~\ref{thm:einfsupp} gives an efficient
workaround to the lack of \emph{a priori} bounds for~$\Cf$. Consider an
algorithm which computes reductions~$[\alpha]_r$, for some forms~$\alpha$ and
some fixed integer~$r$, and does it as long as the reductions it computes are
linearly independent.  Then either all the~$[\alpha]_r$ are in the finite
dimensional space~$F_nT^{n+1}$, and then the algorithm terminates; or
some~$[\alpha]'_r$ is not in~$F_nT^{n+1}$, and then~$r < \Cf$, by
Theorem~\ref{thm:einfsupp}.  When the second case is encountered, we abort the
algorithm, increment~$r$ and start over.  This may happen only if~$r < \Cf$,
and when it happens~$r$ increases. So it may happen only finitely many times
and the algorithm terminates.

Concerning the integer~$\Cf$ Dimca~\cite{Dim91} conjectured that
\begin{conjecture}\label{conj:dimca}
  $\Cf \leq n+1$.
\end{conjecture}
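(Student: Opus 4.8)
The plan is to recast the bound on~$\Cf$ as a statement about the degeneration of the pole‑order spectral sequence of the Rham--Koszul bicomplex, and then to derive that degeneration from the comparison between the pole filtration and the Hodge filtration on the de~Rham cohomology of~$\PP^n_\K\setminus V(f)$.

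First I would give~$\Cf$ an explicit combinatorial meaning. Unfolding~$\Df=\ud-\udfw$, an element~$\alpha\in\Df(T^n)\cap F_qT^{n+1}$ that is not already of the form~$\Df\beta$ with~$\beta\in F_{q-1}T^n$ forces the kind of climbing zig‑zag pictured in Figure~\ref{fig:wrq}: any witness~$\beta=\beta_{q-1}+\dots+\beta_{q-1+\ell}$, with~$\beta_k\in T^n_k$ and~$\ell$ minimal, must satisfy~$\udfw\beta_{q-1+\ell}=0$ and~$\udfw\beta_{k-1}=\ud\beta_k$ for~$q<k\le q-1+\ell$, a path that starts from a high‑degree syzygy and walks down the bicomplex. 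Consequently~$\Cf=1+\sup_\beta\ell(\beta)$, and the conjecture is exactly the assertion that such a required path never has to be longer than~$n$. The heuristic reason to expect the bound~$n$ is that these paths interpolate between the vertical differential~$\udfw$, whose cohomology~$H^\bullet\cK(\udf)$ is nonzero only in a band of consecutive degrees of width~$\nu+2$ (from~$H^{n-\nu}$ to~$H^{n+1}$, where~$\nu=\dim\mathrm{Sing}\,V(f)\le n-1$, so the width is at most~$n+1$), and the horizontal differential~$\ud$, whose complex is exact; a path lingering longer than that in a region where one of the two directions is acyclic ought to be foldable.

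Concretely I would set up the spectral sequence of the bicomplex of Figure~\ref{fig:rk-cpx} associated with the pole‑order filtration~$F_\bullet$, whose first page is assembled out of~$H^\bullet\cK(\udf)$ --- hence out of~$A/\Jf$ and~$\Syz/\tSyz$ by~\S\ref{sec:kcomplex} --- and which, by the results quoted around Theorem~\ref{thm:expiso}, abuts to~$H^{n+1}T\simeq H^n_{\text{Rham}}(\PP^n_\K\setminus V(f))$. In this language~$\Cf$ bounds the number of pages after which the spectral sequence collapses; for smooth~$V(f)$ the collapse is immediate (Theorems~\ref{thm:grif-ordcert} and~\ref{thm:gd-red} give~$\Cf=1\le n+1$), and the point is to show that for arbitrary~$f$ it occurs within~$n+1$ pages. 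The cleanest way to phrase what has to be proved is as a bound on the discrepancy between the pole filtration and the Hodge filtration on~$H^n_{\text{Rham}}(\PP^n_\K\setminus V(f))$: each step of the Hodge filtration should contain the step of the pole filtration shifted by at most~$n+1$, with no smoothness hypothesis. One inclusion (Hodge inside pole) is due to Deligne and Dimca; the conjectural content is the converse up to an~$(n+1)$‑step shift, which one would attack by the mixed‑Hodge‑theoretic methods of Deligne, Dimca and Saito while keeping track of the number of steps produced.

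The main obstacle --- and the reason this is a conjecture rather than a theorem --- is that the only known proof of the qualitative Theorem~\ref{thm:degen} runs through a resolution of singularities of~$V(f)$ and yields a constant~$C$ whose relation to~$n+1$ is completely opaque; making the argument resolution‑free would require controlling a Castelnuovo--Mumford‑type regularity of the pole‑order spectral sequence uniformly over the Jacobian ring~$A/\Jf$, for arbitrarily bad singularities, and I see no general handle on that. A realistic partial program is to prove~$\Cf\le n+1$ first under an isolated‑ or quasi‑homogeneous‑singularity hypothesis, where the pole/Hodge comparison is classical and the band‑width heuristic above can be made rigorous, and meanwhile to check the bound on the database of Section~\ref{sec:impl}: the tables of~\S\ref{sec:examples} already show the spaces~$E_q^r$ stabilizing at~$r=n$, comfortably within the conjectured range, which is the kind of evidence one wants before committing to a proof in full generality.
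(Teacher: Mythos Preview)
The statement is labelled a \emph{conjecture} in the paper and is not proved there; the paper only remarks that explicit computations support it and that the bound is tight when~$n=2$. There is therefore no proof in the paper to compare your proposal against.

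Your proposal is not a proof either, and you say so yourself when you write that ``the main obstacle --- and the reason this is a conjecture rather than a theorem'' is the lack of a resolution-free bound on~$C$. What you have written is a reasonable research outline: the reformulation of~$\Cf$ as the degeneration page of the pole-order spectral sequence is correct and is indeed how Dimca frames the question, and the link to the comparison of pole and Hodge filtrations is the standard context. But the step you actually need --- that the converse inclusion between these filtrations holds up to an~$(n{+}1)$-shift with no hypothesis on the singularities --- is precisely the open content of the conjecture, not something derivable from the results you cite. Your band-width heuristic for~$H^\bullet\cK(\udf)$ is suggestive but does not close the gap: the zig-zags of Figure~\ref{fig:wrq} live entirely in~$T^n$ and~$T^{n+1}$, so the vanishing of the lower Koszul cohomology groups only enters indirectly, through a cascade of corrections whose length is exactly the quantity you are trying to bound. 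Turning ``ought to be foldable'' into an actual bound is where the work lies, and nothing in the proposal does that work.

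In short, both the paper and your proposal leave the statement open; your outline correctly locates the difficulty but does not resolve it.
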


As far as I know, computations on explicit examples confirm this conjecture.
Moreover the bound is tight when~$n=2$.  A proof of this conjecture would
have very interesting algorithmic consequences: the reduction algorithm is
extendable to the computation of the whole cohomology of~$T$, not just the top
cohomology. Only the bound~$\Cf\leq n+1$ is lacking for obtaining an efficient
algorithm for computing the de~Rham cohomology of the complement of a
projective hypersurface.

\part{Periods with a parameter}
\label{sec:param}

We apply the reduction algorithm to the computation of Picard-Fuchs equations.

\section{Algorithms}
\label{sec:ct}

\subsection{Setting}

Let~$\K$ be a field of characteristic zero with a derivation~$\delta$.
Typically~$\K$ is~$\Q(t)$ and~$\delta$ is the usual derivation with respect
to~$t$.  Let~$\dop$ be the algebra of differential operators in~$\delta$: it is 
the associative algebra with unity generated over~$\K$ by~$\delta$ and subject
to the relations~$\delta x = x \delta + \delta(x)$
for all~$x$ in~$\K$, where~$\delta(x)$ denotes the application of~$\delta$ to~$x$
whereas~$\delta x$ is the operator that multiplies by~$x$ and then
applies~$\delta$.
On~$\K(x_0,\dotsc,x_n)$, let~$\partial_i$ denote the derivation with respect
to~$x_i$.  The derivation~$\delta$ extends to~$\K(x_0,\dotsc,x_n)$ uniquely by
setting~$\delta(x_i) = 0$.  In particular~$\delta \circ \partial_i = \partial_i
\circ \delta$.

This section describes an algorithm which takes as input a rational
function~$R$ in~$\K(x_1,\dotsc,x_n)$ and outputs an operator~$\cL$ in~$\dop$
such that there exist other rational functions~$C_1,\dotsc,C_n$ with
\[ \cL(R) = \sum_{i=1}^n \partial_i C_i. \]
Moreover, the irreducible factors of the denominators of the~$C_i$ divide the
denominator of~$R$.  Such an operator will be called an \emph{annihilating
operator of the periods of~$R$}, or \emph{a differential equation for~$\oint R$}.
The minimal annihilating operator of~$\oint R$ is called the \emph{Picard-Fuchs
equation} (of~$\oint R$).  The output operator~$\cL$ is not necessarily the
Picard-Fuchs equation but it is of course a left multiple of it.

Being based on the reduction algorithm of Part~\ref{sec:algo}, the algorithm
does not compute the~$C_i$.  It is worth a word because while only~$\cL$
matters, the size of the~$C_i$, say the size of a binary dense representation,
is usually much larger than the size of~$\cL$~\cite[Rem.~11]{BosLaiSal13}.  To be able to compute~$\cL$
without computing the~$C_i$ is certainly a good point toward practical
efficiency.  The fractions~$C_i$ are called a \emph{certificate}: they allow to
check \emph{a posteriori} that~$\cL$ is indeed an annihilating operator
of~$\oint R$.

\subsection{Homogenization}\label{sec:homogenization}

The reduction algorithm works in an homogeneous setting.  If we are interested
in computing the Picard-Fuchs equation of the integral of an inhomogeneous
function, the problem can be homogenized as follows.  Let~$R_\thom$ be the
homogenization of~$R$ in degree~$-n-1$ defined by
\[ R_\thom = x_0^{-n-1} R\left( \frac{x_1}{x_0},\dotsc,\frac{x_n}{x_0} \right) \in \K(\xx), \]
where~$\xx$ denotes~$x_0,\dotsc,x_n$ hereafter.
The rational function~$R_\thom(\xx)$ is \emph{homogeneous of degree~$-n-1$}, that is
$R_\thom(\lambda x_0, \dotsc, \lambda x_n) = \lambda^{-n-1} R_\thom(x_0, \dotsc, x_n)$,
or, equivalently,~$R_\thom = b/g$ where~$b$ and~$g$ are homogeneous polynomials such
that~$\deg b +n+1 = \deg g$.

Let us write~$R_\thom$ as~$a/f^q$, with~$a$ and~$f$ two homogeneous polynomials
and~$q$ an integer.  Usually~$f$ will be chosen square-free but we don't have to.
Let~$N$ be the degree of~$f$.  Since~$R_\thom$ is homogeneous of
degree~$-n-1$, the degree of~$a$ is~$qN-n-1$.  This is the main reason for
considering homogeneous fractions: the degree of the denominator determines the
degree of the numerator, there is no \emph{hidden} pole at infinity.  The
degree~$-n-1$ is crucial to ensure that:
\begin{lemma}\label{lem:hom}
  If~$\cL \in \dop$ is a annihilating operator of~$\oint R_\thom$ then~$\cL$ is also a
  annihilating operator of~$\oint R$.
\end{lemma}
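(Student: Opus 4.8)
The plan is to relate the homogeneous integral $\oint R_\thom\,\ud x_0\dotsm\ud x_n$ directly to the affine integral $\oint R\,\ud x_1\dotsm\ud x_n$ by exploiting the degree $-n-1$ homogeneity, which is exactly the condition making $R_\thom\,\ud x_0\wedge\dotsm\wedge\ud x_n$ projectively meaningful. The key geometric fact is that a homogeneous rational $(n+1)$-form of degree $0$ on $\Af^{n+1}\setminus V(f)$ is, up to a contraction with the Euler vector field, pulled back from a rational $n$-form on $\PP^n\setminus V(f)$; concretely, on the affine chart $x_0=1$ the form $R_\thom\,\ud x_0\dotsm\ud x_n$ restricts (after the usual residue-along-$x_0$ or Euler-field computation) to $R\,\ud x_1\dotsm\ud x_n$. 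So the first step is to set up this correspondence cleanly: write $\xi = \sum_i (-1)^i x_i\,\ud x_0\wedge\dotsm\wedge\widehat{\ud x_i}\wedge\dotsm\wedge\ud x_n$ for the contraction of $\omega$ with the Euler field, observe that $R_\thom\,\xi$ descends to a rational $n$-form $\eta$ on $\PP^n_\K\setminus V(f)$, and that the affine restriction of $\eta$ to the chart $x_0=1$ equals $R(x_1,\dotsc,x_n)\,\ud x_1\dotsm\ud x_n$.

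Next I would transport the hypothesis through this correspondence. Suppose $\cL(R_\thom) = \sum_{i=0}^n \partial_i(B_i)$ in $\K(\xx)$ with the $B_i$ having denominators dividing a power of $f$, i.e. $\cL(R_\thom)\,\omega \in \Df$-image in the homogeneous complex $T$ (here one uses the exponential isomorphism, Theorem~\ref{thm:expiso}, and Proposition~\ref{prop:hD} to see that being a sum of partial derivatives is the same as lying in $\Df(T^n)$, or one works directly with the Rham complex of $\PP^n\setminus V(f)$). Because $\cL$ commutes with each $\partial_i$ and with the passage from $\omega$ to $\xi$ (the parameter derivation $\delta$ kills the $x_i$), applying $\cL$ to $\eta$ gives a form $\cL(\eta)$ which is exact as a rational $n$-form on $\PP^n_\K\setminus V(f)$ with poles along $V(f)$ — that is, $\cL(\eta) = \ud(\text{something})$ in the algebraic de~Rham complex. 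Restricting this identity to the affine chart $x_0 = 1$, exactness of a rational $n$-form on the open subset of $\Af^n$ means precisely that $\cL(R) = \sum_{i=1}^n \partial_i C_i$ for rational functions $C_i$ whose denominators still divide a power of the (dehomogenized) denominator of $R$. Hence $\cL$ annihilates $\oint R$.

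An alternative, more hands-on route avoids cohomology entirely: dehomogenize the certificate directly. If $\cL(R_\thom) = \sum_{i=0}^n \partial_i(B_i)$ with each $B_i$ homogeneous (the homogeneity degrees are forced: $B_i$ has degree $-n$), set $x_0 = 1$ in everything. The term $\partial_0(B_0)|_{x_0=1}$ is not simply $\partial$ of something in the remaining variables, but Euler's identity for the homogeneous function $B_0$ of degree $-n$ lets one rewrite $\partial_0 B_0 = -\,n B_0 - \sum_{i=1}^n x_i\partial_i B_0$ before restricting — wait, more carefully: on the locus $x_0=1$ one uses that for a homogeneous $G$ of degree $d$, $\sum_{j=0}^n x_j\partial_j G = dG$, so $\partial_0 G|_{x_0=1} = dG|_{x_0=1} - \sum_{i=1}^n x_i\,(\partial_i G)|_{x_0=1}$, and the right-hand side can be reorganized, using $\partial_i(x_i H) = H + x_i\partial_i H$, into $\sum_{i=1}^n \partial_i(\text{rational function})$ plus a correction that, summed over all $B_i$ with the correct degrees, telescopes away precisely because the total form had degree $-n-1$. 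Collecting terms yields explicit $C_i$. I would probably present this second computation since it is elementary and constructive, matching the paper's stated preference for elementary methods and for being able to exhibit certificates.

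The main obstacle is bookkeeping the degree and the $x_0$-derivative correctly: one must check that the "boundary" contributions from $\partial_0 B_0$ — the pole or lack of pole at the hyperplane $x_0 = 0$ — genuinely cancel, and this is exactly where the degree $-n-1$ normalization is essential and where a naive dehomogenization of an integral of the wrong degree would fail (there would be a residual term reflecting a pole at infinity). So the crux is a careful Euler-relation argument showing that $\bigl(\partial_0 B_0 + \sum_{i=1}^n \partial_i B_i\bigr)\big|_{x_0=1}$ is a sum of $\partial_i$ of rational functions in $x_1,\dotsc,x_n$ with controlled denominators; everything else — linearity of $\cL$, commutation with $\partial_i$ and with dehomogenization, tracking that denominators still divide a power of the denominator of $R$ — is routine.
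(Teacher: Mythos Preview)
Your second, hands-on route is exactly what the paper does: write $\cL(R_\thom)=\sum_{i=0}^n\partial_i(b_i/f^m)$ with each $b_i/f^m$ homogeneous of degree~$-n$, set $x_0=1$, and use Euler's relation on $b_0/f^m$ to convert the stray $\partial_0(b_0/f^m)\big|_{x_0=1}$ into $-\sum_{i=1}^n\partial_i\bigl(x_i\,b_0/f^m\big|_{x_0=1}\bigr)$. One small correction to your sketch: the cancellation is entirely internal to the $B_0$ term (the $-nG$ from Euler cancels the $nG$ coming from $\sum_{i=1}^n x_i\partial_iG=\sum_{i=1}^n\partial_i(x_iG)-nG$); there is no telescoping ``summed over all $B_i$'', and the degree $-n-1$ of $R_\thom$ enters only to force $\deg B_0=-n$.
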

\begin{proof}
  Assume that~$\cL(R_\thom)$ equals~$\sum_{i=0}^n \partial_i(b_i/f^m)$, for
  some polynomials~$b_i$ and some integer~$m$.  Substituting~$x_0$ by~$1$
  gives
  \[ \cL(R) = \partial_0(b_0/f^m) |_{x_0 = 1} + \sum_{i=1}^n \partial_i(b_i/f^m |_{x_0 = 1}). \]
  Since~$R_\thom$ is homogeneous of degree~$-n-1$, we may assume that
  each~$b_i/f^m$ is homogeneous of degree~$-n$.  Euler's relation gives
  \begin{align*}
    - n b_0/f^m &= \sum_{i=0}^n x_i \partial_i( b_0/f^m ) 
    = \sum_{i=0}^n\left( \partial_i( x_i b_0 /f^m ) - b_0 /f^m \right).
  \end{align*}
  This proves that
  $0 = \partial_0(b_0/f^m) |_{x_0 = 1} + \smash{\sum_{i=1}^n} \partial_i( x_i b_0 /f^m |_{x_0 = 1} )$, 
  and the claim follows.
\end{proof}
The Picard-Fuchs equation of~$\oint R_\thom$ may not be the
Picard-Fuchs equation of~$\oint R$, but only a left multiple.  However, it is the case if~$x_0$
divides~$f$, which is possible to assume, up to replacing~$f$ by~$x_0 f$
and~$a$ by~$x_0^q a$.  From now on I focus exclusively on the homogeneous
case.

\subsection{Computation of Picard-Fuchs equations}

\begin{algo}
\centering
\begin{algorithmic}
  \AlgoInput $a/f^q$ a homogeneous rational function in~$\K(\mathbf x)$ of degree~$-n-1$, with~$V(f)$ smooth in~$\PP_\K^n$
  \AlgoOutput $\cL \in \dop$ the Picard-Fuchs equation of~$\oint R$
  \Procedure{PicardFuchs}{$a/f^q$}
    \State $\rho_0 \gets [a\omega]_{\tGD}$
    \For{$m$ from~$0$ to~$\infty$}
      \If{$\mop{rank}_\K(\rho_0,\dotsc,\rho_m) = m+1$}
        \State $\rho_{m+1} \gets [\delta(\rho_{m})]_{\tGD}$ 
      \Else
        \State compute~$a_0, \dotsc, a_{m-1} \in \K$ such that~$\sum_{k=0}^{m-1} a_k \rho_k = \rho_m$
        \State \Return $\delta^m - \sum_{k=0}^{m-1} a_k \delta^k$
      \EndIf
    \EndFor
  \EndProcedure
\end{algorithmic}
\caption{Computation of annihilating operators of the periods of a rational function, smooth case}
\label{algo:ctgd}
\end{algo}

The derivation~$\delta$ is extended to the spaces~$T^p$ of differential
forms\footnote{See definition in~\S\ref{sec:setup}.} by
\[ \delta : \alpha \in T^p \mapsto \alpha^\delta - f^\delta\alpha \in T^p, \]
where~$\bullet^\delta$ denotes component-wise differentiation.
It commutes with the map~$h$, and the differential~$\Df$, as a consequence
of~$\delta$ commuting with~$\partial_i$. 

To highlight the difference between the smooth and the singular cases, I recall
first how the Griffiths-Dwork reduction applies to the computation of
Picard-Fuchs equations.  Let~$a/f^q$ be a homogeneous fraction of
degree~$-n-1$. We define~$\rho_0\eqdef[a\omega]_\tGD$ and~$\rho_{k+1} \eqdef [\delta(\rho_k)]_\tGD$.
Since~$\delta$ commutes with~$\Df$, it is clear that~$\rho_k\equiv\delta^k(a\omega)$ modulo~$\Df(T^n)$.
Hence Theorem~\ref{thm:gd-red} implies that~$\rho_k=[\delta^k(a\omega)]_\tGD$.
Thus, by Theorems~\ref{thm:expiso} and~\ref{thm:gd-red} and, for~$u_0,\dotsc,u_m$ in~$\K$,
\[ \sum_{k=0}^m u_k \delta^k( a/f^q ) \in \sum_{k=0}^n \partial_k A_f \text{ if and only if } 
  \sum_{k=0}^m u_k \rho_k = 0. \]
This leads to Algorithm~\ref{algo:ctgd}. 

\begin{proposition}
  Algorithm~\ref{algo:ctgd} applied to a fraction~$R$ satisfying the regularity
  assumption terminates and outputs the Picard-Fuchs equation of~$\oint R$.
\end{proposition}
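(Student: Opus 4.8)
The plan is to show two things: termination, and correctness of the output.

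\medskip

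\noindent\textbf{Termination.} First I would argue that the sequence $\rho_0,\rho_1,\dots$ cannot remain linearly independent over $\K$ forever. By Theorem~\ref{thm:gd-red}(ii), every reduction $[\alpha]_{\tGD}$ lies in $F_nT^{n+1}$, so in particular each $\rho_m = [\delta^m(a\omega)]_\tGD$ lies in the space $F_nT^{n+1}$. But $F_nT^{n+1} = \bigoplus_{q\leq n} T^{n+1}_q$ is a \emph{finite-dimensional} $\K$-vector space (it is a finite direct sum of spaces of homogeneous forms of bounded degree). Hence the $\rho_m$ cannot be linearly independent beyond $m = \dim_\K F_nT^{n+1}$; the \texttt{else} branch is eventually taken and the algorithm returns.

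\medskip

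\noindent\textbf{Correctness.} Here I would assemble the pieces already laid out just before the statement. The key identity is $\rho_k = [\delta^k(a\omega)]_\tGD$: since $\delta$ commutes with $\Df$, one has $\delta(\rho_k)\equiv\delta^{k+1}(a\omega)$ modulo $\Df(T^n)$ by induction, and then applying $[\,]_\tGD$ — which is the identity modulo $\Df(T^n)$ in the smooth case by Theorem~\ref{thm:gd-red}(i), i.e.\ $\ker[\,]_\tGD = \Df(T^n)$ — gives $\rho_{k+1} = [\delta(\rho_k)]_\tGD = [\delta^{k+1}(a\omega)]_\tGD$. Consequently, for scalars $u_0,\dots,u_m\in\K$ we have $\sum_k u_k\rho_k = [\sum_k u_k\delta^k(a\omega)]_\tGD$, and by the characterization of $\ker[\,]_\tGD$ together with Propositions~\ref{prop:hD} and the exponential isomorphism (Theorem~\ref{thm:expiso}), this vanishes if and only if $\sum_k u_k\delta^k(a/f^q)\in\sum_i\partial_i A_f$, i.e.\ iff $\delta^m - \sum_{k<m}a_k\delta^k$ annihilates $\oint R$. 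When the algorithm returns at step $m$, it has found the \emph{first} $\K$-linear dependency among $\rho_0,\dots,\rho_m$: the test guarantees $\rho_0,\dots,\rho_{m-1}$ are independent, so no annihilating operator of order $<m$ exists, while $\rho_m = \sum_{k<m}a_k\rho_k$ exhibits $\delta^m - \sum_{k<m}a_k\delta^k$ as an annihilating operator of order $m$. Minimality in order, combined with the fact that the coefficients lie in $\K=\Q(t)$ (so one may clear denominators to coprime polynomial coefficients), identifies the output as the Picard-Fuchs equation; finally Lemma~\ref{lem:hom} (or rather its discussion, since here we are already homogeneous) together with the remark that the $C_i$ can be read off from the $\beta$'s appearing in the reductions confirms the certificate condition on denominators.

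\medskip

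\noindent\textbf{Main obstacle.} The substantive content is not the bookkeeping but the appeal to Theorem~\ref{thm:gd-red}: it is exactly the smoothness hypothesis that makes $\ker[\,]_\tGD = \Df(T^n)$ (so that reduction detects membership in $\sum_i\partial_i A_f$ exactly) \emph{and} forces $[\,]_\tGD$ to land in the finite-dimensional $F_nT^{n+1}$ (so that termination holds). Everything else is a short induction plus linear algebra. I expect the only place needing care in the write-up is making explicit that "the first linear dependency gives the minimal-order operator" — one must note that an annihilating operator of order $<m$ would, via the equivalence above, produce a vanishing $\K$-combination of $\rho_0,\dots,\rho_{m-1}$, contradicting the rank condition maintained by the loop.
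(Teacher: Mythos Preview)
Your proposal is correct and follows essentially the same approach as the paper. The paper's own proof is extremely terse---it simply says that correctness was established in the discussion preceding the proposition (the equivalence $\sum_k u_k\rho_k=0 \Leftrightarrow \sum_k u_k\delta^k(a/f^q)\in\sum_i\partial_iA_f$ via Theorems~\ref{thm:expiso} and~\ref{thm:gd-red}) and that termination follows from Theorem~\ref{thm:gd-red}(ii) forcing the $\rho_i$ into the finite-dimensional space $F_nT^{n+1}$; you have spelled out exactly these two ingredients, together with the minimality argument which the paper leaves implicit. One small remark: your aside that ``$\K=\Q(t)$'' and the mention of clearing denominators are unnecessary here, since in \S\ref{sec:ct} the Picard-Fuchs equation is defined as the minimal-order monic operator in $\dop$ for an arbitrary differential field~$\K$; but this does not affect the argument.
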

\begin{proof}
  Correctness has just been proven. Termination follows from
  Theorem~\ref{thm:gd-red}, point~(ii), which implies that the~$\rho_i$ lie in
  a finite-dimensional space, so they are linearly dependent.
\end{proof}

If Conjecture~\ref{conj:dimca} were proven, it would be enough to
replace~$[\,]_\tGD$ by~$[\,]_{n+1}$, or its efficient variant~$[\,]'_{n+1}$, in
Algorithm~\ref{algo:ctgd} to obtain an algorithm which provably outputs the
Picard-Fuchs equation of a rational integral in the singular case.  While
assuming this conjecture gives good results in practice, the absence of a proof
is embarrassing.

It is worth mentioning the treatment of singular cases by a generic
deformation: to compute a differential for~$\oint R$, for some~$R=a/f$, we may
change~$R$ into
\[ R_\lambda = \frac{a}{f+ \lambda \sum_{i=0}^n x_i^{\deg f}}, \]
where~$\lambda$ is a free variable. The denominator of~$R_\lambda$ always
satisfy the smoothness hypothesis, so Algorithm~\ref{algo:ctgd} applies,
over~$\K(\lambda)$, and gives the Picard-Fuchs equation of~$\oint R_\lambda$,
say~$\cL$ in~$\K(\lambda)\langle\delta\rangle$.  Then~$(\lambda^a
\cL)_{|\lambda=0}$, where~$a$ is the unique integer which makes this evaluation
neither zero nor singular, is a differential equation for~$\oint R$.  This
method achieves a good computational complexity, that is polynomial
complexity with respect to the \emph{generic size} of the
output~\cite{BosLaiSal13}, but its practical efficiency is terrible because
most Picard-Fuchs that are interesting to compute are much smaller than the
generic Picard-Fuchs equation.

Another approach, using the reductions~$[\,]_r$, is to loop over~$r$.  We begin
by fixing~$r$ to an initial value, for example~$1$, and we introduce another
variable~$M$, a positive integer.  Then we compute~$\rho_0$, $\rho_1$, etc. as
in Algorithm~\ref{algo:ctgd} but replacing~$[\,]_\tGD$ by~$[\,]_r$, up
to~$\rho_M$.  If there is no linear dependency relation between the~$\rho_k$
then we increase both~$r$ and~$M$ and repeat the procedure.  At some point, the
parameter~$r$ will exceed~$\Cf$ and~$M$ will exceed the order of the
Picard-Fuchs equation of~$\oint R$.  There, a relation will be found between
the~$\rho_k$ and it will give the Picard-Fuchs equation.  It is possible that a
relation is found before the condition~$r \geq \Cf$ is met: it gives of course
a differential equation, but it need not be the minimal one.

Theorem~\ref{thm:einfsupp} and its corollary allow for an interesting variant
of this approach.  As above, we loop over~$r$.  For a given value of~$r$, the
forms~$\rho_0$, $\rho_1$, etc. are computed as in Algorithm~\ref{algo:ctgd} but
using~$[\,]_r$ instead of~$[\,]_\tGD$. Contrary to the previous approach, the
number of~$\rho_i$ we compute before moving to the next value of~$r$ is not
bounded \emph{a priori}. Instead, we compute~$\rho_0$, $\rho_1$, etc. as long
as~$\rho_k$ stays in~$F_n T^{n+1}$.  Since~$F_n T^{n+1}$ is finite dimensional,
we have the following alternative: either there exists a relation between
the~$\rho_k$, or there exists a~$k$ such that~$\rho_k$ is not in~$F_n T^{n+1}$.
In the first case, the relation gives a differential equation for~$\oint R$. In
the second case, we increase~$r$ and start over the computation of
the~$\rho_k$'s.  Corollary~\ref{coro:degen} assures that as soon as~$r \geq
\Cf$, the second condition is never met, so a relation will eventually be
found.  Algorithm~\ref{algo:ct} details the procedure.

\begin{algo}
\centering
\begin{algorithmic}
  \AlgoInput $a/f^q$ a homogeneous rational function in~$\K(\mathbf x)$ of degree~$-n-1$
  \AlgoOutput $\cL \in \dop$ a differential equation for~$\oint R$
  \Procedure{PicardFuchs}{$a/f^q$}
    \For{$r$ from 1 to~$\infty$} 
      \State $\rho_0 \gets [a\:\omega]_r$
      \Comment Compute the subspaces~$X_r^q$ as they are needed.
      \For{$m$ from~$0$ to~$\infty$ while $\deg \rho_m \leq n \deg f$}
        \If{$\mop{rank}_\K(\rho_0,\dotsc,\rho_m) = m+1$}
          \State $\rho_{m+1} \gets [\delta(\rho_{m})]_r$ 
        \Else
          \State compute~$a_0, \dotsc, a_{m-1} \in \K$ such that~$\sum_{k=0}^{m-1} a_k \rho_k = \rho_m$
          \State \Return $\delta^m - \sum_{k=0}^{m-1} a_k \delta$
        \EndIf
      \EndFor
    \EndFor
  \EndProcedure
\end{algorithmic}
\caption{Computation of annihilating operators of the periods of a rational function}
\label{algo:ct}
\end{algo}

\begin{theorem}\label{thm:algo-cor}
  Algorithm~\ref{algo:ct} terminates and outputs an annihilating operator of~$\oint R$.
\end{theorem}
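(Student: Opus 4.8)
The plan is to prove two things: (1) for every fixed value of~$r$, the inner loop terminates --- it either finds a relation among the~$\rho_k$ or exits because some~$\rho_k$ leaves~$F_n T^{n+1}$; and (2) the outer loop over~$r$ terminates, because once~$r\geq\Cf$ the exit-via-degree branch can never be taken, so the inner loop must produce a relation. Correctness of the output is essentially already established in the discussion preceding the theorem, but I would restate it: the returned operator~$\cL=\delta^m-\sum_{k<m}a_k\delta^k$ satisfies~$\cL(a\omega)\equiv 0$ modulo~$\Df(T^n)$, hence by Proposition~\ref{prop:hD} (and Theorem~\ref{thm:expiso}) its image under~$h$ lies in~$\sum_i\partial_i A_f$, which is precisely the condition for~$\cL$ to be an annihilating operator of~$\oint R$; the denominator condition is automatic since all the~$\rho_k$ are represented with denominator a power of~$f$.

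For the inner loop I would argue as follows. Fix~$r$. Because~$\delta$ commutes with~$\Df$ (stated just before Algorithm~\ref{algo:ctgd}) and~$[\,]_r$ reduces modulo~$\Df(T^n)$ with kernel inside~$\Df(T^n)$ (Proposition~\ref{thm:caracredr} / Theorem~\ref{thm:fastcomp}), one checks inductively that~$\rho_k\equiv\delta^k(a\omega)$ modulo~$\Df(T^n)$ for all~$k$. The loop runs while~$\deg\rho_m\leq n\deg f$, i.e.\ while~$\rho_m\in F_nT^{n+1}$. Since~$F_nT^{n+1}$ is a finite-dimensional~$\K$-vector space, the sequence~$\rho_0,\rho_1,\dotsc$ cannot remain in it and stay linearly independent forever: either at some step the rank test fails --- and we return an operator --- or at some step~$\rho_m\notin F_nT^{n+1}$ and we exit to increment~$r$. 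So the inner loop always terminates.

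For the outer loop, suppose~$r\geq\Cf$. I claim the degree-exit branch is never reached, i.e.\ every~$\rho_k$ computed stays in~$F_nT^{n+1}$. Indeed, with~$[\,]_r$ replaced in the role of~$[\,]_{\Cf}$, the Corollary to Theorem~\ref{thm:einfsupp} (namely that~$[\alpha]_{\Cf}\in F_nT^{n+1}$ for all~$\alpha$), together with~$[\,]_r\circ$ being idempotent and the inclusion~$W_q^{\Cf}\subset W_q^r$ giving~$[\,]_r\circ[\,]_{\Cf}=[\,]_r$ and hence~$[\,]_r$ at least as fine as~$[\,]_{\Cf}$, forces~$[\beta]_r\in F_nT^{n+1}$ for every~$\beta\in T^{n+1}$; in particular each~$\rho_{k+1}=[\delta(\rho_k)]_r$ lies in~$F_nT^{n+1}$. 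Therefore for~$r\geq\Cf$ the inner loop can only exit by finding a linear dependency, and it does so after at most~$\dim_\K F_nT^{n+1}$ steps. Since~$\Cf$ is a finite integer depending only on~$f$ (Theorem~\ref{thm:degen}), the outer loop reaches a terminating value of~$r$ after finitely many iterations, and the algorithm halts.

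\textbf{The main obstacle} is making the~$r\geq\Cf$ argument airtight: I need that~$[\,]_r$ is, as a filtered map, at least as fine as~$[\,]_{\Cf}$ in the precise sense that~$[\beta]_r\in F_{q-1}T^{n+1}$ whenever~$[\beta]_{\Cf}\in F_{q-1}T^{n+1}$. This follows from Proposition~\ref{thm:caracredr}: $[\beta]_{\Cf}\in F_{q-1}T^{n+1}$ iff~$\beta\in\Df(F_{q+\Cf-2}T^{n+1})+F_{q-1}T^{n+1}$, and since~$r\geq\Cf$ we have~$F_{q+\Cf-2}\subset F_{q+r-2}$, so the same~$\beta$ satisfies~$\beta\in\Df(F_{q+r-2}T^{n+1})+F_{q-1}T^{n+1}$, which is exactly the criterion for~$[\beta]_r\in F_{q-1}T^{n+1}$. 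Combined with Corollary~\ref{coro:degen}'s input (that~$[\,]_{\Cf}$ sends everything into~$F_nT^{n+1}$), this closes the argument. A minor point to check is the base case~$\rho_0=[a\omega]_r\in F_nT^{n+1}$, which is the same statement applied to~$\beta=a\omega$.
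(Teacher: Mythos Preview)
Your proof is correct and follows essentially the same line as the paper's discussion preceding the theorem (the paper does not give a separate formal proof): termination of the inner loop by finite-dimensionality of~$F_nT^{n+1}$, termination of the outer loop because once~$r\geq\Cf$ every reduction lands in~$F_nT^{n+1}$, and correctness via~$\rho_k\equiv\delta^k(a\omega)\pmod{\Df(T^n)}$ together with Proposition~\ref{prop:hD}. Your extra care in justifying that~$[\,]_r$ is at least as fine as~$[\,]_{\Cf}$ for~$r\geq\Cf$, through the filtration criterion of Proposition~\ref{thm:caracredr} and idempotency, is a legitimate (and slightly more explicit) way to extend the Corollary of Theorem~\ref{thm:einfsupp} from~$\Cf$ to any~$r\geq\Cf$; the paper leaves this step implicit.
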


\section{Implementation}\label{sec:impl}

Algorithm~\ref{algo:ct} has been implemented in the computer algebra system
Magma~\cite{BosCanPla97}, with~$\Q(t)$ as base field~$\K$, with the usual
derivation.\footnote{The implementation is available at~\url{http://github.com/lairez/periods}.}
To be able to treat large examples---like the ones in
Section~\ref{sec:cy}---the coefficient swell makes it necessary to
implement a randomized evaluation-interpolation scheme which splits a
computation over~$\Q(t)$ into several analogous computations over different
finite fields. However it comes at a price: since we lack tight \emph{a priori}
bounds on the size of the output---order, degree, size of the
coefficients---the reconstruction step is not certified to be correct,
even though the probability of failure can be made arbitrarily small.  There are
also several ways to cross-check the result independently.  The variant is
described in~\S\ref{sec:eval-interp}.
In the introduction, I mentionned the guessing method which allows, in some
cases, to compute an annihilating operator of a given period but gives no
guarantee about its correctness. The nature of the risk of failure is very
different though. In the evaluation-interpolation method, the algorithm is
randomized and the probability of failure  can be made arbitrarily small.  It
is even less probable that the algorithm returns twice the same wrong result.
It is not possible to fool the algorithm on purpose with a specific input.
In the guessing method, we do not know how to evaluate the risk of failure and
the algorithm is deterministic so an error will be repeated again and again.
It is in principle possible to fool the method with input designed for this purpose.

When a risk of failure is not acceptable, it is possible to compute
certificates which can be used \emph{a posteriori} to prove that what has been
computed is correct, see~\S\ref{sec:comp-cert}.

\subsection{Implementation of~$[\,]_r$ using \gros}
Let~$M$ be the module~$\Omega^{n+1}\oplus\Omega^n$, that is the free module
generated by~$\omega$ and the~$\xi_i$, recall the definitions
in~\S\ref{sec:diffforms}  A convenient way to implement the reduction~$[\,]_r$
is to compute a reduced \gro\footnote{See \cite[chap.~5]{CoxLitOSh98} for
details about \gros\ for modules, the division algorithm, etc.} say~$G$, of the
submodule~$P$ of~$M$ generated by the~$\partial_i f \omega - \xi_i$, that
is~$\udfw\xi_i - \xi_i$.  We choose on~$M$ a monomial ordering,
denoted~$\succ$, such that for all multi-indices~$I$ and~$J$, and all
integer~$j$
\begin{equation}
  |I| +1 \geq |J| + N \Longrightarrow x^I \omega \succ x^J \xi_j.
  \label{eqn:ord-mon}
\end{equation}
For example, any position-over-term (POT) ordering with~$\omega \succ \xi_0
\succ \xi_1 \succ \dotsb$ is fine.  But a term-over-position (TOP),
with~$\omega \succ \xi_0 \succ \xi_1 \succ \dotsb$, extending a graded ordering
on~$A$ works as well. This gives some flexibility in the implementation.
Let~$\mop{rem}_G$ denote the remainder on division by~$G$.  The
condition~\eqref{eqn:ord-mon} on the order is enough to ensure that~$\succ$ behaves
like an order eliminating~$\omega$.

The reason is the following.  If we give to~$\omega$ the degree~$1$ and to
each~$\xi_i$ the degree~$N$, then~$P$ is a homogeneous submodule of~$M$.  Thus
any reduced \gro\ $G$ of~$P$, whatever the monomial order, contains only
homogeneous elements and the remainder on division by~$G$ of a homogeneous
element of degree~$d$ is homogeneous of degree~$d$.  In particular we have
the
\begin{lemma}
  Let~$\alpha$ be an element of~$\Omega^{n+1}$.  Then the coefficient of~$\omega$
  in~$\mop{rem}_G\alpha$ is zero if and only if~$\alpha\in\udfw\Omega^n$.  In
  this case~$\alpha = \udfw\mop{rem}_G\alpha$.
  \label{lem:remG}
\end{lemma}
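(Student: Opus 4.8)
The plan is to reduce the whole statement to one order-theoretic observation about the monomial order $\succ$. Grade $M$ by $\deg\omega=1$ and $\deg\xi_i=N$, so that $P$ is homogeneous and, as recalled above, the reduced \gro\ $G$ consists of homogeneous elements. First I would prove the following: \emph{if $g\in M$ is homogeneous and its $\omega$-component is nonzero, then $\mop{LM}(g)$ is an $\omega$-monomial}. Indeed, if $g$ is homogeneous of degree $d$, then every monomial $x^I\omega$ occurring in $g$ has $|I|+1=d$ and every monomial $x^J\xi_j$ occurring has $|J|+N=d$, so $|I|+1\geq|J|+N$, and \eqref{eqn:ord-mon} puts every $\omega$-monomial above every $\xi_j$-monomial present. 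Contrapositively, any $g\in G$ whose leading monomial is a $\xi$-monomial has zero $\omega$-component, hence lies in $\Omega^n$.

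Next I would record the elementary shape of $P$: a general element of $P$ is $\bigl(\sum_i b_i\partial_i f\bigr)\omega-\sum_i b_i\xi_i$ with $b_i\in A$, so $a\omega+\sum_i c_i\xi_i$ lies in $P$ if and only if $a=-\sum_i c_i\partial_i f$; note also that $\udf\wedge\bigl(\sum_i c_i\xi_i\bigr)=\bigl(\sum_i c_i\partial_i f\bigr)\omega$. From the observation above I would then deduce that $\mop{rem}_G$ maps $\Omega^n$ into $\Omega^n$: when one divides an element of $\Omega^n$ by $G$, every monomial that ever occurs is a $\xi$-monomial, so only those $g\in G$ with a $\xi$-monomial leading term get used, and those lie in $\Omega^n$; hence the computation never leaves $\Omega^n$.

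The lemma then follows. Write $\alpha=a\omega$ and $\mop{rem}_G\alpha=c\omega+\beta$ with $\beta=\sum_i d_i\xi_i\in\Omega^n$. Since $\alpha-\mop{rem}_G\alpha\in P$, the description of $P$ gives $a-c=\sum_i d_i\partial_i f$, that is $(a-c)\omega=\udf\wedge\beta$. If $c=0$, this reads $\alpha=\udf\wedge\beta=\udfw\mop{rem}_G\alpha$, so in particular $\alpha\in\udfw\Omega^n$. Conversely, if $\alpha=\udf\wedge\gamma$ with $\gamma=\sum_i e_i\xi_i\in\Omega^n$, then $\alpha-\gamma=\bigl(\sum_i e_i\partial_i f\bigr)\omega-\sum_i e_i\xi_i\in P$, hence $\mop{rem}_G\alpha=\mop{rem}_G\gamma$, which lies in $\Omega^n$ by the previous step; so $c=0$.

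The only genuine work is the stability statement $\mop{rem}_G(\Omega^n)\subseteq\Omega^n$, and it rests entirely on the order-theoretic observation of the first paragraph — which is exactly what the hypothesis \eqref{eqn:ord-mon} on $\succ$ was designed to guarantee. Everything after that is bookkeeping with the definition of $P$.
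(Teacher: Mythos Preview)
Your proof is correct and follows the same strategy as the paper: both directions rest on the ordering hypothesis~\eqref{eqn:ord-mon} via the observation that in a homogeneous element of~$M$ the $\omega$-monomials dominate the $\xi$-monomials. The only cosmetic difference is that for the converse you isolate the stability $\mop{rem}_G(\Omega^n)\subseteq\Omega^n$ by tracing the division algorithm, whereas the paper bounds the leading monomial of $\mop{rem}_G(\beta)$ by that of~$\beta$ directly; the content is the same.
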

\begin{proof}
  By definition of~$G$ there exist polynomials~$c_i$ such that
  \[ \alpha = \mop{rem}_G(\alpha) + \sum_{i=0}^n c_i (\udfw\xi_i - \xi_i). \]
  If the coefficient of~$\omega$ in~$\mop{rem}_G(\alpha)$ is zero
  then~$\mop{rem}_G(\alpha)$ is in~$\Omega^n$.
  Identifying the components gives
  \[ \alpha = \udfw \sum_{i=0}^n c_i \xi_i
    = \left(\sum_{i=0}^n c_i\partial_i f\right)\omega
    \quad\text{and}\quad \mop{rem}_G(\alpha) = \sum_i c_i\xi_i. \]
  
  Conversely, assume that~$\alpha=\udfw\beta$, for some~$\beta$
  in~$\Omega^n$.  We may assume that~$\alpha$ is homogeneous of degree~$d$ and
  that~$\beta$ is homogeneous of degree~$d-N$.  In particular~$\alpha - \beta$
  is in~$P$ and~$\mop{rem}_G(\alpha - \beta) = 0$, since~$G$ is a \gro\
  of~$P$.  By linearity~$\mop{rem}_G(\alpha)$ equals~$\mop{rem}_G(\beta)$.
  
  For the grading introduced above, the element~$\beta$ is homogeneous of
  degree~$d-n$, thus so is~$\mop{rem}_G(\beta)$.  Furthermore, the leading
  monomial of~$\mop{rem}_G(\beta)$, with respect to~$\succ$, is at most the leading
  monomial of~$\beta$, which has the form~$x^I \xi_i$ with~$|I| =
  d-N-n$.  The claim follows since no monomial of the form~$x^J \omega$ has
  degree~$d-n$ (with the alternative grading) and is less than~$x^I \xi_i$,
  thanks to hypothesis~\eqref{eqn:ord-mon}.
\end{proof}

In the same way we prove that
\begin{lemma}
  The intersection~$G\cap \Omega^{n}$ is a \gro\ of~${\rm Syz}$.
  \label{lem:gbSyz}
\end{lemma}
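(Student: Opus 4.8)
The plan is to realize~${\rm Syz}$ as an intersection and then to reuse, almost verbatim, the homogeneity argument already exploited in the proof of Lemma~\ref{lem:remG}. First I would observe that, by the very definition of~$P$ as the submodule of~$M$ generated by the~$\udfw\xi_i-\xi_i = \partial_i f\,\omega-\xi_i$, an element~$\beta=\sum_{i=0}^n b_i\xi_i$ of~$\Omega^n$ lies in~$P$ if and only if~$(b_0,\dotsc,b_n)$ is a syzygy of~$\partial_0 f,\dotsc,\partial_n f$: if~$\beta=\sum_i c_i(\partial_i f\,\omega-\xi_i)$, then identifying the~$\omega$-components forces~$\sum_i c_i\partial_i f=0$ and then~$\beta=-\sum_i c_i\xi_i$, while conversely any syzygy manifestly yields such a~$\beta$. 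Hence~$P\cap\Omega^n={\rm Syz}$, and the lemma reduces to showing that~$G\cap\Omega^n$ is a \gro\ of~$P\cap\Omega^n$.

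Next I would bring in the auxiliary grading used before: give~$\omega$ the degree~$1$ and each~$\xi_i$ the degree~$N$, so that the generators~$\partial_i f\,\omega-\xi_i$ are homogeneous of degree~$N$ and~$P$ is a homogeneous submodule of~$M$; consequently~$G$ consists of homogeneous elements. For a nonzero homogeneous~$g\in M$ of this degree~$d$, every monomial~$x^I\omega$ occurring in~$g$ has~$|I|=d-1$ and every monomial~$x^J\xi_j$ has~$|J|=d-N$, so~$|I|+1=d=|J|+N$ and hypothesis~\eqref{eqn:ord-mon} gives~$x^I\omega\succ x^J\xi_j$. Therefore the leading term~$\mop{LT}(g)$ is a multiple of~$\omega$ as soon as~$g$ has a nonzero~$\omega$-component; equivalently, $\mop{LT}(g)\in\Omega^n$ implies~$g\in\Omega^n$. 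In particular~$G\cap\Omega^n$ is exactly the set of~$g\in G$ whose leading term lies in~$\Omega^n$.

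Finally I would close with the standard elimination-type argument. Since the~$\partial_i f$ are homogeneous of the same degree, ${\rm Syz}$ is a graded submodule of~$\Omega^n$, so it is enough to treat a nonzero homogeneous~$\beta\in{\rm Syz}=P\cap\Omega^n$. As~$G$ is a \gro\ of~$P$, the leading monomial~$\mop{LT}(\beta)$ is divisible by~$\mop{LT}(g)$ for some~$g\in G$; but~$\mop{LT}(\beta)$ has the form~$x^K\xi_k$ because~$\beta\in\Omega^n$, so in the module sense~$\mop{LT}(g)$ must be of the form~$x^{K'}\xi_k$ with~$x^{K'}\mid x^K$, hence~$\mop{LT}(g)\in\Omega^n$ and, by the previous step, $g\in G\cap\Omega^n$. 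Thus the leading terms of the elements of~$G\cap\Omega^n$ generate~$\mop{LT}({\rm Syz})$; since~$G\cap\Omega^n\subset{\rm Syz}$ by construction, this is precisely the statement that~$G\cap\Omega^n$ is a \gro\ of~${\rm Syz}$ (and in particular generates it). The one point requiring care is exactly the one already isolated for Lemma~\ref{lem:remG}: the order~$\succ$ need not be a genuine elimination order for~$\omega$, so one cannot argue directly and must instead combine the homogeneity of~$G$ with~\eqref{eqn:ord-mon} to force the leading term of any~$\Omega^n$-element of~$P$ to remain inside~$\Omega^n$.
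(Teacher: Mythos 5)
Your proof is correct and follows exactly the route the paper intends when it writes ``In the same way we prove that'': identify~$P\cap\Omega^n$ with~${\rm Syz}$, use the auxiliary grading ($\deg\omega=1$, $\deg\xi_i=N$) together with hypothesis~\eqref{eqn:ord-mon} to show that~$\succ$ acts as an elimination order for~$\omega$ on the homogeneous elements of~$G$, and then run the standard elimination argument. Nothing to add.
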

Together with a \gro\ of~${\rm Syz}'$, this \gro\ can be used to compute a basis of~$\Syz_q/\tSyz_q$ in the following way.
Using the \gros, we compute the set
\[ S \eqdef \left\{ \mop{lm}(\alpha) \st \alpha\in\Syz_q \right\} \setminus \left\{ \mop{lm}(\alpha) \st \alpha\in\tSyz_q \right\}. \]
Then, for each element~$\alpha$ of~$S$ we pick an element of~$\Syz_q$ whose leading monomial is~$\alpha$.
Those elements form a basis of~$\Syz_q/\tSyz_q$.

\begin{algo}
\centering
\begin{algorithmic}
  \State \emph{Input} --- $\alpha$ an element of~$T^{n+1}$ and~$q$ an integer
  \State \emph{Output} --- $\redgd_q(\alpha)$ as defined in~\S\ref{sec:redgd}  
  \Procedure{RedStep}{$\alpha$, $q$}
    \State $\alpha' \gets \alpha - \alpha_q$
    \State $\rho + \beta \gets \mop{rem}_G(\alpha_q)$, with~$\rho\in\Omega^{n+1}$ and~$\beta\in\Omega^n$
    \State {\bf return} $\alpha' + \rho + \ud\beta$
  \EndProcedure
\end{algorithmic}
\begin{algorithmic}
  \State \emph{Input} --- $r \geq 1$ and~$q \geq 0$ integers
  \State \emph{Output} --- a basis of~$X_q^r$, as defined in~\S\ref{sec:fastcomp}  
  \Procedure{BasisX}{$r$, $q$}
    \If{$r = 1$}
    \State {\bf return} $\left\{ \ud\beta \st \beta \in \text{(a basis of~$\Syz_{q-1}/\tSyz_{q-1}$)} \right\}$
    \Else
    \State $X \gets \textsc{BasisX}(r-1, q+1)$
    \State {\bf return} $\textsc{Echelon}( 
      \textsc{BasisX}(1, q) \cup \left\{ \textsc{RedStep}(\alpha, q) \in X \st \deg \alpha = qN \right\} )$
    \EndIf
  \EndProcedure
\end{algorithmic}
\begin{algorithmic}
  \State \emph{Input} --- $\alpha$ an element of~$T^{n+1}$, $r$ a positive integer
  \State \emph{Output} --- $[\alpha]'_r$ as defined in~\S\ref{sec:fastcomp}  
  \Procedure{Reduction}{$\alpha$, $r$}
    \State $q \gets \deg \alpha / N$ and $\alpha' \gets \alpha - \alpha_q$
    \State $\rho \gets \mop{rem}(\textsc{RedStep}(\alpha_q, q),\textsc{BasisX}(r, q))$
    \State {\bf return} $\rho_q + \textsc{Reduction}(\alpha'+\rho_{q-1}, r)$
  \EndProcedure
\end{algorithmic}
\caption{Computation of~$[\,]_r$}
\label{algo:redr}
\end{algo}

\gros\ in the module~$M$ can be \emph{emulated} by \gros\ in the polynomial ring~$A$
with two extra variables, say~$u$ and~$v$.  Let~$A'$ be~$A[u,v]$, let~$\omega'$
denote~$u^{n+1}$ and~$\xi'_i$ denote~$u^{n-i}v^{i+1}$.  Let~$M'$ be the~$A$-submodule
of~$A'$ generated by~$\omega'$ and~$\xi'_i$.  Let~$P'$ be the ideal of~$A'$
generated by~$\partial_if \omega' - \xi'_i$ and all the monomials~$u^p v^q$,
with~$p+q=n+2$.  Let~$\varphi$ be the~$A$-linear map from~$M'$ to~$M$
sending~$\omega'$ to~$\omega$ and~$\xi'_i$ to~$\xi_i$.  Finally, let~$G'$ be a
\gro\ with respect to any graded monomial ordering~$\succ'$, say the graded
reverse lexicographic ordering, with~$u \succ v \succ x_0 \succ \dotsb \succ
x_n$.

If~$\succ$, the monomial ordering for~$M$, is the TOP ordering proposed above,
then we have~$\varphi(\mop{rem}_{G'} \alpha) = \mop{rem}_G \varphi(\alpha)$,
and the proof is left to the reader.

The computation of~$X_q^r$ and~$[\,]_r$ is detailed in
Algorithm~\ref{algo:redr}.  The function \textsc{Echelon} takes as input a
finite subset~$S$ of~$T^{n+1}$ and outputs a basis in echelon form
of~$\mop{Vect}(S)$, with respect to the monomial order~$\succ$: that is, a basis~$B$ of~$\mop{Vect}(S)$ such that for all element~$b$ of~$B$, the leading monomial of~$b$ does not appear with a non-zero coefficient in the other elements of~$B$.

\subsection{Evaluation and interpolation scheme}\label{sec:eval-interp}

Let~$h(t) = p/q$ be an element of~$\Q(t)$ such that~$q$ is a monic polynomial.
Let~$d$ be the maximum of~$\deg p$ and~$\deg q$, and~$M$ be the maximum of the
absolute values of numerators and denominators of the coefficients of~$p$ and~$q$.
Given distinct primes~$p_1,\dotsc,p_n$, distinct rational
numbers~$u_1,\dotsc,u_m$ and the evaluations~$a_{i,j} \equiv h(u_j) \pmod{p_i}$,
the fraction~$h$ can be reconstructed given that no~$p_i$ divides the
denominator of some coefficient of~$q$, no~$u_j$ annihilates~$q$,
$\prod_{i=1}^m p_i > 2M$ and~$m > 2d$.  To do so, we first compute~$a_i$
in~$\F_{p_i}(t)$ such that~$a_i \equiv h \pmod{p_i}$, using Cauchy
interpolation~\cite[\S5.8]{GatGer99}.  Then, by the Chinese remainder
theorem, we compute~$A$ such that~$A \equiv h \pmod{\prod_i p_i}$.  And then,
using rational reconstruction~\cite[\S5.10]{GatGer99} to each coefficient
of~$A$, we recover~$h$.  Without \emph{a priori} bounds on~$h$, it is still
possible to try to reconstruct it with the method above.  Assume that we obtain
a result~$h'$, and let~$M'$ and~$d'$ be the analogues of~$M$ and~$d$ for~$h'$.
Under randomness assumptions, the bigger~$\prod_{i=1}^m p_i - 2M'$
and~$m-2d'$ are, the higher is the probability that~$h'=h$.

Any algorithm which inputs and outputs elements of~$\Q(t)$ and which performs
only field operations---addition, multiplication, negation, constant one, zero
test, inversion---in~$\Q(t)$ can be turned into a randomized
evaluation-interpolation algorithm, simply by evaluating the input at~$t=u$
and reducing it in~$\F_p$, for several~$p$ and~$u$, and proceeding to the
computation over~$\F_p$.  Indeed, the execution of the algorithm requires a
finite number of operations, either field operations, which commute with~$\nu$,
or zero test. For generic values of~$p$ and~$u$, these tests yield the same
result on evaluated or unevaluated data.  For specific values of~$p$ and~$u$, a
non-zero quantity can be evaluated to zero, so the computation over~$\F_p$ may
fail or return a result which is not the evaluation of the result of the
computation over~$\Q(t)$. It is important to be able to test that in order to
exclude bad evaluations because the reconstruction process does not handle
possibly wrong evaluations.

The number of evaluation points~$(p,u)$ is chosen, \emph{a
priori} or on-the-fly, so that the reconstruction of the outputs is possible
with high probability of success.  If \emph{a priori} bounds on the output are
known it may be possible to certify the result.  If no bounds are known, then
the evaluation-interpolation algorithm may return a false result, but the
probability of this event can be made arbitrarily small.  This
evaluation-interpolation approach is classical in computer algebra for avoiding the
problem of coefficient swell.

Algorithm~\ref{algo:ct} depends on the derivation~$\delta$, which is not a
field operation, so the conversion to an evaluation-interpolation algorithm is
not completely straightforward.

\subsubsection{Principle}
Let~$u$ be in~$\Q$ and~$p$ be a prime number.  Let~$\nu$ be the partial
function~$\Q(t) \to \F_p$, which consists in evaluating~$t$ in~$u$ and reducing
modulo~$p$. The function~$\nu$ is extended coefficient-wise
to~$\Q(t)[\xx]$,~$\Omega$, matrices, etc.

Let~$f$ be a polynomial in~$\Z[t][\mathbf x]$, and~$\nu(f)$ be its evaluation
in~$\F_p[\xx]$.  We can consider the reductions~$[\,]_r$ associated to~$f$, but
also the \emph{evaluated} reduction, denoted~$[\,]^\nu_r$, associated
to~$\nu(f)$, over~$\F_p$. Given~$\alpha \in T^{n+1}$, and for generic
values of~$p$ and~$u$, the evaluations~$\nu(\alpha)$ and $\nu([\alpha]_r)$ are defined and~$\nu([\alpha]_r) = [\nu(\alpha)]^\nu_r$. 
However, the value of~$\nu(\delta a)$ for some form~$a$ cannot be deduced
from~$\nu(a)$, so Algorithm~\ref{algo:ct} requires an adaptation to fit into an
evaluation-interpolation scheme.

As in Section~\ref{sec:ct}, let~$R = a/f^q$ be a rational  function
in~$\Q(t)$, homogeneous of degree~$-n-1$ with respect to the variables~$\xx$.
Let~$\alpha$ be~$a \omega$.  Once  the value of~$r$ is fixed,
Algorithm~\ref{algo:ct} computes the terms of the sequence~$(\rho_i)_{i\in
\N}$, defined by~$\rho_0 = [\alpha]_r$ and~$\rho_{i+1} = [\delta(\rho_i)]_r$,
until it finds a linear dependency relation between the~$\rho_i$.
For a prime~$p$ and an evaluation point~$u$, can we compute~$\nu(\rho_i)$ using
only operations in~$\F_p$?  The answer seems to be negative, but there are two
ways to circumvent this issue.

The first one is to define~$\rho_i$ to be~$[\delta^i(\alpha)]_r$.  With this
definition, the principle and the halting condition~$\deg \rho_i \leq nN$ of
Algorithm~\ref{algo:ct} remain valid. And given~$\nu(\delta^i(\alpha))$, which
is certainly easy to compute, it is possible in this case to
compute~$\nu(\rho_i)$ using only operations in~$\F_p$.  This approach is
feasible but it becomes terrible if~$i$ reaches high values: indeed, the degree
of~$\delta^i(\alpha)$ is~$\deg\alpha + iN$.

Another approach is to compute the matrix of the linear map, say~$m$, such that
\[ \rho_{i+1} = \rho_i^\delta + m(\rho_i), \]
where~$\rho_i^\delta$ denotes the component-wise differentiation of~$\rho_i$, as opposed
to~$\delta(\rho_i)$ which is~$\rho_i^\delta - f^\delta \rho_i$. Such a linear map
exists and its matrix in a certain basis can be computed by
evaluation-interpolation.

\subsubsection{The matrix of~$\delta$}
\label{sec:matdelta}

Let~$J_r$ be the image~$[T^{n+1}]_r$ of the reduction map~$[\,]_r$.
By construction, the reduction~$[\,]_r$ is idempotent, that is~$[\alpha]_r=\alpha$ for all~$\alpha\in J_r$.
The evaluation-interpolation algorithm relies on the following property of the reduction map~$[\,]_r$:
\begin{proposition}
  The space~$J_r$ is stable under component-wise differentiation.
  \label{prop:stabdelta}
\end{proposition}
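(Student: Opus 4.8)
I want to show that $J_r = [T^{n+1}]_r$ is stable under component-wise differentiation $\alpha \mapsto \alpha^\delta$. Recall that $\delta$ on forms is defined by $\delta(\alpha) = \alpha^\delta - f^\delta \alpha$, so $\alpha^\delta = \delta(\alpha) + f^\delta\alpha$. The key facts I will lean on are: $\delta$ commutes with the twisted differential $\Df$ (stated in the excerpt, since $\delta$ commutes with each $\partial_i$), that $J_r$ is exactly the image of the idempotent $[\,]_r$ hence $[\beta]_r = \beta$ for $\beta\in J_r$, and that $\ker[\,]_r \subset \Df(T^n)$ (Theorems~\ref{thm:caracredr} and~\ref{thm:fastcomp}). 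The idea is that $\alpha^\delta$ differs from $[\alpha^\delta]_r$ by an element of $\ker[\,]_r \subset \Df(T^n)$, so it suffices to show $[\alpha^\delta]_r = \alpha^\delta$, i.e. that $\alpha^\delta$ is already reduced.

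\textbf{First steps.} Take $\alpha\in J_r$, so $[\alpha]_r=\alpha$. I compute $\alpha^\delta = \delta(\alpha) + f^\delta \alpha$. Now $\delta(\alpha)\in T^{n+1}$ and, because $\delta$ commutes with $\Df$ and $\alpha = [\alpha]_r \equiv \alpha \pmod{\Df(T^n)}$ trivially, I get $\delta(\alpha) \equiv \delta(\alpha) \pmod{\Df(T^n)}$ — not yet useful on its own. The more substantive point is: what is $[\delta(\alpha)]_r$? Since $\delta$ commutes with $\Df$, and since the construction of $[\,]_r$ via $X_q^r$ and $\redgd$ is itself built from $\Df$, $\ud$, and $\udfw$, one expects $\delta([\alpha]_r) \equiv [\delta(\alpha)]_r$; but I only need the weaker statement that $\delta(\alpha) \equiv [\delta(\alpha)]_r \pmod{\Df(T^n)}$, which holds by Proposition~\ref{thm:caracredr}. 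Combining, $\alpha^\delta \equiv [\delta(\alpha)]_r + f^\delta\alpha \pmod{\Df(T^n)}$. This is not quite enough; I need to land inside $J_r$, not just in a fixed coset.

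\textbf{The real argument.} The cleaner route: apply $[\,]_r$ to $\alpha^\delta$ and show nothing moves. Since $[\,]_r$ is filtered and $\alpha^\delta$ has the same top degree $q$ as $\alpha$ (both lie in $T^{n+1}_q$ when $\alpha\in T^{n+1}_q$; component-wise differentiation with respect to $t$ preserves the $\xx$-degree), it is enough to show the top-degree elementary step $\rho_q^r$ fixes $\alpha^\delta$. Now $\alpha = [\alpha]_r$ means $\redgd_q(\alpha_q) = \alpha_q$ after removing $X_q^r$-components; unwinding, $\alpha_q = \rem(\alpha_q, \udfw T^n_{q-1})$ and moreover $\alpha_q$ has zero $X_q^r$-part. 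Differentiating $\alpha_q = \rem(\alpha_q,\udfw T^n)$: since $f^\delta \alpha_q\in T^{n+1}_{q+??}$ — wait, $f^\delta$ raises $\xx$-degree by $N$, so this breaks the degree bookkeeping. This is exactly why the proposition is phrased for \emph{component-wise} differentiation $\alpha^\delta$, which keeps degree, rather than for $\delta$. So I work directly with $\alpha^\delta$. Differentiating the identity $\alpha_q = a_q\omega$ with $a_q = \rem(a_q, \Jf)$ term-by-term in $t$: the derivative $a_q^\delta$ need not be reduced mod $\Jf$ because $\Jf$ itself depends on $t$. So the statement is genuinely non-trivial, and the proof must use that $[\,]_r$-reducedness is characterized cohomologically, not combinatorially.

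\textbf{Expected main obstacle and resolution.} The crux is that $J_r$ is not defined by a fixed monomial-support condition — it depends on $f$ through $\Jf$ and the $X_q^r$, all of which vary with $t$ — so "being reduced" is not preserved by term-wise $t$-differentiation for formal reasons. The resolution I anticipate the author uses: characterize $J_r$ intrinsically. Namely, by Proposition~\ref{thm:caracredr}, $\alpha\in J_r$ iff $\alpha$ is the $[\,]_r$-normal form of itself, and one shows $J_r$ is a complement to $\Df(F_{\bullet+r-2}T^n)$-type subspaces in a filtered sense, giving $T^{n+1} = J_r \oplus (\text{something involving }\Df)$ compatibly with the filtration. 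Then for $\alpha\in J_r$, write $\alpha^\delta = \pi(\alpha^\delta) + \Df(\gamma)$ where $\pi$ is the projection onto $J_r$; apply $[\,]_r$, which kills $\Df(\gamma)$ only if $\gamma$ sits in low enough filtration. Here is where one invokes that $\delta$ commutes with $\Df$ to control the filtration level of $\gamma$: differentiating a relation $\alpha - [\alpha]_r = \Df\beta$ with $\beta \in F_{q+r-2}T^n$ gives $\alpha^\delta - ([\alpha]_r)^\delta = \Df(\beta^\delta)$ with $\beta^\delta\in F_{q+r-2}T^n$ still (component-wise $\delta$ preserves degree), so $\alpha^\delta$ and $\alpha^\delta$'s reduction differ by $\Df$ of something in the correct filtration piece — hence $[\alpha^\delta]_r = [(\alpha)^\delta]_r$ lies in the same place, and a short induction on the filtration degree (as in Corollary~\ref{coro:exhaus-red}) forces $[\alpha^\delta]_r = \alpha^\delta$ once we know $\alpha = [\alpha]_r$. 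I would therefore structure the proof as: (1) note $\alpha^\delta$ has the same degree components as $\alpha$; (2) differentiate the defining identity $\alpha = [\alpha]_r$, i.e. $\alpha - \red_1^r\cdots\red_q^r\alpha$, component-wise, using that each $\redgd$ and each $X_q^r$-reduction commutes with $\delta$ up to a $\Df$-exact and filtration-preserving error; (3) conclude $[\alpha^\delta]_r - \alpha^\delta \in \Df(T^n)$ with the filtration tight enough that idempotence of $[\,]_r$ on $J_r$ forces it to vanish. The main obstacle is bookkeeping the filtration levels through step (2) — making sure differentiating the construction of $X_q^r$ (which involves $\redgd_q$ of lower-degree pieces) does not leak into a higher filtration than allowed — and this is handled precisely by the fact, used repeatedly in the paper, that $\delta$ commutes with $\partial_i$ hence with $\ud$ and with $\Df$, together with $f^\delta$ being absent from the \emph{component-wise} operation.
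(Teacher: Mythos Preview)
There is a genuine gap, and it stems from a wrong turn you take explicitly. You write that ``$J_r$ is not defined by a fixed monomial-support condition'' and conclude the proof ``must'' be cohomological rather than combinatorial. In fact the opposite is true, and this is precisely the paper's argument: $J_r$ is spanned over~$\K$ by monomials in the~$x_i$. The reason is that every elementary step defining $[\,]_r$ (and $[\,]'_r$) is a $\rem$ map with respect to a subspace and a monomial order; the image of such a map is the $\K$-span of the monomials that are \emph{not} leading monomials of the subspace. Consequently $J_r = \mop{Vect}_\K(E)$ for a set $E$ of monomials $x^I\omega$. Component-wise differentiation acts only on the coefficients in $\K$, so $\bigl(\sum_I c_I x^I\omega\bigr)^\delta = \sum_I c_I^\delta\, x^I\omega$ stays in $\mop{Vect}_\K(E) = J_r$. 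That is the whole proof.

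Your alternative route also contains a concrete error. You compute ``differentiating a relation $\alpha - [\alpha]_r = \Df\beta$ \dots gives $\alpha^\delta - ([\alpha]_r)^\delta = \Df(\beta^\delta)$'', which assumes $(\Df\beta)^\delta = \Df(\beta^\delta)$. But
\[
(\Df\beta)^\delta \;=\; \ud(\beta^\delta) - (\udf\wedge\beta)^\delta \;=\; \Df(\beta^\delta) - \ud(f^\delta)\wedge\beta,
\]
so component-wise differentiation does \emph{not} commute with $\Df$; only the full $\delta(\alpha)=\alpha^\delta - f^\delta\alpha$ does. This extra term $\ud(f^\delta)\wedge\beta$ is exactly what would derail the filtration bookkeeping you anticipate in your last paragraph. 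Moreover, for $\alpha\in J_r$ the relation $\alpha - [\alpha]_r = \Df\beta$ is trivially $0=0$, so differentiating it yields nothing. The cohomological machinery is neither needed nor sufficient here; the statement is a purely linear-algebraic fact about reductions defined via monomial orders.
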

\begin{proof}[Sketch of the proof]
  This is a consequence of the fact that~$J_r$ is generated by monomials.  More
  precisely, let~$E$ be the, finite of infinite, minimal
  sequence~$(b_0,\dotsc)$ of monomials of~$T^{n+1}$ which
  generates~$T^{n+1}/\ker [\,]_r$; minimal with respect to the lexicographic
  order on sequences of monomials, where the monomials are compared
  with~$\prec$.  Then~$E$ is a basis of~$J_r$ containing only monomials.
\end{proof}

As a consequence $[\delta(\rho)]_r = \rho^\delta - [f^\delta \rho]_r$, for all~$\rho \in J_r$.

Let~$\mathcal M$ be the least set of monomials of~$T^{n+1}$ such
that~$\mop{Vect}\mathcal M$ contains~$\rho_0$ and is stable under the map~$m
:\rho \mapsto [f^\delta\rho]_r$, and let~$B$ be the matrix in~$\Q(t)^{\mathcal
M\times\mathcal M}$ of the map~$m_{|\mop{Vect}\mathcal M}$ in the
basis~$\mathcal M$.  For generic values of~$p$ and~$u$, the basis~$\mathcal M$,
the matrix~$\nu(B)$ and~$\nu(\rho_0)$ are all computable using only operations
in~$\F_p$, once given~$\nu(f)$, $\nu(f^\delta)$ and~$\nu(\alpha)$.
Once~$\mathcal M$, $B$ and~$\rho_0$ are reconstructed over~$\Q(t)$,
the~$\rho_i$ are easily computed with~$\rho_{i+1} = \rho_i^\delta - m(\rho_i)$, and
the minimal operator~$\cL = \sum_i a_i(t)\delta^i$ such that~$\sum_i
a_i(t)\rho_i = 0$ can be deduced.  It seems to be a good idea to reconstruct~$B$
and~$\rho_0$ over~$\F_p(t)$ and compute~$\cL$ modulo~$p$, and only then to use
several moduli to reconstruct~$\cL$ over~$\Q(t)$.  The full procedure is
summarized by Algorithm~\ref{algo:ctei}.

\begin{algo}
\centering
\begin{algorithmic}
  \AlgoInput $R = a/f^q$ a rational function in~$\Q(t)(\mathbf x)$, homogeneous of degree~$-n-1$ w.r.t.~$\xx$
  \AlgoOutput $\cL \in \dop$ an annihilating operator of~$\oint R$, with high probability
  \Procedure{PicardFuchs}{$a/f^q$}
    \Loop
      \State $p\gets$ random prime number
      \State Compute~$\mathcal M$, $\rho_0$ and~$\mop{Mat}_{\mathcal M} m$, as defined in~\S\ref{sec:matdelta}, over~$\F_p(t)$
      by repeated evaluation of~$t$ and rational interpolation.
      \State Compute~$\rho_0, \rho_1, \dotsc$ over~$\F_p(t)$, with~$\rho_{i+1} = \rho_i^\delta - m(\rho_i)$, until finding a
      relation~$\rho_n + \sum_{i=0}^{n-1} a_i\rho_i = 0$ over~$\F_p(t)$.
      \State Using the Chinese remainder theorem and computations modulo previous
      values of~$p$, try to lift the~$a_i$ in~$\Q(t)$.
      \If{possible}
      \State {\bf return} the lifting.
      \EndIf
    \EndLoop
  \EndProcedure
\end{algorithmic}
\caption{Computation of annihilating operators of the periods of a rational
function, randomized evaluation-interpolation method}
\label{algo:ctei}
\end{algo}

\subsubsection{Estimation of the probability of success}

Let~$\mathcal M$, $\rho_0$ and~$A = \mop{Mat}_{\mathcal M}m$ as in~\S\ref{sec:matdelta}, computed over~$\Q(t)$.  For some~$u$ in~$\Q$ and
some prime~$p$, let~$\mathcal M'$, $\rho'_0$ and~$A'$ be the analogues computed
over~$\F_p$.  It is not hard to check that~$\nu(\ker [\,]_r)$ equals~$\ker
[\,]_r^\nu$, where~$\nu(\ker [\,]_r)$ is the set of all~$\alpha$ in~$\ker
[\,]_r$ such that~$\nu(\alpha)$ is defined.
Let~$\alpha$ be an element of~$T^{n+1}$, whose coefficients are polynomials in~$t$ with integer coefficients.
Do we have~$\nu([\alpha]_r) = [\nu(\alpha)]_r^\nu$?
The fact that~$J_r$ is generated by monomials implies that~$[\alpha]_r$ equals~$\mop{rem}(\alpha,\ker [\,]_r)$, and that~$[\nu(\alpha)]^\nu_r$ equals~$\mop{rem}(\alpha,\ker [\,]^\nu_r)$.
The equality is equivalent to~$\nu(\mop{rem}(\alpha, {\ker[\,]_r})) = \mop{rem}(\alpha, \nu(\ker[\,]_r))$.
A sufficient condition is that the set~$L$ of leading monomials of
elements of~$\ker[\,]_r$ equals the set~$L'$ of leading monomials
of~$\nu(\ker[\,]_r)$.  Since~$\mathcal M$ (resp.~$\mathcal M'$) is the
complement of~$L$ (resp.~$L'$) in the set of all monomials
of~$T^{n+1}$, we obtain
\begin{lemma}
  If~$\mathcal M = \mathcal M'$ then~$A' = \nu(A)$ and~$\rho_0' = \nu(\rho_0)$.
  \label{lem:evaltestM}
\end{lemma}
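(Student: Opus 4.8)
The plan is to deduce the lemma from the normal-form description of the reductions that is already in place. Because $J_r$ is generated by monomials (this is exactly the content of the proof of Proposition~\ref{prop:stabdelta}), the map $[\,]_r$ is the normal form modulo its kernel: for any $\beta$, the element $[\beta]_r=\mop{rem}(\beta,\ker[\,]_r)$ is the unique representative of the coset $\beta+\ker[\,]_r$ supported on $\mathcal M$, where $\mathcal M$ is the complement, among the monomials of $T^{n+1}$, of the set $L$ of leading monomials of $\ker[\,]_r$; likewise $[\,]_r^\nu$ is the normal form modulo $\ker[\,]_r^\nu=\nu(\ker[\,]_r)$, supported on $\mathcal M'$, the complement of the leading-monomial set $L'$ of $\ker[\,]_r^\nu$. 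In particular the hypothesis $\mathcal M=\mathcal M'$ is literally the statement $L=L'$. I would also record, from the linearity and idempotency of $[\,]_r$, that $\beta-[\beta]_r\in\ker[\,]_r$ for every $\beta$.

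The heart of the argument is that, under $\mathcal M=\mathcal M'$, the partial map $\nu$ commutes with reduction: whenever $\beta\in T^{n+1}$ is such that $\nu(\beta)$ and $\nu([\beta]_r)$ are defined, one has $\nu([\beta]_r)=[\nu(\beta)]_r^\nu$. Indeed, write $\beta=[\beta]_r+\gamma$ with $\gamma\in\ker[\,]_r$; applying $\nu$ gives $\nu(\beta)=\nu([\beta]_r)+\nu(\gamma)$ with $\nu(\gamma)\in\nu(\ker[\,]_r)=\ker[\,]_r^\nu$, while $\nu([\beta]_r)$ is still supported on $\mathcal M=\mathcal M'$, hence on monomials that are not leading monomials of $\ker[\,]_r^\nu$. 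By uniqueness of the normal form modulo $\ker[\,]_r^\nu$ this forces $\nu([\beta]_r)=\mop{rem}(\nu(\beta),\ker[\,]_r^\nu)=[\nu(\beta)]_r^\nu$.

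It then remains to feed the two objects of the statement into this commutation. Taking $\beta=a\omega$ gives $\nu(\rho_0)=\nu([a\omega]_r)=[\nu(a\omega)]_r^\nu=\rho_0'$. For the matrix, recall $A=\mop{Mat}_{\mathcal M}m$ with $m:\rho\mapsto[f^\delta\rho]_r$, so the column of $A$ indexed by a monomial $b\in\mathcal M$ records the coordinates of $[f^\delta b]_r$ in the basis $\mathcal M$, whereas the corresponding column of $A'$ records those of $[\nu(f^\delta)\,b]_r^\nu$ in $\mathcal M'=\mathcal M$. Since $\nu$ is multiplicative on polynomials wherever defined, $\nu(f^\delta b)=\nu(f^\delta)\,\nu(b)=\nu(f^\delta)\,b$, and applying the commutation to $\beta=f^\delta b$ yields $\nu([f^\delta b]_r)=[\nu(f^\delta)\,b]_r^\nu$; reading off coordinates shows the $b$-th columns of $\nu(A)$ and $A'$ coincide, and letting $b$ range over $\mathcal M$ gives $\nu(A)=A'$.

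I expect the only delicate point to be bookkeeping rather than mathematics: one must keep $\nu(f^\delta)$ — an independent datum, because evaluation does not commute with $\delta$-differentiation — distinct from $(\nu f)^\delta$, and verify that $A'$ and $\rho_0'$ are indeed built from $\nu(f)$, $\nu(f^\delta)$, $\nu(a\omega)$ in exactly the way the commutation step reproduces. All the structural inputs — that $J_r$ is monomial-generated, that $\nu(\ker[\,]_r)=\ker[\,]_r^\nu$, and uniqueness of the normal form modulo a subspace with prescribed leading monomials — are already available above.
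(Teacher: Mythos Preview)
Your proposal is correct and follows essentially the same approach as the paper: the paper's argument, given in the paragraph immediately preceding the lemma, hinges on the same facts that~$[\,]_r=\mop{rem}(\cdot,\ker[\,]_r)$ because~$J_r$ is monomial-generated, that~$\nu(\ker[\,]_r)=\ker[\,]_r^\nu$, and that~$\mathcal M=\mathcal M'$ means the leading-monomial sets coincide, so normal forms commute with~$\nu$. You spell out the application to~$\rho_0$ and to each column of~$A$ more explicitly than the paper, which simply states the lemma as a direct consequence of that discussion.
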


Let~$P$ be  the probability that~$\mathcal M' = \mathcal M$.
Assume for simplicity that~$\deg\alpha \leq nN$ and that~$J_r$ is
included in~$F_nT^{n+1}$.  Let~$V$ be the subspace~$\ker[\,]_r\cap
F_{n+1}T^{n+1}$ and let~$\mathcal B$ be an echelonized basis of~$V$, formed by
elements of~$T^{n+1}$ whose coefficients are in~$\Z[t]$.  For the above
equalities to hold, it is enough that for all~$b$ in~$\mathcal B$, the
evaluation~$\nu(\mop{lc} b)$ of the leading coefficient of~$b$ is not zero.

Under the assumption, somewhat excessive, that for random~$p$ and~$u$
the~$\nu(\mop{lc} b)$, with~$b\in\mathcal B$ are independent and uniformly
distributed in~$\F_p$, the probability~$P$ equals~$(1-\frac{1}{p})^{\#
\mathcal B}$. Of course~$\# \mathcal B \leq \dim F_{n+1}T^{n+1}$ and
\[ \dim F_{n+1}T^{n+1} = \sum_{q=0}^{n+1} \binom{qN-1}{n} \leq \frac{(n+3/2)^{n+1} N^n}{(n+1)!}. \]
So that
\begin{equation}\label{eqn:bound-prob}
  P \geq \left( 1-\frac{1}{p} \right)^{\tfrac 54 e^n N^n} \geq \exp\left( -\frac{5 e^n N^n}{2p} \right).
\end{equation}
So we will choose~$p$ significantly bigger than~$e^n N^n$ to have~$P \ll 1$.
The set~$\mathcal M$ is not computed, so it is not possible to
compare it with~$\mathcal M'$. However, we can compare the different~$\mathcal
M'$ obtained for different values of~$p$ and~$u$. Typically, most of them will
be mutually equal---and hopefully equal to~$\mathcal M$---and a few will
differ. We simply drop the pairs~$(p,u)$ giving degenerated specialisation~$\mathcal M'$.

\subsection{Computing partial certificates}
\label{sec:comp-cert}

Recall that if~$\cL\in \dop$ is an annihilating operator of~$\oint a/f$, a certifate for~$\cL$
is a sequence~$C_0,\dotsc,C_n$ of rational functions in~$\K[\xx,\frac{1}{f}]$ such that
\[ \cL(a/f) = \sum_{i=0}^n \partial_i C_i. \]
As already mentioned, a certificate is desirable because is allows to check
\emph{a posteriori} in a simple way that~$\cL$ annihilates~$\oint a/f$,
idependently of the algorithm used to obtain~$\cL$.
However, a certificate is typically huge~\cite[Rem.~11]{BosLaiSal13} 
and computing a one is necessarily very costly.
A compromise is possible: we may compute a certificate for each reduction~$\rho_k$,
as a~$\beta_k\in T^n$ such that
\begin{equation}
  \rho_k =
  \begin{cases}
    \alpha + \Df\beta_0 & \text{if~$k=0$} \\
    \delta(\rho_{k-1}) + \Df\beta_k & \text{if~$k \geq 1$}.
  \end{cases}
  \label{eqn:certpart}
\end{equation}

Thus, to check that the output~$\cL = \sum_{k=0}^n a_k \delta^k$ of
Algorithm~\ref{algo:ctei} annihilates~$\oint a/f$, it is enough to check
Equation~\ref{eqn:certpart} for~$k\leq r$ and to check that~$\sum_k a_k \rho_k
= 0$.  The first checks imply that~$\rho_k \equiv \delta^k \alpha$ modulo~$\Df(T^n)$,
and the last one implies that~$\cL(\alpha)\in \Df(T^n)$, and thus that~$\cL$
annihilates~$\oint a/f$.  Since the~$\rho_k$'s are in~$F_{n}T^{n+1}$,
the~$\beta_k$ are in~$F_{n+r}T^{n}$ which ensures that their size is kept
reasonnable.

It is possible to modify Algorithm~\ref{algo:ctei} to compute these
certificates~$\beta_k$.  With the notations of~\S\ref{sec:matdelta}, it amounts
to compute~$\beta_0\in T^n$ such that~$\alpha = \rho_0+\Df\beta_0$, and to
compute some~$\gamma_\mu\in T^n$, for~$\mu\in {\mathcal M}$, such
that~$m(\mu) = f^\delta \mu + \Df(\gamma_\mu)$.  Since~$\rho_{k-1} =
\delta(\rho_k) + m(\rho_k)$, it is possible to compute the~$\beta_k$'s as
linear combinations of the~$\gamma_\mu$'s.

In the evaluation-interpolation scheme, it is possible to compute~$\beta_0$ and
the~$\gamma_\mu$'s over~$\F_p$, to reconstruct them over~$\F_p(t)$, then to
compute the~$\beta_k$'s over~$\F_p(t)$ and to reconstruct them over~$\Q(t)$.
Of course, it comes at an additional cost but a preliminary implementation seems
to show that this cost is reasonnable.

\section{Application to periods arising from mirror symmetry}
\label{sec:cy}

Batyrev and Kreuzer~\cite{BatKre10} have recently constructed a family
of~210 smooth Calabi--Yau varieties of dimension three with Hodge
number~$h^{1,1}$ equal to one. Their method is based on toric varieties of
reflexive polytopes.  To each variety is associated a one-parameter mirror family
of varieties and we look for the Picard-Fuchs equation of a distinguished
principal period.  This computation is the first step toward the computation of other important
invariants, like, mirror maps, \emph{instanton} numbers, etc\footnote{For
  an introduction to the topic, see \cite{CoxKat99,BatStr95}.}.  The~210 varieties
gather together into~68 different classes of diffeomorphic
manifolds~\cite[table~3]{BatKre10}.  The principal periods associated to
diffeomorphic varieties need not coincide but they are typically expected to
differ only by a rational change of variable.

In concrete terms, we look for a differential equation satisfied by periods of
rational integrals in the form
\begin{equation}\label{eqn:cyper}
  F(t) \eqdef \oint_\gamma \frac{1}{1-t g(x_1,\dotsc,x_4)} \frac{\ud x_1}{x_1}\frac{\ud x_2}{x_2}\frac{\ud x_3}{x_3}\frac{\ud x_4}{x_4},
\end{equation}
where~$g$ is a Laurent polynomial and the integral is taken over the
cycle~$\gamma$ defined by~$|x_i|=\varepsilon$, with~$\varepsilon$ a small
positive real number.  Here~$g$ is~$\sum_v x^v$, where the sum ranges over the
vertices of a reflexive lattice polytope.  For the~210 polytopes under
consideration, \citeauthor{BatKre10} claim that~$F(t)$ satisfies a linear differential
equation of order~$4$, as a consequence of~$h^{1,1}$ being~$1$.  Moreover,
this differential equation should have maximally unipotent monodromy
at~$t=0$.

A power series expansion of the integrand with respect to~$t$ shows that
\begin{equation}
  F(t) = \sum_n \mop{ct}(g^n) t^n,
  \label{eqn:period-ctseq}
\end{equation}
where~$\mop{ct}(g^n)$ stands for the constant term of~$f^n$.
\Citeauthor{BatKre10} have computed Picard-Fuchs operators for
topologies~\#37,~\#40 and~\#43--68 of their list. They used the \emph{guessing} method
presented in the introduction: they computed the power series expansion
of~$F(t)$, using equation~\eqref{eqn:period-ctseq}, until they reached a
degree~$d$ such that they could find a non-zero solution to the equation
\[ \left( \sum_{i=0}^4\sum_{j=0}^d a_{i,j} t^j \theta^i \right) \cdot F(t) = \cO(t^{5(d+1)+1}). \]
The issue with this technique is not the reconstruction step which can be done
efficiently---with respect to the size of the computed operator---but the
computation of the power series expansion: the number of monomials in~$g^k$
is~$\Theta(k^4)$, so the computation of~$N$ terms of~$F(t)$ with this technique
take~$\Theta(N^5)$ operations in~$\Z$, and we may add an order of magnitude to
reflect the binary complexity.

\Citeauthor{Met12}~\cite{Met12} computed four more equations for
topologies~\#24, \#38, \#39 and~\#41.  His method is also guessing, with
modular evaluation techniques, but he managed to improve the space complexity,
not the time complexity though, in the power expansion step and he provided an
implementation optimized with \textsc{Gpu} programming.  Moreover,
\citeauthor{Alm10}~\cite{Alm10} reports that Straten, Metelitsyn and Sch\"omer have
computed one operator for the topology~\#17.  To the best of my
knowledge, no other computation succeeded in the remaining topologies (\#1--16,
\#18-23 , \#25--36, \#42).

With the implementation described in Section~\ref{sec:impl}, I have been able
to compute a differential equation for the~136 remaining integrals, associated
to~35 different topologies.\footnote{The results are available at~\url{http://pierre.lairez.fr/supp/periods}.}

\subsection{Minimal equation and crosschecking}

The equations obtained from the algorithm are not always minimal, for two
reasons.  Firstly they were obtained with~$r=2$ but a higher value might have
caught a lower order equation.  Secondly, the algorithm computes an annihilating
operator of all the periods of a given rational function; a period associated to a given
cycle may satisfy a lower order equation.

Nevertheless, once any differential equation~$\cL$ for~$F(t)$ is obtained, it
is easy to compute efficiently thousands of terms of its power series
expansion: the relation~$\cL(F)=0$ translates into a linear recurrence relation
on the coefficients of the power series expansion and the initial conditions
are given by Equation~\eqref{eqn:period-ctseq}.  Thus we may try to reconstruct
the minimal equation~$\cL_0$.  By contrast to the guessing method, the
reconstructed equation~$\cL_0$ can be proven correct: it is enough to check
that it is a right divisor of~$\cL$, and that it annihilates the first few
terms\footnote{Up to the maximal integral root of the indicial polynomial at zero
of the right quotient of~$\cL$ by~$\cL_0$.} of~$F(t)$.  If the power series expansion does not
reveal a lower order differential equation, we may conjecture that~$\cL$ is
minimal.  Proving it may be done using methods by van~Hoeij~\cite{Hoe97},
see~\S\ref{sec:topo27} for an example.

Since Algorithm~\ref{algo:ctei} is randomized, it is desirable to have
criteria to crosscheck the result.  The Picard-Fuchs equations of periods of
rational integrals are known to have strong arithmetic properties: regular
singularities with rational exponents and nilpotent $p$-curvature for all
prime~$p$, with a finite number of exceptions~\cite{Kat70}.  Checking these
properties is a good confirmation of the correctness of the output: these
properties are so strong that a bad reconstruction would most probably break
them.  In addition, the computation of many terms of the power series expansion
of~$F(t)$ using a annihilating operator~$\cL$ can also be used as a
crosschecking: if the coefficients computed are all integers, as expected in
view of Equation~\eqref{eqn:period-ctseq}, this is also strong indication that
the operator is indeed correct.

\subsection{Description of the results}

In depth treatment is a work in progress with Jean-Marie~Maillard.  This section
presents two examples.\footnotemark
\footnotetext{There are two numberings.  The first one, used in Table~3 of~\cite{BatKre10}, numbers the~68 different topologies, ordered by
  increasing~$h^{1,2}$ number, covering the~210 smooth Calabi-Yau threefolds
  with Picard number~1.  The second one, used in the database
  \url{http://hep.itp.tuwien.ac.at/~kreuzer/math/0802}, numbers in the
  form~v$x.y$ the~198849 reflexive 4D polytopes satisfying an extra property.
  The letter~$x$ indicates the number of vertices.}

\subsubsection{Topology \#42, polytope v25.59}
For the period~\eqref{eqn:cyper} with
\begin{multline*}
  g = w x y z+w x y+\frac{1}{w x y}+w x z+\frac{1}{w x z}+\frac{w y}{z}
  +\frac{z}{w y}+w y+\frac{1}{w y}+\frac{1}{w z}+w+\frac{1}{w}\\+\frac{x z}{y}
  +\frac{y}{x z}+\frac{1}{x y}+x z+\frac{1}{x z}+x+\frac{1}{x}+\frac{z}{y}+\frac{y}{z}+y+\frac{1}{y}+z+\frac{1}{z},
\end{multline*}
where the first few terms of the power series expansion are
\[ F(t) = 1+22t^2+204t^3+3474t^4+57000t^5+1031080t^6+19368720t^7+\mathcal O(t^8). \]
I have computed the following Picard-Fuchs equation
\begin{multline*}
 t^3(7t+1)^2(25t-1)^2(2t+1)^3(101t+43)^3(3t+1)^3\partial^4
 \\+2t^2(7t+1)(25t-1)(2t+1)^2(101t+43)^2(3t+1)^2(848400t^5\\+1012956t^4+413041t^3+62473t^2+1819t-129)\partial^3
 \\+t(7t+1)(25t-1)(2t+1)(101t+43)(3t+1)(4627173600t^8+10573386192t^7
 \\+10004988192t^6+5027593832t^5+1423146511t^4+219009622t^3\\+15394840t^2+182234t-12943)\partial^2
 \\+(7t+1)(25t-1)(2t+1)(101t+43)(3t+1)(6169564800t^8+13061530080t^7
 \\+11311205016t^6+5112706620t^5+1268815538t^4+164341135t^3\\+9051543t^2+74605t-1849)\partial
 \\+8t(7t+1)(25t-1)(2t+1)(101t+43)(3t+1)(192798900t^6+375787872t^5
 \\+294032949t^4+116697469t^3+24254991t^2+2406495t+81356),
\end{multline*}
or, with~$\theta = t\partial$, in a form which highlights the maximally unipotent monodromy,
\begin{align*}
  1849\theta^4
  &-43t\theta(142\theta^3+890\theta^2+574\theta+129) \\
  &-t^2(647269\theta^4+2441818\theta^3+3538503\theta^2+2423953\theta+650848) \\
  &-t^3(7200000\theta^4+34423908\theta^3+65337898\theta^2+57379329\theta+19251960) \\
  &-t^4(37610765\theta^4+220029964\theta^3+499781264\theta^2+511393545\theta+194039928) \\
  &-2t^5(\theta+1)(54978121\theta^3+324737370\theta^2+665066226\theta+466789876) \\
  &-t^6(\theta+2)(\theta+1)(185181547\theta^2+915931425\theta+1176131796) \\
  &-1212t^7(138979\theta+413408)(\theta+3)(\theta+2)(\theta+1) \\
  &-64266300t^8(\theta+4)(\theta+3)(\theta+2)(\theta+1).
\end{align*}
This equation satisfies the conditions given by \citeauthor{AlmEncStr10}~\cite{AlmEncStr10} and
it is not in their database~\cite{StrDB}.  The computation took 80~seconds
and 30~megabytes of memory on a laptop.

Note that formula~\eqref{eqn:cyper}, and homogeneization, give a rational
function~$a/f$ with~$f$ of degree~$8$ with respect to the integration
variables.  The change of variables which maps~$x$ to~$1/x$ and~$w$ to~$w/y$
lowers this degree down to~$5$. This improves dramatically the computation
time.  This kind of monomial substitution can be found by random trials and
errors.  Among the substitutions that lead to degree~$5$, some are better than
others in terms of computation time; but this seems hard to predict.

\subsubsection{Topology~\#27, polytope v23.289}\label{sec:topo27}
For the period~\eqref{eqn:cyper} with
\begin{multline*}
 f = \frac{1}{w}+w+\frac{1}{x}+\frac{w}{x}+x+\frac{x}{w}+\frac{1}{y}+\frac{w}{y}+\frac{1}{x y}+\frac{w}{x y}+y+\frac{y}{w}+\frac{x y}{w}\\+\frac{1}{z}+\frac{w}{z}+\frac{x}{z}+\frac{1}{y z}+\frac{w}{y z}+\frac{w}{x y z}+z+\frac{z}{w}+\frac{z}{x}+\frac{z}{w x}, 
\end{multline*}
where the first few terms of the power series expansion are
\[ F(t) = 1+18t^2+138t^3+2070t^4+29040t^5+452610t^6+7308000t^7 + \mathcal O(t^8), \]
I have computed an annihilating
operator of order~6 and degree~29, let us denote it~$\cL_6$, which is too large
to be reproduced here.  The operator is not of order~$4$ and has not maximally
unipotent monodromy. Is it the minimal equation of~$F(t)$?  Van~Hoeij has
proved\footnote{Using methods introduced in~\cite{Hoe97}, personnal communication.} that if~$\cL_6$
admits a right factor of order~4 then the degree of the coefficients of this
factor is at most~88.  Thus, admitting that~$\cL_6$ is indeed an annihilating
operator of~$F(t)$, if the minimal annihilating operator of~$F(t)$ has order~4,
it would have degree at most~88.  Zero being the only solution to the system of
linear equations
\[ \sum_{i=0}^4\sum_{j=0}^{88} a_{i,j} t^j f^{(i)}(t) = \mathcal O(t^{405}), \]
where the unknowns are the~$a_{i,j}$, this shows that the minimal annihilating
operator of~$F(t)$ is not of order~4. The argument holds for orders~1, 2, 3
and~5 with respective degree bounds~10, 16, 45 and~125.  This is rather
surprising since it contradicts the claims of Batyrev and Kreuzer.  The
topology~\#17, polytope~v18.16766, shows the same behavior with a minimal
equation of order~6.  This has been first reported by Almkvist~\cite{Alm10},
referring to a computation by Straten, Metelitsyn and Sch\"omer.  As Almkvist
wrote about topology~\#17, ``this example leaves some doubts about the
reflexive polytopes.'' I can only corroborate.  The remaining operators have
not been studied in depth yet, but it seems that only one of the~137 newly
computed periods has a minimal equation of order~$4$.

\printbibliography

\end{document}